\apptocmd{\thebibliography}{\raggedright}{}{}
\newtheorem{lemma}{{Lemma}}
\newtheorem{theorem}{{Theorem}}
\title{A Dual Cox Model Theory And Its Applications In Oncology}
\author{
  Powei Chen \\
  Department of Statistical Science\\
School of Mathematics \\
  Sun Yat-sen University \\
  Guangzhou, Guangdong 510275 China PRC\\
   ORCiD: 0009-0000-8687-0581\\
  \texttt{chenpw325@outlook.com} \\
     \And
 Siying Hu \\
ByteDance \\
   \And
 Dr. Haojin Zhou \\
Academy of Pharmacy \\
Xi’an Jiaotong - Liverpool University (XJTLU) \\
Suzhou, Jiangsu China PRC\\
   ORCiD: 0000-0003-0802-099X\\
  \texttt{haojin.zhou@xjtlu.edu.cn} \\
}
\begin{document}
\begin{titlepage}
\maketitle

\begin{abstract}
Given the prominence of targeted therapy and immunotherapy in cancer treatment, it becomes imperative to consider heterogeneity in patients'
responses to treatments, which contributes greatly to the widely used proportional hazard assumption invalidated as observed in several
clinical trials. To address the challenge in the data analysis in oncology clinical trials, we develop a Dual Cox model theory including a
Dual Cox model and a fitting algorithm.

As one of the finite mixture models, the proposed Dual Cox model consists of two independent Cox models based on patients' responses to one
designated treatment (usually the experimental one) in the clinical trial. Responses of patients in the designated treatment arm can be
observed and hence those patients are known responders or non-responders. From the perspective of subgroup classification, such a phenomenon
renders the proposed model as a semi-supervised problem, compared to the typical finite mixture model where the subgroup classification is
usually unsupervised.

A specialized expectation-maximization algorithm is utilized for model fitting, where the initial parameter values are estimated from the
patients in the designated treatment arm and then the iteratively reweighted least squares (IRLS) is applied. Under mild assumptions, the
consistency and asymptotic normality of its estimators of effect parameters in each Cox model are established.

In addition to strong theoretical properties, simulations demonstrate that our theory can provide a good approximation to a wide variety of
survival models, is relatively robust to the change of censoring rate and response rate, and has a high prediction accuracy and stability in
subgroup classification while it has a fast convergence rate. Finally, we apply our theory to two clinical trials with cross-overed KM plots
and identify the subgroups where the subjects benefit from the treatment or not.
\end{abstract}

\keywords{Oncology; Clinical trial; Cox model; Finite mixture model; EM algorithm; Response; Subgroup.}
\end{titlepage}

\setcounter{page}{2}

\section{Introduction}
\subsection{Medical Research Background}
\label{sec:background}

Cancer surgical treatment is an operation or procedure by removing the tumors and possibly some adjacent tissues. As one of the oldest cancer
treatments, it still performs well in treating many types of cancer today\cite{wyld_evolution_2015}. With the invention of anesthesia in 1846
\cite{warren_inhalation_1846}and the introduction of aseptic procedures in 1867\cite{lister1867antiseptic}, surgical treatment was
revolutionized and became an acceptable and widely used medical intervention\cite{wyld_evolution_2015}\cite{devita_two_2012}. The
postoperative mortality and morbidity of all types of tumors are significantly lower today than they were 50 years
ago\cite{wyld_evolution_2015}. However, surgical removal of the tumors and some adjacent tissues may cause certain damage to the nearby normal
tissues, so as to some complications.

The era of radiotherapy began in 1895 when Roentgen first reported his discovery of X-rays\cite{devita_two_2012}. After the discovery of
X-rays by Roentgen, radiation therapy developed rapidly, although the methods were immature compared to today's
standards\cite{papac_origins_2001}. Radiotherapy is a kind of treatment that uses high-energy rays or radioactive substances to destroy tumor
cells and prevent their growth and division. This treatment can damage DNA or other important cellular molecules directly (most commonly as a
result of particulate radiation, such as $\alpha$ particles, protons, or electrons) or as a result of indirect cell damage following the
generation of free radicals (such as X-rays or $\gamma$-rays)\cite{gianfaldoni_overview_2017}. Unfortunately, normal cells, especially those
that divide frequently, may also be damaged and killed during radiotherapy\cite{gianfaldoni_overview_2017}.

Prior to the 1960s, surgery, and radiotherapy dominated cancer treatment. Because individuals were not aware of the risk of cancer metastasis,
the cure rate after several surgeries and radiotherapy could only be around 33$\%$. The advent and development of chemotherapy have opened the
era of combining chemotherapy with surgery or radiotherapy to treat cancer\cite{devita2008history}. The origins of modern chemotherapy can be
traced to the discovery of nitrogen mustard, a drug that proved to be an effective treatment for cancer\cite{papac_origins_2001}. Chemotherapy
is the use of chemically synthesized drugs, mainly by interfering with DNA, RNA, or protein synthesis, to inhibit cell proliferation and tumor
propagation. Most chemotherapeutic agents show activity in rapidly proliferating cells and thus may cause effects on rapidly proliferating
cells. However, the use of chemotherapy drugs is inevitably accompanied by certain adverse reactions. Because most chemotherapeutic agents are
metabolized and excreted through the liver or kidneys, some chemotherapeutic agents can produce toxic stimuli to the liver or kidneys. In this
case, toxic substances can accumulate in these organs, leading to disorders in organ functions\cite{amjad_cancer_2022}.

The history of targeted therapy can be dated back to 1975 when Kohler and Milstein discovered monoclonal antibody technology that made it
possible to make large quantities of the same antibodies against specific antigens\cite{kohler_continuous_1975}, and in the mid-1990s they were shown to be clinically effective \cite{devita2008history}. Targeted therapy aims to achieve precise treatment of disease
sites by delivering drugs to specific genes or proteins in specific cancer cells or the tissue environment that promotes tumor growth, with as
few side effects on normal tissues as possible\cite{padma2015overview}. At present, targeted therapy mainly includes monoclonal antibodies and
small molecule drugs, which can kill cancer cells by blocking signals that promote cancer cell growth, interfering with the regulation of cell
cycles, and inducing apoptosis of cancer cells\cite{padma2015overview}. In addition, these agents are able to activate the immune system by
targeting components in cancer cells and their microenvironment, thereby further enhancing the therapeutic effect. These drugs have the effect
of hindering tumor growth and invasion, or in adjuvant chemotherapy can make the tumors resistant to other therapeutic agents to re-sensitize.
Compared to traditional chemotherapy, molecular targeted therapy has the advantages of higher specificity and less
toxicity\cite{lee_molecular_2018}. However, the usage of molecularly targeted therapies to combat cancer carries some inevitable side
effects\cite{lee_molecular_2018}. Because human tumors of different tissue types are genetically diverse, each patient is heterogeneous, and
their responses to drugs varies with only a small proportion of patients responding to a new drug\cite{wu2006targeted}. Many patients fail to
achieve a complete response or show only a partial response and eventually develop complete resistance after a period of time. This may be
related to the complexity of oncogenic pathway interactions with multiple mutations or the inability of some drugs to detect mutations due to
low specificity\cite{lee_molecular_2018}.

Immunotherapy dates back to 1893 when American surgeon William Coley observed that a patient with sarcoma experienced a considerable shrinkage
of tissue tumors after developing lupus erythematosus, and there was no recurrence.  However, the method was not widely accepted by the
scientific community at the time because Coley was unable to elucidate the intrinsic mechanism. In the following years, despite the advent of
effective anticancer treatments such as chemotherapy and radiotherapy, the majority of patients with metastatic diseases cannot be cured, and
there is an urgent need for innovative and more effective treatments \cite{davis_overview_2000}. During the past 30 years, due to an improved
understanding of the basic principles of tumor biology and immunology, there have been major advances in cancer immunotherapy, and
immunotherapy has become the star of cancer treatment, such as one known currently: Chimeric antigen receptor T cells (CAR-T cells), in which
T cells are genetically engineered to express a specific CAR that allows them to recognize a specific cancer antigen to attack the
tumors\cite{srivastava2015engineering}, The therapeutic effect of CAR-T has been remarkably successful. For example, the cure rate in patients
with leukemia can reach 90$\%$\cite{melenhorst_decade-long_2022}. The goal of immunotherapy is to effectively activate anti-tumor responses by
targeting antigens produced by cancer cells and thereby stimulating the patient's immune system to recognize and react to cancer
mechanisms\cite{davis_overview_2000}\cite{chen_statistical_2013}. There are many approaches to elicit antitumor immune responses, involving
techniques such as therapeutic cancer vaccines, myeloablative T-cell therapies, monoclonal antibodies, and immune checkpoint
inhibitors\cite{raval_tumor_2014}. The most exciting of these were immune checkpoints, discovered in the 1990s and followed by major
breakthroughs in the 2010s. Immune checkpoints normally function to control excessive immune activation, but they are also a means for tumors
to evade the immune system. For example, Programmed cell death 1 ligand 1(PD-L1) in cancer cells can evade the immune system through
Programmed cell death protein 1(PD-1) in T cells. Therefore, immune checkpoint inhibitors can activate T cells to inhibit PD-1, allowing the
immune system to destroy the tumors\cite{mahoney_next_2015}. In addition, combining these approaches with other therapies, such as cytotoxic
chemotherapy, radiotherapy, or molecularly targeted therapies, among others, may be the key to reach the true potentials of immunotherapy in
the future management of cancer patients\cite{raval_tumor_2014}. Immune checkpoint inhibitors, ATC metastatic therapies, and cancer vaccines
are far more effective than the most effective chemotherapeutic agents that can currently be found in clinical trials of hard-to-treat tumors.
Although immune-related adverse effects are common, these innovative immune-targeted therapies are better tolerated than traditional
chemotherapeutic agents\cite{waldman_guide_2020}.

\begin{figure}[H]
    \centering
    \includegraphics[width=1\textwidth]{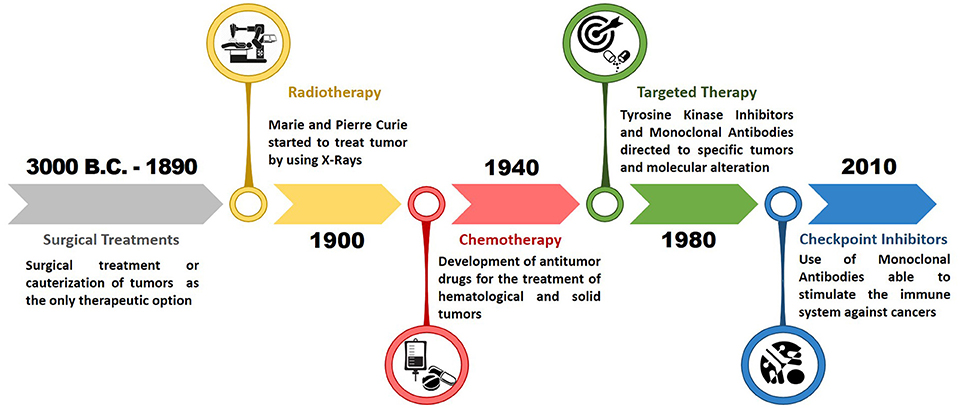}
    \caption{A timeline of significant developments in oncology\cite{falzone_evolution_2018}.}
    \label{fig:1}
\end{figure}

\subsection{Statistical Analysis Technique
}
\label{sec:related_work}

With the rapid development of targeted therapy and immunotherapy, we have entered a new era. In this era, we no longer only look at the
patient's tumor and its treatment from the perspective of fixed organ location and pathology, such as traditional surgery, radiotherapy, and
chemotherapy. Instead, we will study a patient's tumor from the perspective of more potentially dynamic genomics, proteomics, transcriptomics,
and immune abnormalities, and even focus on features that may be specifically targeted by new drugs\cite{renfro_statistical_2017}. In
addition, there are recognized inter-and intra-patient heterogeneity in any given tumor type. This heterogeneity is well established not only
by molecular changes in space (such as primary tumor to metastasis) but also in time (such as the order of
therapy)\cite{catenacci_next-generation_2015}.

The heterogeneity of new cancer treatments poses a great challenge to our traditional models of survival analysis. In the study of
time-to-event endpoints, we often assume that any factor will bear the same hazard ratio over time (the proportional hazards assumption). This
assumption is reasonable in the context of traditional cancer treatment: as treatment progresses, the risk of different factors to the patient
decreases proportionally. However, the presence of heterogeneity may cause the  proportional hazards assumption to no longer hold.
Furthermore, in immunotherapy, long-term survival and delayed clinical effects arise. Long-term survivors result in patients remaining alive
or disease-free after long-term follow-up, as shown in Figure \ref{fig:2}(B), a phenomenon usually observed in
Kaplan-Meier\cite{kaplan_nonparametric_1958} curves with tail nonzero probability. As shown in Figure \ref{fig:2}(C), the delayed clinical
effect can lead to a delayed separation or multiple crossings of the Kaplan-Meier survival curve at the beginning, that is, the immune
function has not played a role at the beginning, and the risk of the treatment group and the control group is not different at the beginning.
These results will reduce the power of some traditional statistical techniques such as log-rank test\cite{bland2004logrank} and partial
likelihood method based on Cox model \cite{cox_regression_1972}\cite{chen_statistical_2013}\cite{liao_flexible_2019}.

\begin{figure}[ht]
    \centering
    \includegraphics[width=1\textwidth]{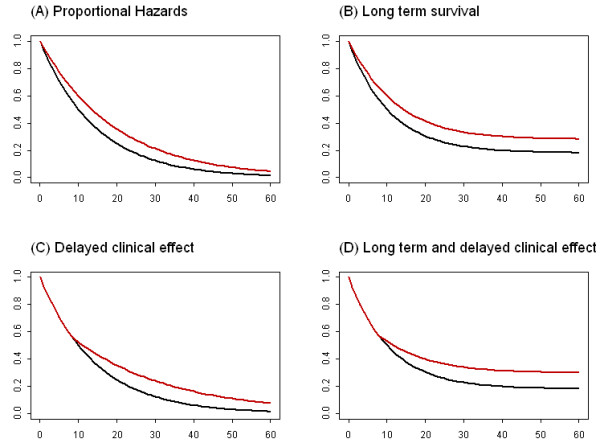}
    \caption{Kaplan-Meier survival curves for various combinations of long-term survival and delayed clinical effects are presented
    graphically. The red and black curves represent novel immuno-oncology agents and control treatments, respectively\cite{chen_statistical_2013}.}
    \label{fig:2}
\end{figure}

The existence of heterogeneity, long-term survival, and delayed clinical effects leads to the inapplicability of traditional statistical
models and poses significant challenges to medical research. In particular, patients with some specific characteristics may respond
differently to treatment. Therefore, the development of a statistical model that can distinguish different characteristics in a population
will help to better understand the efficacy of new drugs, which will have profound clinical significance. Researchers have proposed various
models. Kalbfleisch and Prentice (1981) \cite{kalbfleisch_estimation_1981}considered a piecewise  proportional hazards model, which uses time
segments to allow proportional hazards when the proportional hazards assumption does not hold over the whole time. Boag
(1949)\cite{boag1949maximum} and Yakovlev et al.(1996)\cite{tsodikov1996stochastic} proposed the mixture cure model and promotion time cure
model, respectively, which can be used to study heterogeneity between cancer patients who are long-term survivors and those who are not
long-term survivors.

The finite mixture model\cite{mclachlan_role_1994}\cite{mclachlan_finite_2000} can well describe heterogeneous data consisting of multiple
different subgroups. Therefore, in the field of survival analysis, a variety of mixture models have been proposed. The finite mixture model
can choose different distributions to describe the risk characteristics of different subgroups, such as the Log-normal distribution, the
Weibull distribution, and the gamma distribution\cite{erisoglu2012pak}\cite{mclachlan_role_1994}. Liao et al. (2019)\cite{liao_flexible_2019}
used the mixture Weibull model to divide subgroups and estimate survival and hazard functions, and the fitted curve has the same flexibility
as the Kaplan-Meier curve and can predict future events, survival probabilities, and hazard functions. It can also be used to estimate the
baseline hazard of the Cox proportional hazards model\cite{liu_analysis_2020}. Subsequently, Liao et al. (2021)\cite{jia_inferring_2021}
extended the mixture Weibull model to add the adaptive LASSO penalty term for variable selection, and the mixing probability of the latent
subgroup was modeled by the multinomial distribution depending on the baseline covariates. Finally, they successfully proved the theoretical
convergence property of the estimator. Wang (2014)\cite{wang_exponential_2014} extended the exponential tilt mixture
model\cite{qin_empirical_1999}\cite{zou_empirical_2002} to right-censored, time-event data, which can estimate the mixing probabilities and
treatment effects, and evaluate the survival probabilities of people who are responders and those who are not responders at a time point.
However, due to the non-parametric characteristics of the Cox proportional hazards model, it is difficult to establish a finite mixture Cox
model. Rosen and Tanner(1999)\cite{rosen_mixtures_1999} proposed a mixture model that combines the usual Cox proportional hazards model with a
class of features called the mixture of experts. Wu et al. (2016)\cite{wu_subgroup_2016} similarly proposed the Logistic-Cox mixed model,
which can also be regarded as a kind of mixture of experts. Eng and Hanlon (2014)\cite{eng_discrete_2014} used the EM algorithm
\cite{dempster_maximum_1977}to propose a Cox-assisted clustering algorithm to fit a finite mixture Cox model, which effectively clustered
different subgroups of the data. You et al. (2018)\cite{you2018subtype} extended it by adding a penalty term for the adaptive LASSO, providing
the asymptotic nature of the theory, stating that it has an oracle property and the estimator has a convergence rate of $\sqrt{n}$. Although
their theories may perform well for the classification of subgroups and the analysis of treatment effects, the extension of the finite mixture
Cox model in the semi-supervised scenario is still lacking. Because in specific clinical trials, targeted therapy or immunotherapy is usually
used as the experimental group, while traditional treatment is used as the control group. Due to the properties of targeted therapy or
immunotherapy, we can observe the response conditions of patients to these drugs. In addition, the mechanism and principle of drug use in the
experimental group are different from that in the control group and there is a lack of criteria to evaluate objective response rates in both
groups simultaneously. Therefore, in this context, we have the pioneering to propose a semi-supervised mixture Cox model, and set the number
of subgroups as 2, respectively, for the responders' group and the non-responders' group, and call this model the Dual Cox model.

\subsection{Paper Structure And Section Arrangement}

\label{sec:arrangement}

Emerging cancer therapeutics present difficulties for statistical modeling and mechanisms that can produce semi-supervised scenarios, as was
previously noted.  In this paper, we propose a new model, which extends the work of Eng and Hanlon (2014)\cite{eng_discrete_2014} and You et
al. (2018)\cite{you2018subtype}, and can classify subgroups. And fit the model algorithm to obtain estimates of the parameters. The
theoretical properties of the model are guaranteed, and a large number of experiments are carried out in real data analysis and simulations,
and finally, the validity of the model is verified. The subsequent sections will be in the following order:

In Section 2, the commonly used statistical models of cancer clinical trials, including the Kapla-Meier curve, Cox proportional hazards model,
and piecewise proportional hazards model, etc., are described. Subsequently, the application of various mixture models in cancer clinical
trials is illustrated. Finally, the motivation of the Dual Cox model and the fitting algorithm of the corresponding model are presented.

In Section 3, the theoretical properties of the Dual Cox model fitting algorithm are shown, and the consistency and asymptotic normality of
the estimators are proven.

In Section 4, we focus on evaluating the performance of the Dual Cox model theory, and simulation experiments are used to evaluated the
performance of the fitting algorithm. These simulations setting are designed to explore the impact of different sample sizes and different
censoring rates on estimators’ consistency and stability.  We fit the Dual Cox model to these simulated datasets, evaluate how close the model
fit results are to the true setting, and analyze their convergence properties. 

In Section 5, the PRIME and PACCE cancer clinical trials are selected as cases to clarify the application of the Dual Cox model theory in
cancer clinical data analysis. In this context, firstly, the objetives, design, analysis results, and baseline covariates of the selected
cancer clinical data are briefly introduced. Then, the Dual Cox model is fitted to the dataset, and the fitting results are comprehensively
evaluated and analyzed. Finally, diagnostic models are developed, aiming to evaluate the performance of the Dual Cox model theory.

In Section 6, the conclusion and discussion are given.  In the conclusion subsection, the motivation of this study, the implementation of the
model algorithm, the guarantee of the asymptotic nature of the theory, the results and conclusions of the numerical simulation experiments,
and the results of the real data analysis will be summarized. In the discussion subsection, the advantages and disadvantages of the Dual Cox
model are pointed out, and possible future work is discussed.

\section{A Dual Cox Model Theory}
\label{sec:dual}

\subsection{Literature Review}

\subsubsection{Kaplan-Meier Curve}

A common question when performing survival analysis is how many individuals have already experienced the event of interest, such as death or
disease recurrence, before a particular moment. However, since some individuals may still be alive during the observation period, we cannot
determine whether they will experience the event at some point in the future. Therefore, we usually use a non-parametric method, the
Kaplan-Meier estimator\cite{kaplan_nonparametric_1958}, to calculate the proportion of individuals who survive the observation period, and the
survival rate. This method was first proposed by Edward L. Kaplan and Paul Meier in a paper in 1958 and has been widely used in the fields of
medicine and biostatistics since then, becoming one of the basic tools in survival analysis research.

We assume that there are $n$individuals in the sample, where the observed $i$individual event occurred or was censored at a time of $t_i$. We
sort k nonrepeated time points of occurrence events to obtain a sequence $t_1\leq t_2\leq \ldots \leq t_k$.

Next, we define the proportion of individuals surviving after time $t$ as the survival rate, denoted as $S(t)$, $d_{i}$ denotes the number of
individuals having an event at time $t_i$, and $n_{i}$ denotes the number of individuals still alive up to time $t_i$ plus the number of
right-censored individuals at time $t_i$. Therefore, the statistical form of the Kaplan-Meier curve can be expressed as follows:

For the survival probability ($S_i$) after surviving to the $t_i$moment

$$\hat{S}(t_i) = \prod_{t_j \le t_i} \frac{n_{j} - d_{j}}{n_{j}},$$where $\hat{S}(t_i)$ is the estimated survival probability after time
$t_i$.

Finally, the Kaplan-Meier curves were generated by line plotting the estimated rates of survival. The characteristics of this curve are that
it is applicable to right-censored data, does not require assumptions about the distribution of the data, and enables simultaneous comparisons
of survival across multiple sets of data. However, Kaplan-Meier curves do not have the power to incorporate covariates and predict survival
that Cox's proportional hazards model \cite{cox_regression_1972} do. In addition, the estimation results of this curve may not be stable
enough for small sample studies.

\subsubsection{Cox Proportional Hazards Model}

The Cox proportional hazards model \cite{cox_regression_1972}(Cox model) was introduced by the British statistician David Cox in 1972. A
widely used model for survival analysis, it can be used to study the impact of multiple covariates on survival time. The model is based on the
assumption of  proportional hazards, which states that the hazard ratio between individuals is constant and does not change over time.

Let $T_i$ denotes the survival time of the i-th individual, $C_i$ denotes the censoring time of the i-th individual, and $Z_i$ denotes the
covariate of the i-th individual, the Cox model is of the form $$h_i(t) = h_0(t)\exp(\beta Z_i),$$
where $h_i(t)$ represents the risk of the $i$-th individual at time $t$, $h_0(t)$ represents the baseline hazard, and $\beta$ represents the
effect of covariates.

The Cox model's partial likelihood function does not model the baseline hazard function $h_0(t)$, so the model is a semiparametric model.

For n observations $(T_i, \delta_i, Z_i), i=1,2,... ,n$, where $\delta_i$ is a censoring indicator variable, $\delta_i=1$ means that the
survival time $T_i$is observed, and $\delta_i=0$ means that $T_i$ is censored. The partial likelihood function for the Cox model is as
follows.
$$L(\beta) = \prod_{i=1}^n \left(\frac{\exp(\beta Z_i)}{\sum_{j\in R_i}\exp(\beta Z_j)}\right)^{\delta_i},$$ where  $R_i$ denotes the set of
individuals that are still in the risk set at time $T_i$. In the partial likelihood function, the numerator represents the contribution of the
hazard function of individual $i$, and the denominator represents the sum of the hazard functions of all individuals who are still alive.

The log-partial likelihood function is $$l(\beta) = \sum_{i=1}^n \delta_i \left(\beta Z_i - \log \sum_{j\in R_i} \exp(\beta Z_j) \right).$$

The Cox proportional hazards model can be used to estimate the value of $\beta$ by maximum partial likelihood. Maximizing the logarithm of the
partial similarity function can be achieved by numerical optimization algorithms such as Gradient descent, Newton Raphson, etc. Andersen and
Gill\cite{andersen1982cox} proved asymptotic properties of the maximum partial likelihood estimator.

The Cox proportional hazards model has good interpretability and can estimate the impact of multiple risk factors, and these effects can be
used to predict the survival time of a specific individual. It is important to note that in Cox proportional hazards model, the hazard ratio
does not change over time. However, in some cases, such as delayed clinical effects of immunotherapy, the hazard ratio may change over time.
Therefore, the Kaplan-Meier curve in Figure \ref{fig:2.1} shows that the curves intersect multiple times, indicating that the assumption of
proportional hazards is invalid, and the power of the model may be reduced.

\begin{figure}[ht]
    \centering
    \includegraphics[width=1\textwidth,height=0.8\textwidth]{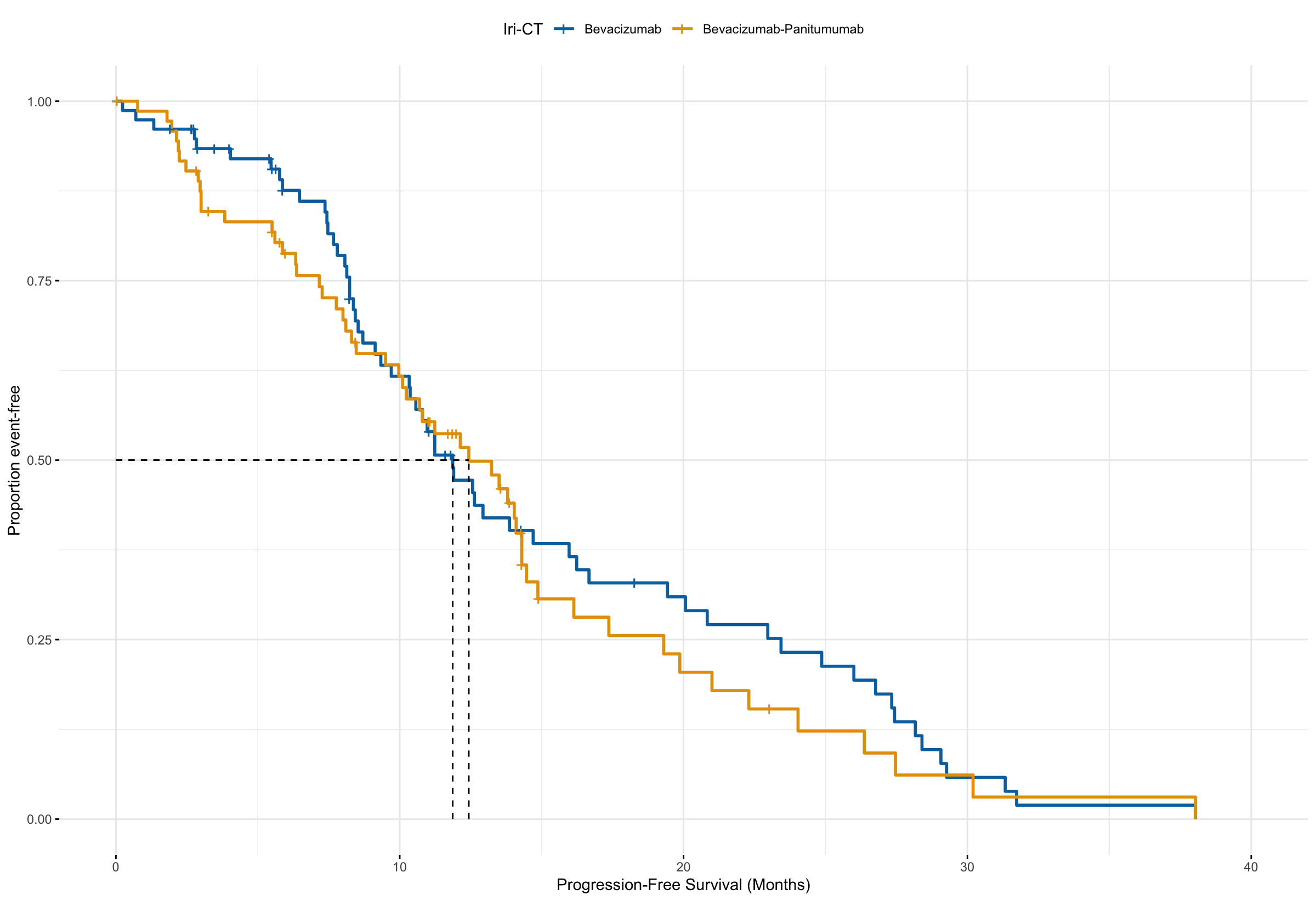}

    \caption{Multiple crossings of Kaplan-Meier curves in cancer clinical trials.
}
    \label{fig:2.1}
\end{figure}

\subsubsection{Piecewise Proportional Hazards Model}

Kalbfleisch and Prentice(1981)\cite{kalbfleisch_estimation_1981} considered the use of piecewise proportional hazards model to fit data where
Cox proportional hazards do not hold.

Zhou(2001)\cite{zhou_understanding_2001} gives the following general form of the piecewise proportional hazards model:
$$
h\left(t \mid \mathbf{z}_i\right)=h_0(t) \exp \left(\alpha z_{1 i}+\beta_1 z_{2 i} \mathbf{1}\left(t \leq c_i\right)+\beta_2 z_{2 i}
\mathbf{1}\left(t>c_i\right)\right),
$$where $z_{j i}$,$j \in\{1,2\}$ is the covariate has nothing to do with the time, $c_i$ is $i$-th observation point of division.

It can be seen that this model is a type of time-dependent Cox model\cite{therneau2000cox}, which can restrict the assumption of proportional
hazards to different time intervals. The maximum partial likelihood function for this model can be solved by an algorithm proposed by Therneau
(1999)\cite{therneau2015package}. After that, Wong et al. (2017)\cite{wong2017piecewise} proposed another simpler algorithm, however, the
likelihood function of this algorithm is not concave, so the initial value will have a great impact on the final convergence result. But they
provide methods for initial value estimation that increase the chance of convergence to the correct solution and prove that the parameter
estimates are consistent. However, due to the lack of a clear time division standard of the piecewise  proportional hazards model, it is
difficult to accurately explain the model, and the way of modeling each time period separately will make the fitting error of the model large.

\subsubsection{Cure Model}

The cure model was first proposed by Boag(1949)\cite{boag1949maximum} and Berkson $\&$ Gage (1952)\cite{berkson1952survival}.  At present,
there are mainly two types: the first one is the promotion time cure model\cite{tsodikov1996stochastic}
, which defines that the cumulative hazard will asymptote to the cure hazard and can be regarded as a special case of proportional hazards.
The second is a mixture cure model, which assumes that a subset of patients has been cured and their death rate is the same as that of the
cancer-free population.  The remaining fraction is uncured, and these subjects will eventually encounter relevant events, so their survival
function will tend to zero\cite{lambert2007modeling}.  The assumptions of the mixture cure model can be well applied to novel cancer
treatments.  Because the treatment response and survival patterns of immunotherapy and targeted therapies differ from those of chemotherapy,
which are often associated with the likelihood of long-term survival for some patients, these patients are considered "statistically cured"
and no longer susceptible to disease\cite{felizzi_mixture_2021}.  Therefore, it is reasonable to assume that some patients have been cured in
the mixture cure model, which aligns with practical applications.  In the following, we will briefly introduce the promotion time cure model
and then introduce the mixture cure model in detail.

For the promotion time cure model\cite{lambert2007modeling}, the all-cause survival function is $$S(t)=S^*(t) \exp \left\{\ln (\pi)
(1-S_z(t))\right\},$$ where $\pi$ is the cure rate of the population, $S_z(t)$ can be a distribution function commonly used for survival
functions such as Weibull or Log-normal. $S^*(t)$ is the survival function of the cured population, called the background survival function,
i.e., in cured patients, cancer no longer negatively affects the survival rate, so the survival rate is equal to that of a population in the
same age, sex, etc. The survival rate is therefore equal to the "background" level of the cancer-free population with the same age and sex
factors. Therefore, the background survival function is generally estimated from populations in countries or large regions, and we generally
assume that it is known. In addition, the promotion time cure model can be modeled with covariates in $\pi$ as well as in
$S_z(t)$\cite{lambert2007modeling}\cite{amico2018cure}.

In the mixture cure model \cite{lambert2007modeling}, the all-cause survival function is $$
S(t)=S^*(t)\left\{\pi+(1-\pi) S_u(t)\right\},
$$ where $\pi$ is the cure rate of the population, $S^*(t)$ is the background survival function, and $ S_u(t)$ is the survival time of the
uncured population, which can be expressed using some parametric distribution such as Weibull or Gompertz
distribution\cite{lambert2007modeling}\cite{amico2018cure}.

The hazard function can then be written as  $$
h(t)=h^*(t)+\frac{(1-\pi) f_u(t)}{\pi+(1-\pi) S_u(t)}.$$

Suppose there are a total of n independent samples and the observed data are shaped as $(t_i,\delta_i)$, $t_i$ is the observed event
occurrence time or censored time if $t_i$ is not censored $\delta_i=$1 otherwise it is 0, then the log-likelihood function can be written as
$$
\log L=\sum_{i=1}^n \delta_i \times \log h\left(t_i\right)+\sum_{i=1}^n \log S\left(t_i\right). $$

The mixture cure model also allows flexibility in estimating the effect of covariates on cure rate and latency, i.e., including different
covariates in the incidence ($X_i$) and latency ($Z_i$) components, as well as including the same covariate. This is important because some
factors may affect a patient's risk of having an event but not the timing of the event. For example, surgical factors may affect whether a
patient is cured or not, and their effect on the time to event may be insignificant \cite{jia_cure_2013}. The following is an example of the
method proposed by Sy and Taylor \cite{sy_estimation_2000}:

Within the curing framework, two components were jointly modeled. The first component estimates the probability of not being cured (1-cure
rate) through a logistic regression model. $$\operatorname{logit}(\operatorname{Pr}[\mathrm{Y}=1 \mid \mathrm{X}])=\mathrm{b}_0+\mathrm{X}_1
\mathrm{b}_1+\cdots+\mathrm{X}_{\ mathrm{m}} \mathrm{b}_{\mathrm{m}},$$ where it is assumed that there are m covariates to be considered and
the binary outcome indicator $\mathrm{Y}$ indicates whether the patient is uncured, uncured $\mathrm{Y}$ and cured $\mathrm{Y}$ = 0.

The second component restricts to uncured patients i.e. $(\mathrm{Y}=1)$, which models the latency period, i.e. uncured patients are at risk
$$
\lambda(\mathrm{t} \mid \mathrm{Z}, \mathrm{Y}=1)=\lambda_0(\mathrm{t} \mid \mathrm{Y}=1) \exp \left(Z_1 \beta_1+\cdots+Z_{\mathrm{n}} \
beta_{\mathrm{n}}\right),$$

Then the two components are fitted together and can be used to estimate the effect of covariates on incidence and latency.

Assuming a total of n independent samples with observations shaped as $(t_i,\delta_i,x_i,z_i)$, $t_i$ is the observed event occurrence time or
censoring time, and if $t_i$ is not censored $\delta_i=$1 otherwise it is 0, the likelihood function can be written as

$$\begin{aligned}L\left(b, \beta, \Lambda_0\right)=\prod_{i=1}^n\{ & \left.\left(1-\pi \right) \lambda_0\left(t_i \mid Y=1\right)
e^{z_i^{\prime} \beta} e^{-\Lambda_0\left(t_i \mid Y=1\right) \exp \left(z_i^{\prime} \beta\right)}\right\}^{\delta_i} \\ & \times\left\{\pi+
\left(1-\pi\right)e^{-\Lambda_0\left(t_i \mid Y=1\right) \exp \left(z_i^{\prime} \beta\right)}\right\}^{1-\delta_i},\end{aligned}$$ where
$1-\pi=\operatorname{Pr}[\mathrm{Y}=1 \mid \mathrm{X}]$,$\Lambda_0\left(t_i \mid Y=1\right)$ is the cumulative hazard function from 0 to
$t_i$.

Sy and Taylor fitted this model using the EM algorithm with consistent and asymptotic normality in the estimation of the model coefficients
and confirmed the validity of the model by simulation experiments. However, modeling incidence and latency simultaneously may be overly
dependent on parametric assumptions, leading to overparameterization of the model.

\subsubsection{Finite Mixture Model And The EM Algorithm
}

Finite Mixture Model \cite{mclachlan_role_1994}\cite{mclachlan_finite_2000}, which decompositions the population into components, each with
its own parameters and weights. The weighted sum of these components constitutes the distribution of the population. The finite mixture model
can be used to solve the situation where there are multiple groups, subgroups, patterns, etc., in the data, and it is a very powerful analysis
tool. Parameter estimation of finite mixture models can be done using maximum likelihood estimation or Bayesian methods. The EM algorithm
\cite{dempster_maximum_1977} is usually used for maximum likelihood estimation.

\noindent{\textbullet}\quad\parbox[t]{0.9\textwidth}{\textbf{Unsupervised situation
}}

For the unsupervised learning case, suppose there are $n$ observations $x_1, x_2, \dots, x_n$, where $x_i$ follows one of the $k$
distributions. The mixing probability is $\boldsymbol{\pi}=(\pi_1, \pi_2, ... , \pi_K)$, then the likelihood function of the finite mixture
model is

$$L(\boldsymbol{\theta} \mid \mathbf{y})=\prod_{i=1}^n f\left(x_i \mid \boldsymbol{\theta}\right)=\prod_{i=1}^n \left(\sum_{j=1}^k \pi_j
f_j\left(x_i \mid \boldsymbol{\theta}_j\right)\right),$$where $\boldsymbol{\theta} = (\boldsymbol{\theta_1}, \boldsymbol{\theta_2}, ...,
\boldsymbol{\theta_k}).$

The above likelihood function is not directly maximizable because the identity of the mixture components is unknown. To solve this problem,
parameter estimation can be performed using the EM algorithm, which is an iterative algorithm for solving the maximum likelihood estimates of
the parameters of probabilistic models containing latent variables. It consists of two steps: the E step and the M step. In the E step
(Expectation Step), for each data point, the probability that it belongs to each component is calculated; in the M step (Maximization Step),
the probabilities calculated in the E step are used to update the model parameters. The parameter values are estimated after several
iterations until the likelihood function converges.

Complete Likelihood Function $$L_c(\boldsymbol{\theta} \mid \mathbf{y}, \mathbf{z})=\prod_{i=1}^n \prod_{j=1}^k\left[\pi_j f_j\left(x_i \mid
\boldsymbol{\theta}_j\right)\right]^{z_{i j}},
$$where $z_{ij}$ is an indicator variable that represents the probability that the $i$-th observation belongs to the $j$th component, i.e.,
$z_{ij} = 1$ means that the $i$-th observation belongs to the $j$th component and $z_{ij}$ = 0 means that it does not.

\begin{itemize}
\item E step: In the mth iteration, the probability that each observation belongs to each component, i.e., the posterior probability of
    $z_{ij}$, is calculated $$z_{i j}^{(m)}=\frac{\pi_j^{(m-1)} f_j\left(x_i \mid \boldsymbol{\theta}_j^{(m-1)}\right)}{\sum_{l=1}^k
    \pi_l^{(m-1)} f_l\left(x_i \mid \boldsymbol{\theta}_l^{(m-1)}\right)}.$$
\item $M$ step: In the mth iteration, the parameter values are re-estimated using the posterior probabilities obtained in the $E$ step
$$
\begin{gathered}
{\pi}_j^{(m)}=\frac{\sum_{i=1}^n z_{i j}^{(m)}}{n}, \\
{\boldsymbol{\theta}}^{(m)}=\arg \max _{\boldsymbol{\theta}} Q(\boldsymbol{\theta};\boldsymbol{\theta}^{(m-1)})=\arg \max
_{\boldsymbol{\theta}}\sum_{i=1}^n z_{i j}^{(m)} \log f\left(x_i \mid \boldsymbol{\theta}\right).
\end{gathered}
$$
\end{itemize}

Dempster et al. (1977)\cite{dempster_maximum_1977} proved that the functions $Q(\boldsymbol{\theta};\boldsymbol{\theta}^{(m-1)})$ and
$L(\boldsymbol{\theta} \mid \mathbf{y})$ do not decrease after EM iterations, i.e
$$Q(\boldsymbol{\theta}^{(m)};\boldsymbol{\theta}^{(m-1)})\geq Q(\boldsymbol{\theta};\boldsymbol{\theta}^{(m-1)}),\quad
L(\boldsymbol{\theta}^{(m)} \mid \mathbf{y})\geq L(\boldsymbol{\theta}^{(m-1)} \mid \mathbf{y}).$$

\noindent{\textbullet}\quad\parbox[t]{0.9\textwidth}{\textbf{Semi-supervised situation
}}

For the case of semi-supervised learning, the finite mixture model can also be solved by the EM algorithm. In this case, the likelihood
function of the model consists of two parts: the part with known labels and the part with unknown labels.
Suppose we have $n$ samples, of which $m$ samples have labels and $l=n-m$ samples have no labels. Let the observation of the $i$-th sample be
$x_i$ and the corresponding label be $y_i$, then the likelihood function of the finite mixture model can be expressed as  $$L\left(\theta
\mid\left\{x_i, y_i\right\}\right)=\prod_{i=1}^m \prod_{k=1}^K \pi_k p \left(x_i \mid \theta_k\right)^{\left[y_i=k\right]}+\prod_{i=m+1}^n
\sum_{k=1}^K \pi_k p\left(x_i \mid \theta_k\right),$$ where $\theta$ is the model parameters, $K$ is the number of mixture components, $\pi_k$
is the prior probability of mixture probability $k$, $p(x_i|\theta_k)$ is the conditional probability density function of mixture component
$k$, and $[y_i=k]$ is the indicator function that takes the value of $1$ when $y_i=k$ and $0$ otherwise.

Complete Likelihood Function $$L_c\left(\theta \mid\left\{x_i, y_i\right\}\right)=\prod_{i=1}^m \prod_{k=1}^K \pi_k p\left(x_i \mid
\theta_k\right)^{\left[y_i=k\right]}+\prod_{i=m+1}^n \prod_{k=1}^K \left[\pi_k p\left(x_i \mid \theta_k\right)\right]^{z_{i j}}.$$

Similarly, in the case of semi-supervised learning, the EM algorithm can be used for solving the problem. In both unsupervised and
semi-supervised cases, the EM algorithm is sensitive to the initial values and tends to converge to a locally optimal solution instead of a
globally optimal one, and the convergence of the likelihood function does not guarantee the convergence of the parameters
\cite{wu_convergence_1983}. In the unsupervised case, we also encounter the arbitrary setting of the number of mixture components, which is
less interpretable.

In survival analysis, the distributions selected in the finite mixture model can be selected from various common parametric distributions in
survival analysis, such as Log-normal distribution, Weibull distribution, Gamma distribution, etc.
\cite{erisoglu2012pak}\cite{mclachlan_role_1994}, which can describe different subgroups of risk characteristics.

\subsubsection{Exponential Tilt Mixture Model}

The Exponential Tilt Mixture Model (ETMM) was first proposed by Qin\cite{qin_empirical_1999} and Zou\cite{zou_empirical_2002}, after which
Wang\cite{wang_exponential_2014} extended their results to censored, time-to-event data, and gave their theoretical asymptotic properties,
using the EM algorithm to iteratively obtain the density function of survival time for the control group, and then using the Newton-Raphson
method to estimate the mixing probability and treatment effect, which can be assessed at a time point for the responders and non-responders.
The probability of survival for responders and non-responders at a given time point can be assessed. For the placebo group, $F_0($. $)$ is the
distribution of survival time in the placebo group. For the treatment group, the distribution of survival time for the non-responder
population after treatment is the same as the distribution of survival time for the placebo group. To describe the different treatment effects
in the responding and non-responder populations, we assume that the survival times in the treatment group follow a mixture distribution:
$$\lambda F_0(t)+(1-\lambda) F_1(t),$$ where $1-\lambda$ is the proportion of the responder population and $F_1($. $)$ is the distribution of
survival time after treatment for the responder population.

We assume that $d F_1(t) =d F_0(t) \exp \{h(t, \beta)\}$, and $h(t, \beta)$ is a pre specified parametric equation. Because $F_0(t)$ is
completely unspecified, this model is semi-parametric. ETMM can be viewed as a semi-parametric generalization of the parametric mixture model.
Survival time and censoring time are assumed to be independent, and censoring time is further assumed to be discrete. For the placebo group
the censoring times are $c_{01},\ldots, c_{0 d_0},c_{0 d_0+1}$ and for the treatment group the censoring times are $\ c_{11}\ldots, c_{1 d_1},
c_{1 d_1+1}$. Let $c_z=\left(c_{z 1}, \ldots c_{z d_z}\right)^T, z=0,1$. Assume that the observations consist of $n\left(=n_0+n_1\right)$
uncensored independent observations $x_1, \ldots, x_n$, where the first $n_0$ observation is from the placebo group with probability density
$d F_0(.) $, $n_1$ comes from the treatment group with probability density $\lambda d F_0()+(1-\lambda)d F_1(.)$, we assume the presence of a
treatment effect, i.e., the need for $\lambda<1$. The data also include $m_0$ independent, censored observations from the placebo group, and
$m_1$ independent, censored observations from the treatment group, where $m_{z j}$ observations are at moments $c_{z j}\left(j=1, \ldots, d_z,
z=0,1, m_{z  1}+\ldots+m_{z d_z}=m_z\right)$. Let $N_0\left(= n_0+ m_0\right)$ and $N_1\left(= n_1 +m_1\right)$ be the number of observations
in the two groups and $N\left(=N_0+N_1\right)$ be the total number of observations. Now consider the discrete distribution at the point in
time when the observed event occurs only, and let $p_i=F_0\left(\left[x_i\right]\right)$ The non-parametric log-likelihood function
\cite{owen2001empirical} is the

\begin{align*}l\left(\theta, F_0\right)&=\sum_{i=n_0+1}^n \log \left(\lambda+(1-\lambda) \frac{\exp \left\{h\left(x_i,
\beta\right)\right\}}{\sum p_i \exp \left\{h\left(x_i, \beta\right)\right\}}\right)+  \sum_{j=1}^{d_0} m_{0 j} \log \sum_{i=1}^n p_i
I\left\{x_i>c_{0 j}\right\} \\&+\sum_{j=1}^{d_1} m_{1 j} \log \sum_{i=1}^n p_i\left(\lambda+(1-\lambda) \frac{\exp \left\{h\left(x_i,
\beta\right)\right\}}{\sum p_i \exp \left\{h\left(x_i, \beta\right)\right\}}\right) I\left\{x_i>c_{1 j}\right\}+\sum_{i=1}^n \log
p_i,\end{align*}where the constraint $\sum_{i=1}^n p_i=1$. Let $\theta=\left(\lambda, \beta^T\right)^T$, and define the profile log-likelihood
(profile log-likelihood) of $\theta$ as $p l_N(\theta)=\max _{F_0} l(\theta$, $\left.F_0\right)$. The nonparametric great likelihood
estimation (NPMLE) of $\theta$ is obtained by maximizing $p l_{N}(\theta)$, i.e., $\hat{\theta}=\operatorname{argmax} p l_N(\theta)$. $p
l_{N}(\theta)$ can be obtained by the EM algorithm, and $\hat{\theta}$ can be obtained by the Newton-Raphson method, i.e., the mixing
probability $\lambda$ and the treatment effect $\beta$ can be obtained.

To evaluate the survival probability of the responder and non-responder populations at time point $t_0$, naturally, we can
obtain:$$\left(\begin{array}{c}
\widehat{S}_0\left(t_0\right) \\
\widehat{S}_1\left(t_0\right)
\end{array}\right)=\left(\begin{array}{c}
\sum_{i=1}^n \hat{p}_i I\left\{x_i>t_0\right\} \\
\sum_{i=1}^n \hat{p}_i \exp \left\{h\left(x_i, \hat{\beta}\right)\right\} I\left\{x_i>t_0\right\} / \sum_{i=1}^n \widehat{p}_i \exp
\left\{h\left(x_i, \hat{\beta}\right)\right\}
\end{array}\right),
$$
where $\widehat{S}_0\left(t_0\right)$,$\widehat{S}_1\left(t_0\right)$ represent the survival functions in the responder and non-responder
populations in the treatment group, respectively.

ETMM has the advantage of studying the heterogeneity of treatment effects in randomized clinical trials. However, ETMM assumes a sufficiently
long follow-up period. A shorter follow-up period would affect the efficacy of the model. Another limitation of the model is the assumption
that the censoring time is discrete. Due to the technical complexity of studying the asymptotic nature of the model parameters, it is
challenging to extend the approach to continuous censoring times\cite{wang_exponential_2014}.

\subsubsection{Mixture Weibull Model}

Liao et al.\cite{liao_flexible_2019}\cite{jia_inferring_2021} propose to extend the mixture Weibull model by modeling the mixing probabilities
of potential subgroups depending on covariates and using a Bayesian information criterion to select the number of potential subgroups. In
addition, they consider the inclusion of an adaptive LASSO penalty term for variable selection and demonstrate that the estimator has oracle
properties.

Weibull distribution of the k-th subgroup  $S\left(t, \eta_{\mathrm{k}}\right)=\exp
\left\{-\left(\frac{t}{\lambda_k}\right)^{\kappa_k}\right\}$, where
$\eta_{\mathrm{k}}=\left(\kappa_{\mathrm{k}}, \lambda_{\mathrm{k}}\right)^{\mathrm{T}}, \kappa_{\mathrm{k}}$ is the shape parameter and
$\lambda_k$ is the scale parameter.

Assume $T$ is the survival time and $X$ is the baseline covariance (the first column of X is constant at 1), introducing a potential subgroup
$B$. Assume that
$$
P(T>t \mid B=k, X)=S\left(t, \eta_k\right),$$
$$P(B=k \mid X)=\frac{\exp \left\{\beta_k^{\mathrm{T}} \mathbf{X}\right\}}{\sum_{k=1}^K \exp \left\{\beta_{\mathrm{k}}^{\mathrm{T}}
\mathbf{X}\right\}}=\pi_k(X, \beta),$$
$k=1, \ldots, K$, where $\beta_1=0$ and $\beta_2, \ldots, \beta_{\mathrm{K}}$ are unknown parameters.

Suppose we have $n$ independently and identically distributed right-censored samples, denoted as:
$$
\left\{Y_i=T_i \wedge C_i, \Delta_i=I\left(T_i \leq C_i\right), X_i, i=1, \ldots, n\right\},
$$where $C_i$ is the censoring time. Assuming that the survival time $T_i$ is independent of the censoring time $C_i$ for a given $X_i$, and
the censoring time $C_i$ is also independent of the potential subgroup $B_i$, we can obtain the observed log-likelihood function as
$$
\begin{aligned}
    &l_{n, \text { obs }}(\theta)\\&=\sum_{i=1}^n\left[\Delta_i \log \left\{\sum_{k=1}^K f\left(Y_i, \eta_{\mathrm{k}}\right)
    \pi_{\mathrm{k}}\left(\mathrm{X}_{\mathrm{i}} ; \beta\right)\right\}+\left(1-\Delta_i\right) \log \left\{\sum_{k=1}^K S\left(Y_i,
    \eta_{\mathrm{k}}\right) \pi_{\mathrm{k}}\left(\mathrm{X}_{\mathrm{i}} ; \beta\right)\right\}\right],
\end{aligned}
$$where $\theta=\left(\eta^{\mathrm{T}}, \beta^{\mathrm{T}}\right)^{\mathrm{T}}, \eta^{\mathrm{T}}=\left(\eta_1^{\mathrm{T}}, \ldots,
\eta_{\mathrm{K}}^{\mathrm{T}}\right)^{\mathrm{T}}$, and $f\left(t, \eta_{\mathrm{k}}\right)=-\mathrm{S}^{\prime}\left(\mathrm{t},
\eta_{\mathrm{k}}\right)$.

By EM algorithm in E-step, we can get the posterior probability of the i-th sample in the k-th subgroup $$q_{i k}=\frac{f\left(Y_i,
\boldsymbol{\eta}_{\mathbf{k}}\right) \boldsymbol{\pi}_{\mathbf{k}}\left(\mathbf{X}_{\mathbf{i}} ; \boldsymbol{\beta}\right)}{\sum_{k=1}^K
f\left(Y_i, \boldsymbol{\eta}_{\mathbf{k}}\right) \pi_{\mathbf{k}}\left(\mathbf{X}_{\mathbf{i}} ; \boldsymbol{\beta}\right)},$$
$$l_n(\eta, \boldsymbol{\beta})=\sum_{\mathbf{i}=\mathbf{1}}^{\mathbf{n}} \sum_{\mathrm{k}=\mathbf{1}}^{\mathrm{K}}
\mathbf{q}_{\mathrm{ik}}\left[\boldsymbol{\Delta}_{\mathbf{i}} \log \left(\mathbf{f}\left(\mathbf{Y}_{\mathbf{i}},
\eta_{\mathbf{k}}\right)\right)+\left(\mathbf{1}-\boldsymbol{\Delta}_{\mathbf{i}}\right) \log \left(\mathbf{S}\left(\mathbf{Y}_{\mathbf{i}},
\eta_{\mathrm{k}}\right)\right)+\log \left(\pi_{\mathrm{k}}\left(\mathbf{X}_{\mathbf{i}} ; \boldsymbol{\beta}\right)\right)\right].$$

In the M-step, $\eta$ and $\boldsymbol{\beta}$ are computed by the Newton-Raphson method. Denote $\theta=(\eta, \boldsymbol{\beta})$. Denote $\nabla
l_n(\theta)=\frac{\partial l_n(\theta)}{\partial \theta}$ and $\nabla^2 l_n(\theta)=\frac{\partial^2 l_n(\theta)}{\partial \theta \partial
\theta^T}$. Then the (t+1)th iteration Newton-Raphson update is

$$\theta^{(t+1)}=\theta^{(t)}-\left.\left(\left.\nabla^2 l_n(\theta)\right|_{\theta=\theta^{(t)}}\right)^{-1} \nabla
l_n(\theta)\right|_{\theta=\theta^{(t)}}.$$

$l_n(\eta, \boldsymbol{\beta})$ increases at each iteration of the M-step until $l_n(\eta, \boldsymbol{\beta})$ converges to the end of the
iteration of the algorithm. For the setting of the number of potential subgroups, the Bayesian information criterion (BIC) can be used to
select the number of potential subgroups.

After adding the adaptive LASSO penalty term, the objective function becomes    $$
-l_n(\tilde{\eta}, \boldsymbol{\beta})+\lambda \sum_{k=1}^K \sum_{j=1}^d \frac{\left|\beta_{k j}\right|}{\left|\tilde{\beta}_{k j}\right|^{\gamma}},
$$
where $\gamma$ is the number set in advance to be greater than 0. It is usually set to $\boldsymbol{\beta}=\left(\beta_{11}, \beta_{12}, \ldots, \beta_{1
d}, \beta_{21}, \ldots, \beta_{K d}\right)^T$
,$\gamma=1$,$\tilde{\eta}$ is the maximum likelihood estimate and is not used as a penalty term.

To obtain the estimate $\hat{\boldsymbol{\beta}}$ of the adaptive LASSO, a two-step minimization of the objective function is used. The first step uses
$l_n(\eta, \boldsymbol{\beta})$ to compute the maximum likelihood estimate $(\tilde{\eta}, \tilde{\boldsymbol{\beta}})$ by the Newton-Raphson method. The
second step is performed by the coordinate descent algorithm to obtain the $\hat{\boldsymbol{\beta}}$ that minimizes the objective function.

This model can come to detect potential subgroups of individuals with different survival characteristics and identify important covariates
associated with the assignment of potential subgroup members. And the estimators are consistent and oracle in nature. Data with a high number
of covariates can be well handled by a penalized objective function. The method may serve as an exploratory step in clinical trials before
conducting subgroup analyses to study treatment effects. The important covariates selected may help to clearly identify subgroups and to
discover how patients in different subgroups respond differently to the treatment. Specific treatments may then be developed for specific
patient groups.

\subsubsection{Logistic-Cox Mixture Model}

Wu et al. (2016)\cite{wu_subgroup_2016} proposed a statistical method based on a semiparametric Logistic-Cox mixed model for analyzing
right-censored, time-to-event data.
Assuming n samples receive one of the two treatments, suppose $T_i, i=1, \ldots, n$ is the event time of interest, $\xi_i \in\{0,1\}$ is an
unobservable subgroup indicator variable, $\mathrm{Z}_i$ is a $p_1$ dimensionally observable covariate related to subgroup effects, $\mathrm
{X}_i$ is a $p_2$-dimensional observable covariate related to subgroup classification. Suppose $\lambda_{T_i \mid \xi_i, \mathbf{Z}_i,
\mathbf{X}_i}\left(t \mid \xi_i, \mathbf{Z}_i, \mathbf{X}_i\right)$ is $T_i$ given $\xi_i, \mathrm{Z}_i$ and $\mathrm{X}_i$ as a conditional
hazard function.
$$\lambda_{T_i \mid \xi_i, \mathbf{Z}_i, \mathbf{X}_i}\left(t \mid \xi_i, \mathbf{Z}_i, \mathbf{X}_i\right)=\lambda(t) \exp
\left\{\left(\beta_1+\beta_2 \xi_i\right)^{\top} \mathbf{Z}_i\right\},$$ where the conditional hazard function satisfies $\lambda$ is the
baseline hazard function and $\beta_1, \beta_2 \in \mathbb{R}^{p_1}$ is the unknown regression coefficient.

Assume
$$
\mathrm{P}\left(\xi_i=1 \mid \mathbf{X}_i, \mathbf{Z}_i\right)=\frac{\exp \left(\gamma^{\top} \mathbf{X}_i\right)}{1+\exp \left(\gamma^{\top}
\mathbf{X}_i\right)},
$$
where $\gamma \in \mathbb{R}^p$ is the unknown parameter.  The first element of $X_i$ is set to 1 , and the first element of $\mathbf{Z}_i$ is
the oscillatory variable indicating whether or not to receive treatment.
For each patient $i=1, \ldots, n$, let $C_i$ be the right censoring time, given $\left(\mathrm{Z}_i, \mathrm{X}_i\right)$ conditionally
independent of $\left(T_i, \xi_i\right)$. When right-censored data exist, we can only observe $Y_i=\min \left\{T_i, C_i\right\}$ and
$\Delta_i=I\left\{T_i \leqslant C_i\right\}$, where $I\{\}$ is the indicator function. The observed data are then $\left(Y_i, \Delta_i,
\mathbf{Z}_i, \mathbf{X}_i\right), i=1, \ldots, n$. Let $\Lambda(y)=\int_0^y \lambda(t) d t$,  $i=1, \ldots, n$

Denote
$$
\begin{gathered}
p\left(\gamma^{\top} \mathbf{X}_i\right)=\frac{\exp \left(\gamma^{\top} \mathbf{X}_i\right)}{1+{\exp \left(\gamma^{\top}
\mathbf{X}_i\right)}}, \\
g_1\left(Y_i, \Delta_i, \mathbf{Z}_i ; \beta_1, \Lambda\right)=\left\{\lambda\left(Y_i\right) \exp \left(\beta_1^{\top}
\mathbf{Z}_i\right)\right\}^{\Delta_i} \exp \left\{-\Lambda\left(Y_i\right) \exp \left(\beta_1^{\top} \mathbf{Z}_i\right)\right\},
\end{gathered}
$$
$$
g_2\left(Y_i, \Delta_i, \mathbf{Z}_i ; \beta_1, \beta_2, \Lambda\right)=\left[\lambda\left(Y_i\right) \exp
\left\{\left(\beta_1+\beta_2\right)^{\top} Z_i\right\}\right]^{\Delta_i} \exp [-\Lambda\left(Y_i\right) \exp
\left\{\left(\beta_1+\beta_2\right)^{\top} Z_i\right\}].
$$

Then the likelihood function of the observed data is a mixture of two subgroups  $$
\begin{aligned}
    &L\left(\beta_1, \beta_2, \Lambda, \gamma\right)\\&=\prod_{i=1}^n\left[p\left(\gamma^{\top} \mathbf{X}_i\right) g_2\left(Y_i, \Delta_i,
    \mathbf{Z}_i ; \beta_1, \beta_2, \Lambda\right)+\left\{1-p\left(\gamma^{\top} \mathbf{X}_i\right)\right\} g_1\left(Y_i, \Delta_i,
    \mathbf{Z}_i ; \beta_1, \Lambda\right)\right].
\end{aligned}
$$

Notice that $L\left(\beta_1, \beta_2, \Lambda, \gamma\right)$ involves the non-parameter $\Lambda$. To solve it we discretize it.
Specifically, restrict $\Lambda$ to be a step function with non-negative jumps only on $Y_i$'s, $i=1, \ldots, n$. Replace
$\lambda\left(Y_i\right)$ and $\Lambda\left(Y_i\right)$ by $\lambda_i$ and $\sum_{j: Y_j \ leqslant Y_i} \lambda_j$. So the unknown parameters
can be summarized as $\vartheta=\left(\beta_1, \beta_2, \lambda_1, \ldots, \lambda_n\right)$.
The authors use the EM algorithm. Let $\vartheta_{(0)}$ be the initial value of $\vartheta$.

E-step
$$\begin{aligned}
\mathrm{w}_i(\gamma, \vartheta)  & =\mathrm{P}\left(\xi_i=1 \mid Y_i, \Delta_i, \mathbf{Z}_i, \mathbf{X}_i ; \gamma, \vartheta\right) \\
& =\frac{p\left(\gamma^{\top} \mathbf{X}_i\right) g_2\left(Y_i, \Delta_i, \mathbf{Z}_i ; \beta_1, \beta_2,
\Lambda\right)}{p\left(\gamma^{\top} \mathbf{X}_i\right) g_2\left(Y_i, \Delta_i, \mathbf{Z}_i ; \beta_1, \beta_2,
\Lambda\right)+\left\{1-p\left(\gamma^{\top} \mathbf{X}_i\right)\right\} g_1\left(Y_i, \Delta_i, \mathbf{Z}_i ; \beta_1, \Lambda\right)}.
\end{aligned}$$

M-step
$$\begin{aligned}
&l_1^E\left(\vartheta_{;} \gamma, \vartheta_{(0)}\right) \\& =\sum_{i=1}^n\left(w_i\left(\gamma,
\vartheta_{(0)}\right)\left[\Delta_i\left\{\log \lambda_i+\left(\beta_1+\beta_2\right)^{\top} Z_i\right\}-\sum_{j: Y_j \leqslant Y_i}
\lambda_j \exp \left\{\left(\beta_1+\beta_2\right)^{\top} Z_i\right\}\right]\right. \\
& \left.+\left\{1-w_i\left(\gamma, \vartheta_{(0)}\right)\right\}\left[\Delta_i\left\{\log \lambda_i+\beta_1^{\top} Z_i\right\}-\sum_{j: Y_j
\leqslant Y_i} \lambda_j \exp \left(\beta_1^{\top} Z_i\right)\right]\right).
\end{aligned}$$

Given $\gamma$, we maximize $L(\vartheta, \gamma)$ to obtain an estimate of $\vartheta$ and constrain $\lambda_i \geqslant 0, i=1, \ldots, n$.
For the $\lambda$ estimate then the profile likelihood function can be maximized by fixing $\boldsymbol{\beta}$, so that $\partial l_1^E\left(\vartheta ;
\gamma, \vartheta_{(0)}\right) / \partial \lambda_i=0$ can be obtained:
$$
\hat{\lambda}_i=\frac{\Delta_i}{\sum_{Y_j \geq Y_i}\left[w_j\left(\gamma, \vartheta_{(0)}\right) \exp
\left\{\left(\beta_1+\beta_2\right)^{\top} Z_j\right\}+\left(1-w_j\left(\gamma, \vartheta_{(0)}\right)\right) \exp \left(\beta_1^{\top}
Z_j\right)\right]}.
$$

Substituting this expression back into $l_1^E$ reduces the problem to  maximization
$$\begin{aligned}
&p l_1^E\left(\beta_1, \beta_2 ; \gamma, \vartheta_{(0)}\right)\\= & \sum_{i=1}^n \Delta_i\left(\left(\beta_1+w_i\left(\gamma,
\vartheta_{(0)}\right) \beta_2\right)^{\top} \mathbf{Z}_i\right. \\
& \left.-\log \left\{\sum_{Y_j \geq Y_i}\left[w_j\left(\gamma, \vartheta_{(0)}\right) \exp \left\{\left(\beta_1+\beta_2\right)^{\top}
Z_j\right\}+\left(1-w_j\left(\gamma, \vartheta_{(0)}\right)\right) \exp \left(\beta_1^{\top} Z_j\right)\right]\right\}\right).
\end{aligned}$$

For the estimate of $\boldsymbol{\beta}$, it can be considered as the log-partial likelihood of the weighted Cox model, then the estimate of $\boldsymbol{\beta}$ can be
easily obtained.

The Logistic-Cox mixture model is an extension of the Logistic-normal mixture model proposed by Shen and He (2015)\cite{shen_inference_2015}
from a parametric model to a semi-parametric model, and the theoretical properties have been proven to be effective. In addition, in order to
adjust the bias caused by right-censored data and improve the performance of the model in limited samples, Wu et al.
(2016)\cite{wu_subgroup_2016} designed a bootstrap method for adjustment. Simulation experiments show that the method is feasible at the
finite sample size.

\subsubsection{Finite Mixture Cox Model and its Fitting Algorithm}

Although the Cox proportional hazards model has been the most commonly used regression model for censored data, its nonparametric nature makes
the finite mixture Cox model more difficult to be built. Eng and Hanlon (2014)\cite{eng_discrete_2014} proposed a Cox-assisted clustering
algorithm using the EM algorithm to fit a finite mixture Cox model, which efficiently clusters different groups for censored data. You et al.
(2018)\cite{you2018subtype}
extended it by adding the penalty term of the adaptive LASSO and provided the asymptotic nature of the theory, indicating that it has oracle
properties and noting that the estimator has a convergence rate of $\sqrt{n}$.

Suppose $T$ is the survival time or censoring time of a person, and $\boldsymbol{x}$ is a baseline covariate of dimension p. Denote whether T
is censored by $\delta=0$ or 1. Suppose there are $K \geq 2$ subgroups , and the corresponding mixing probability is $\boldsymbol{\Pi}=\left(\pi_1, \pi_2,
\ldots, \pi_K\right)$, satisfying the condition $\sum_{i=1}^K \pi_k=1$.  Then the joint density function of $(T, \delta)$ can be written as
$f(t, \delta \mid x)=\sum_{k=1}^K \pi_k f_k(t, \delta \mid \boldsymbol{x})$, where $f_k(t, \delta \mid \boldsymbol{x})$ is the density
function of the $ k$-th subgroup, $k=1,2, \ldots, K$.

Let each subgroup satisfy the proportional hazard assumption, and the hazard of the $k$th subgroup is $h_k(t \mid \boldsymbol{x})=h_{0 k}(t)
\exp \left(\boldsymbol{\beta}_k^{\mathrm{T}} \boldsymbol{x}\right)$, $h_{ 0 k}(t)$ is the baseline hazard function for the $k$-th subgroup,
and $\boldsymbol{\beta}_k$ is the effect parameter corresponding to $\boldsymbol{x}$ in the $k$-th subgroup. Then
$$f_k(t, \delta \mid \textbf{x})=h_{0 k}^\delta(t) \exp \left(\delta \boldsymbol{\beta}_k^{\mathrm{T}} \textbf{x}\right) \exp \left\{-\exp
\left(\boldsymbol{\beta}_k^{\mathrm{T}} \textbf{x}\right) \int_0^t h_{0 k}(u) \mathrm{d} u\right\}.$$

Given independent samples $\left(t_i, \delta_i, \boldsymbol{x}_i\right), i=1,2, \ldots, n$,  $\boldsymbol{Y}=\left(t_1, t_2, \ldots, t_n\right),
\boldsymbol{\Delta}=\left(\delta_1, \delta_2, \ldots, \delta_n\right)$, $  \boldsymbol{X}=\left(x_1, \boldsymbol{x}_2, \ldots, x_n\right)$, $\boldsymbol{H(t)}(t)=\left(h_{01}(t),
h_{02}(t), \ldots, h_{0 K}(t)\right)$,  $B=\left(\boldsymbol{\beta}_1^T, \boldsymbol{\beta}_2^T, \ldots \beta_K^T\right)^T$.
The log-likelihood function is then
$$
l_{o b s}(\boldsymbol{\Pi}, \boldsymbol{H}, B ; \boldsymbol{Y}, \boldsymbol{\Delta} \mid \boldsymbol{X})=\sum_{i=1}^n \log \sum_{k=1}^K \pi_k f_k\left(t_i, \delta_i \mid x_i\right).
$$

To facilitate the computation, the EM algorithm can be used for iteration. Depending on whether the $i$-th sample is from the $k$-th subgroup,
let $u_{i k}$ be denoted as the latent indicator variable, and let $\boldsymbol{U}=\left(u_{i k}\right)_{n \times K}$.

The complete log-likelihood function is then $$
l_c(\boldsymbol{\Pi}, \boldsymbol{H}, B ; \boldsymbol{Y}, \boldsymbol{\Delta}, \boldsymbol{U} \mid \boldsymbol{X})=\sum_{i=1}^n \sum_{k=1}^K u_{i k} \log \pi_k+\sum_{i=1}^n \sum_{k=1}^K u_{i k} \log f_k\left(t_i,
\delta_i \mid x_i\right).
$$

For the Eth step of the mth iteration, calculate the posterior probability $$
u_{i k}^{(m+1)}=\frac{\pi_k^{(m)} f_k\left(t_i, \delta_i \mid \boldsymbol{x}_i, h_{0 k}^{(m)}(\cdot),
\boldsymbol{\beta}_k^{(m)}\right)}{\sum_{k^{\prime}=1}^K \pi_{k^{\prime}}^{(m)} f_{k^{\prime}}\left(t_i, \delta_i \mid \boldsymbol{x}_i, h_{0
k^{\prime}}^{(m)}(\cdot), \boldsymbol{\beta}_{k^{\prime}}^{(m)}\right)}.
$$

For the mth iteration of the Mth step update $$\pi_k^{(m+1)}=\frac{\sum_{i=1}^n u_{i k}^{(m+1)}}{n}.$$

denote $\tilde{l}_{c, k}=\sum_{i=1}^n u_{i k}^{(m+1)} \log f_k\left(t_i, \delta_i \mid x_i\right).$
According to Breslow\cite{breslow1972contribution}, the update formula for the baseline hazard function can be obtained by deriving $\partial
\tilde{l}_{c, k} / \partial h_i^{\prime}=0$:
$$h_{0 k}^{(m+1)}\left(t_i\right)=\frac{u_{i k}^{(m+1)}}{\sum_{j: t_i \leq t_j} u_{j k}^{(m+1)} \exp
\left(\boldsymbol{\beta}_k^{(m+1)^{\mathrm{T}}} \boldsymbol{x}_j\right)}.$$

Finally, the update of $\boldsymbol{\beta}_k$ for the k-th subgroup can be obtained by fitting a weighted Cox model using common software such
as the "survivor" package in R.

When the assumption of proportional hazard does not hold, the finite mixture Cox model is able to relax the assumption of  proportional hazard
for each component, thus solving the heterogeneity problem. The theoretical properties of this model have been proven and its excellent
performance has been verified by a large number of simulation experiments. However, as with the finite mixture model, the setting of the
number of groups remains a challenging issue, while the selection of initial values may have an impact on the convergence of the model.

\subsection{A Dual Cox Model And SPIRLS-EM Algorithm
}

\subsubsection{Motivation}

Traditionally, chemotherapeutic agents have been evaluated using WHO\cite{world1979handbook} or RECIST guidelines
\cite{therasse_new_2000}\cite{eisenhauer2009new}, which assume that the drugs inhibits early tumor growth and recommend discontinuation of the
therapy when tumor progression is detected. Although RECIST provides a practical approach to tumor response assessment and the method is
widely accepted as a standardized measure, its limitations in targeted therapeutic agents are well recognized. The major limitations that
commonly affect the assessment results regardless of tumor type or pathogen include variability in tumor size measurement and tumor
heterogeneity. To accurately evaluate the efficacy of various tumor-targeted therapeutic agents, guidelines for the evaluation of targeted
therapies have been proposed successively, including the mRECIST criteria\cite{lencioni_modified_2010}, Choi criteria
\cite{choi2007correlation}, SACT criteria \cite{ smith2010assessing}, etc. Research in immuno-oncology in recent years has shown that
immunotherapy may have clinical benefits that differ from those of cytotoxic drugs and that, in some cases, it may be inappropriate to stop
immunotherapy as soon as disease progression is seen. A new set of evaluation criteria has been proposed, called
irRC\cite{wolchok2009guidelines}, which considers measurable total tumor load and allows for the possibility of delayed benefit and durable
stable disease\cite{chen_statistical_2013}. Although different oncology treatment guidelines have different criteria for response, they all
classify treatment outcomes into four levels: complete response, partial response, stable disease, and progressive disease. In clinical
practice, the objective response rate is usually used as the main criterion to assess the treatment effect and is calculated as (complete
response + partial response)/total number of patients treated.

Although targeted therapies and immunotherapies have different mechanisms of action, they share a common feather that there are different
efficacy profiles for different people. With targeted therapies, only a subset of patients will respond to any particular new drug because of
the genetic diversity of human tumors across tissue types and the heterogeneity of each patient. For immunotherapy, immunotherapeutic drugs
may not stimulate the patient's immune system to recognize cancer cells, and the patient may not respond well to the drug. In specific
clinical trials, targeted therapy or immunotherapy is often used as the experimental group, while conventional treatment modalities serve as
the control group. However, because the mechanism and principle of action of the experimental group are different from that of the control
group, there is a lack of criteria to assess objective response rates in both groups simultaneously. In this issue, the individual response
can be observed in the experimental group but not in the control group. To address this issue, appropriate semi-supervised statistical models
are needed to reduce the sample size and time required to complete clinical trials, thereby reducing development costs and increasing
development speed.

As discussed in Section 1, the challenges posed by targeted and immunotherapy include heterogeneity among and within patients, long-term
survival, and delayed clinical effects. Multiple statistical modeling approaches have been developed to address these challenges. The
nonparametric Kaplan-Meier curve does not need to make assumptions about the distribution of data and can compare the survival of multiple
groups of data at the same time. However, Kaplan-Meier curves cannot incorporate covariates and predict survival as the Cox proportional
hazards model does. In addition, the estimation results of this curve may not be stable enough for small sample studies. The Cox proportional
hazards model has good interpretability and can estimate the impact of multiple risk factors, and these effects can be used to predict the
survival time of a specific individual. However, when the assumption of proportional hazards does not hold, such as in the case of crossing
Kaplan-Meier curves, the predictive power of the model may be reduced. The piecewise proportional hazards model is able to restrict the
proportional hazards assumption to different time intervals, thus adapting to situations when the proportional hazards assumption does not
hold. However, the piecewise proportional hazards model lacks a clear time division standard, which makes it challenging to accurately
interpret the model, and the way of modeling each time period separately makes the fitting error of the model large. The cure model can be
used to analyze long-term survivors but may rely too heavily on parametric assumptions for the way incidence and latency are modeled
separately. Multiple different mixture models work well to cluster different subgroups and study the efficacy of treatments. However, there is
still a lack of research on the mixture model in the semi-supervised learning scenario in the field of survival analysis. Therefore, a
semi-supervised finite mixture Cox model is proposed in this paper, and the number of subgroups is set to two, which we call the Dual Cox
model. The Dual Cox model relaxes the assumption of proportional hazards to the responder and non-responder populations, which has an
excellent biological interpretation. The EM algorithm is used to predict and classify the patients who have not been observed to respond to
the drug so as to find the best population division. After dividing the population, the Dual Cox model can analyze the drug efficacy of the
responder and non-responder populations, which is of great clinical significance. It can better evaluate the difference in drug efficacy and
provide practical guidance for the optimization of treatment plans.

The basic form of the Dual Cox model is as follows

$$S(t,\delta\mid\boldsymbol{x})=\pi S_1(t,\delta\mid\boldsymbol{x})+(1-\pi)S_2(t,\delta\mid\boldsymbol{x})$$

where $\pi$ is the objective response rate, $S_1(t,\delta\mid\boldsymbol{x})$ and $S_2(t,\delta\mid\boldsymbol{x})$ are the survival functions
corresponding to responders and non-responders, respectively.

It is noted that in a clinical trial, disease progression and patient's responses to the treatment are always recorded. Therefore, there is no
missing data issue for such subgroup division based on patient's response profiles.

\subsubsection{A Dual Cox Model}
\noindent{\textbullet}\quad\parbox[t]{0.9\textwidth}{\textbf{Assumptions and Notations
}}

Let T denote the survival or censoring time of an individual with p-dimensional covariates $\boldsymbol{x}$, and let $\delta$=0, 1 be an
indicator function indicating whether T is censored, with 0 being censored and 1 is not censored. We denote the vector of survival times,
censoring indicators, and covariate vectors as $\boldsymbol{Y}=\left(t_1, t_2, \ldots, t_n\right), \boldsymbol{\Delta}=\left(\delta_1, \delta_2, \ldots,
\delta_n\right),  \boldsymbol{X}=\left (\boldsymbol{x}_1, \boldsymbol{x}_2, \ldots, \boldsymbol{x}_n\right)^\mathrm{T}$. There are K=2 subgroups with the
first subgroup  being the responders, and the second subgroup being the non-responsders. The mixing probability $\boldsymbol{\Pi} =(\pi_{1},\pi_{2})$
satisfying $\sum_{k=1}^{2} \pi_k=1$, where $\pi_{1}$ represents the objective response rate and $\pi_{2}$ represents the objective
non-response rate. Let the joint density function of $(T, \delta)$ be
$f(t,\delta\mid\boldsymbol{x})=\pi_{1}f_1(t,\delta\mid\boldsymbol{x})+\pi_{2}f_2(t,\delta\mid \boldsymbol{x})$, where $f_k(t,\delta \mid
\boldsymbol{x})$ denotes the density function of the k-th subgroup, k=1, 2.
Assume that the proportional hazards assumption is satisfied in each subgroup. The hazard of the $k$-th group is then $h_k\left(t, \delta \mid
\boldsymbol{x}\right)=h_{0 k}(t) \exp \left(\boldsymbol{\beta}_k^{\mathrm{T}} \boldsymbol{x}\right)$, where $h_ {0 k}(t)$ is the baseline
hazard function for the $k$-th subgroup, and $\boldsymbol{\beta}_k$ is the  effect
parameter corresponding to $\boldsymbol{x}$ in the $k$-th subgroup. Let the vector of regression coefficients, the vector of baseline hazards,
be $\boldsymbol{\beta}=\left(\boldsymbol{\beta}_1^T, \boldsymbol{\beta}_2^T\right)^T$,$\boldsymbol{H(t)}=\left(h_{01}(t), h_{02 }(t)\right). $

In our medical context, the experimental group samples have been observed whether they respond to the drug or not (labeled), while the control
group is not observed (unlabeled). Suppose that $\left(t_i, \delta_i, \boldsymbol{x}_i\right), i=1,2, \ldots, n$ are independently and
identically distributed right-censored samples, and that $t_i$ and $\delta_i$ are independent of each other given $\boldsymbol{x}_i$. Denote
the set of the experimental group (labeled) samples is $R_{l}=\{i \mid i \in $ experiment group$\}$ and the set of the control group
(unlabeled) samples is $R_{u}=\{i \mid i \in $ control  group $\}$. Denote $n_l$, $n_u$ as the number of samples in the experimental group and
the control group respectively, and the total number of samples is $n=n_u+n_l$.

\noindent{\textbullet}\quad\parbox[t]{0.9\textwidth}{\textbf{ SPIRLS-EM Algorithm}}

Denote the Cox proportional hazards density function belonging to the k-th subgroup (k=1 is response group and k=2 is non-response group)
as$$\begin{aligned}
f_k(t, \delta \mid \boldsymbol{x})&=\left\{h_k\left(t, \delta \mid \boldsymbol{x}\right)\right\}^{\delta} \cdot S_k\left(t, \delta \mid
\boldsymbol{x}\right)\\
&=h_{0 k}^\delta(t) \exp \left(\delta \boldsymbol{\beta}_k^{\mathrm{T}} \boldsymbol{x}\right) \exp \left\{-\exp
\left(\boldsymbol{\beta}_k^{\mathrm{T}} \boldsymbol{x}\right) \int_0^t h_{0 k}(u) \mathrm{d} u\right\}.\end{aligned}$$

Because the experimental group has been labeled, we can write the density function for the experimental group known to be in the k-th subgroup
as $\pi_kf_k(t, \delta \mid \boldsymbol{x})$ and the density function for the control group can be written as $f(t, \delta \mid
\boldsymbol{x})=\sum_{k=1}^2 \pi_k f_k( t, \delta \mid \boldsymbol{x}).$ Assume that $z_{i k}$ is based on whether the experimental group
sample is from the k-th subgroup and is 1 if yes, otherwise 0, where $i \in R_{l}$.

So the log-likelihood function based on the observed data is then  \begin{equation}
l_{o b s}(\boldsymbol{\Pi}, \boldsymbol{H}, \boldsymbol{\beta} ; \boldsymbol{Y}, \boldsymbol{\Delta} \mid \boldsymbol{X})=\sum_{i \in R_{l}} \sum_{k=1}^2z_{ik}\log \pi_k f_k(t_i, \delta_i \mid
\boldsymbol{x}_i)+\sum_{i \in R_{u}}  \log
 \sum_{k=1}^2 \pi_k f_k\left(t_i, \delta_i \mid \boldsymbol{x}_i\right).
\end{equation}

Depending on whether the $i$-th sample is from the $k$-th subgroup, let $u_{i k}$ be denoted as the latent indicator variable, $i \in R_{u}$,
k=1, 2. Let $\boldsymbol{U}=\left(u_{i k}\right)_{n_u \times 2}$
, then the complete log-likelihood function is
$$\begin{aligned}
\footnotesize l_{c}(\boldsymbol{\Pi}, \boldsymbol{H}, \boldsymbol{\beta} ; \boldsymbol{Y}, \boldsymbol{\Delta} , \boldsymbol{U} \mid \boldsymbol{X})=& \sum_{i \in R_{l}} \sum_{k=1}^2z_{ik}\log \pi_k +\sum_{i \in R_{u}}
\sum_{k=1}^2 u_{ik}\log  \pi_k \\
&+\sum_{i \in R_{l}} \sum_{k=1}^2z_{ik}\log  f_k(t_i, \delta_i \mid \boldsymbol{x}_i)+\sum_{i \in R_{u}} \sum_{k=1}^2 u_{ik}\log f_k\left(t_i,
\delta_i \mid \boldsymbol{x}_i\right) .
\end{aligned}$$

The (m + 1)-th iteration of the E-step, $i \in R_{u}$, k=1,2, we can obtain by calculating the posterior probability that   \begin{equation}
\begin{aligned}
 u_{ik}^{(m+1)}=E\left(u_{ik} \mid \boldsymbol{X}, \boldsymbol{\beta}^{(m)}, \boldsymbol{\Pi^{(m)}}, \boldsymbol{H^{(m)}} \right) &=P\left(u_{i k}=1 \mid \boldsymbol{X},
 \boldsymbol{\beta}^{(m)}, \boldsymbol{\Pi^{(m)}}, \boldsymbol{H^{(m)}}\right) \\
&=\frac{\pi_k^{(m)} f_k\left(t_i, \delta_i \mid \boldsymbol{x}_i, h_{0 k}^{(m)}(\cdot),
\boldsymbol{\beta}_k^{(m)}\right)}{\sum_{k^{\prime}=1}^2 \pi_{k^{\prime}}^{(m)} f_{k^{\prime}}\left(t_i, \delta_i \mid \boldsymbol{x}_i, h_{0
k^{\prime}}^{(m)}(\cdot), \boldsymbol{\beta}_{k^{\prime}}^{(m)}\right)}.
\end{aligned}
\end{equation}

Let the Q-function of the (m + 1)-th iteration be $E_{\boldsymbol{U} \mid \boldsymbol{X}}[l_{c}(\boldsymbol{\Pi}, \boldsymbol{H}, \boldsymbol{\beta} ; \boldsymbol{Y}, \boldsymbol{\Delta} , \boldsymbol{U} \mid \boldsymbol{X})\mid
\boldsymbol{X},\boldsymbol{\Pi^{(m)}},\boldsymbol{\beta}^{(m)},\boldsymbol{U^{(m+1)}}]$, and substitute (3) into the Q-function to obtain
\begin{equation}
\begin{aligned}
Q(\boldsymbol{\beta},\boldsymbol{H};\boldsymbol{\beta}^{(m)},\boldsymbol{H^{(m)}})&=E_{\boldsymbol{U^{(m+1)}} \mid \boldsymbol{X}}[l_{c}(\boldsymbol{\Pi}, \boldsymbol{H}, \boldsymbol{\beta} ; \boldsymbol{Y}, \boldsymbol{\Delta} , \boldsymbol{U} \mid \boldsymbol{X})\mid
\boldsymbol{X},\boldsymbol{\Pi^{(m)}},\boldsymbol{\beta}^{(m)}, \boldsymbol{U^{(m+1)}}]\\
&=\sum_{i \in R_{l}} \sum_{k=1}^2z_{ik}\log \pi_k +\sum_{i \in R_{u}} \sum_{k=1}^2 u_{ik}^{(m+1)}\log  \pi_k \\
&+\sum_{i \in R_{l}} \sum_{k=1}^2z_{ik}\log  f_k(t_i, \delta_i \mid \boldsymbol{x}_i)+\sum_{i \in R_{u}} \sum_{k=1}^2 u_{ik}^{(m+1)}\log
f_k\left(t_i, \delta_i \mid \boldsymbol{x}_i\right).
\end{aligned}
\end{equation}

The M-step is derived for $\pi_k$ in the function $\{Q(\boldsymbol{\beta},\boldsymbol{H};\boldsymbol{\beta}^{(m)},\boldsymbol{H^{(m)}})+\lambda(\sum_{k=1}^{2}
\pi_k-1)\}$, where $\lambda$ is the Lagrange multiplier. By taking the derivative with respect to $\pi_k$, we can get

\begin{equation}
\begin{aligned}
\pi_k^{(m+1)}=\frac{\sum_{i \in R_{u}} u_{i k}^{(m+1)}+\sum_{i \in R_{l}}z_{ik}}{n_{u}+n_{l}}.
\end{aligned}
\end{equation}

Next, we derive the updated formula for the following parameters

$$({\boldsymbol{\beta}}^{(m+1)},\boldsymbol{H^{(m+1)}})=\arg \max _{\boldsymbol{\beta}, \boldsymbol{H}} Q(\boldsymbol{\beta},\boldsymbol{H};\boldsymbol{\beta}^{(m)},\boldsymbol{H^{(m)}}).$$

Since the update of $\boldsymbol{\beta}$ only requires the last two items of (4) to be considered, we denote that $$\tilde{l}_{c,
k}\left(\boldsymbol{\beta}_k ; h_{0k}; \boldsymbol{Y}, \boldsymbol{\Delta} \mid \boldsymbol{X}, \boldsymbol{U^{(m+1)}}\right)= \sum_{i \in R_{u}} u_{i k}^{(m+1)} \log f_k\left(t_i, \delta_i \mid
\boldsymbol{x}_i\right)+\sum_{i \in R_{l}} z_{i k} \log f_k\left(t_i, \delta_i \mid \boldsymbol{x}_i\right). $$

Assuming that $t_i$ is not tied (all $t_i$ values are not the same), define $D_u$, $D_l$ ,  $D\subseteq\{1, \ldots, n\}$, where $D_u=\left\{I|
\delta_I=1, I \in R_u \right\}$, $D_l=\left\{I| \delta_I=1, I \in R_l \right\}$. $D=\left\{I| \delta_I=1 \right\}$. For the nonparametric part
$h_{0 k}(\cdot)$, we denote $h_{0 k}(t)$ at time $t_i$ as $h_i^{\prime}$ and $h_i^{\prime}=0$ when $\delta_i=0$, first we rewrite
$\tilde{l}_{c, k}\left(\boldsymbol{\beta}_k ; h_{0k}; \boldsymbol{Y}, \boldsymbol{\Delta} \mid \boldsymbol{X}, \boldsymbol{U^{(m+1)}}\right)$ and then according to
Breslow\cite{breslow1972contribution} to calculate $ \partial \tilde{l}_{c, k} / \partial h_i^{\prime}=0.$

$$\begin{aligned}
& \tilde{l}_{c, k}\left(\boldsymbol{\beta}_k ; h_{0k}; \boldsymbol{Y}, \boldsymbol{\Delta} \mid \boldsymbol{X},  \boldsymbol{U^{(m+1)}}\right)=\sum_{i \in R_u} u_{i k}^{(m+1)} \log f_k\left(t_i,
\delta_i \mid \boldsymbol{x}_i\right) +\sum_{i \in R_{l}} z_{i k} \log f_k\left(t_i, \delta_i \mid \boldsymbol{x}_i\right)\\ & =\left\{\sum_{i
\in D_u} u_{i k}^{(m+1)}\left[\left(\log \left(h_i^{\prime}\right)+\boldsymbol{\beta}_k^T \boldsymbol{x}_{i}\right)\right]-\sum_{i \in R_u}
u_{i k}^{(m+1)}\left[\sum_{j: t_j \leq t_i} h_j^{\prime} \exp \left(\boldsymbol{\beta}_k^{\top}
\boldsymbol{x}_{i}\right)\right]\right\}\\&+\left\{\sum_{i \in D_l} z_{i k}\left[\left(\log \left(h_i^{\prime}\right)+\boldsymbol{\beta}_k^T
\boldsymbol{x}_{i}\right)\right]-\sum_{i \in R_l} z_{i k}\left[\sum_{j: t_j \leq t_i} h_j^{\prime} \exp \left(\boldsymbol{\beta}_k^{\top}
\boldsymbol{x}_{i}\right)\right]\right\}  \\
& =\left\{\sum_{i \in D_u} u_{i k}^{(m+1)}\left[\left(\log \left(h_i^{\prime}\right)+\boldsymbol{\beta}_k^T \boldsymbol{x}_{i}\right)\right]-
\sum_{i \in D} h_i^{\prime} \left[ \sum_{\substack{j: t_j \geq t_i \\ j \in R_u}}u_{j k}^{(m+1)} \exp \left(\boldsymbol{\beta}_k^{\top}
\boldsymbol{x}_{j}\right)\right]\right\}\\&+\left\{\sum_{i \in D_l} z_{i k}\left[\left(\log \left(h_i^{\prime}\right)+\boldsymbol{\beta}_k^T
\boldsymbol{x}_{i}\right)\right]-\sum_{i \in D} h_i^{\prime}\left[ \sum_{\substack{j: t_j \geq t_i \\ j \in R_l}}z_{j k} \exp
\left(\boldsymbol{\beta}_k^{\top} \boldsymbol{x}_{j}\right)\right]\right\}.
\end{aligned}$$
$$
\frac{\partial \tilde{l}_{c, k}}{\partial h_i^{\prime}}=\left(\frac{ u_{i k}^{(m+1)}}{h_i^{\prime}}-\sum_{\substack{j: t_j \geq t_i \\ j \in
R_u}} u_{j k}^{(m+1)} \exp \left(\boldsymbol{\beta}_k^{T} \boldsymbol{x}_i\right)-\sum_{\substack{j: t_j \geq t_i \\ j \in R_l}} z_{j k} \exp
\left(\boldsymbol{\beta}_k^{T} \boldsymbol{x}_i\right)\right)=0 , \quad i \in D_u .$$

$$\frac{\partial \tilde{l}_{c, k}}{\partial h_i^{\prime}}=\left(\frac{ z_{ik}}{h_i^{\prime}}-\sum_{\substack{j: t_j \geq t_i \\ j \in R_u}}
u_{j k}^{(m+1)} \exp \left(\boldsymbol{\beta}_k^{T} \boldsymbol{x}_i\right)-\sum_{\substack{j: t_j \geq t_i \\ j \in R_l}} z_{j k} \exp
\left(\boldsymbol{\beta}_k^{T} \boldsymbol{x}_i\right)\right)=0 , \quad i \in D_l .$$

Thus, $h_{0 k}$  is then updated by
\begin{equation}
h_{0 k}^{(m+1)}\left(t_i\right)=\frac{ u_{i k}^{(m+1)}}{\sum_{j: t_i \leq t_j,j\in R_{u}} u_{j k}^{(m+1)} \exp
\left(\boldsymbol{\beta}_k^{(m+1)^{\mathrm{T}}} \boldsymbol{x}_j\right)+\sum_{j: t_i \leq t_j,j\in R_{l}}z_{jk}\exp
\left(\boldsymbol{\beta}_k^{(m+1)^{\mathrm{T}}} \boldsymbol{x}_j\right)}, i \in D_u.
\end{equation}

\begin{equation}
h_{0 k}^{(m+1)}\left(t_i\right)=\frac{z_{ik}}{\sum_{j: t_i \leq t_j,j\in R_{u}} u_{j k}^{(m+1)} \exp
\left(\boldsymbol{\beta}_k^{(m+1)^{\mathrm{T}}} \boldsymbol{x}_j\right)+\sum_{j: t_i \leq t_j,j\in R_{l}}z_{jk}\exp
\left(\boldsymbol{\beta}_k^{(m+1)^{\mathrm{T}}} \boldsymbol{x}_j\right)},  i\in D_l.
\end{equation}

For the update of $\boldsymbol{\beta}^{(m+1)}$, the profile likelihood can be obtained by taking $h_{0 k}^{(m+1)}$ back to $\tilde{l}_{c, k}$,
which gives
$$\begin{aligned}&\tilde{l}_{c, k}\left(\boldsymbol{\beta}_k ; \boldsymbol{Y}, \boldsymbol{\Delta} \mid \boldsymbol{X}, \boldsymbol{U^{(m+1)}}^{(m+1)}\right)\\&=\sum_{i \in R_u} \delta_i u_{i
k}^{(m+1)}\left[\beta_k^{\mathrm{T}} \boldsymbol{x}_i-\log \left\{\sum_{\substack{j: t_j \geq t_i \\ j \in R_u}} u_{j k}^{(m+1)} \exp
\left(\beta_k^{\mathrm{T}} \boldsymbol{x}_j\right)+\sum_{\substack{j: t_j \geq t_i \\ j \in R_l}} z_{j k}\exp
\left(\boldsymbol{\beta}_k^{\mathrm{T}} \boldsymbol{x}_j\right)\right\}\right]\\
&+\sum_{i\in R_l} \delta_i z_{i k}\left[\boldsymbol{\beta}_k^{\mathrm{T}} \boldsymbol{x}_i-\log \left\{\sum_{\substack{j: t_j \geq t_i \\ j
\in R_u}} u_{j k}^{(m+1)} \exp \left(\beta_k^{\mathrm{T}} \boldsymbol{x}_j\right)+\sum_{\substack{j: t_j \geq t_i \\ j \in R_l}} z_{j k}\exp
\left(\beta_k^{\mathrm{T}} \boldsymbol{x}_j\right)\right\}\right]\\
&+ \sum_{i\in D_u} u_{ik}^{(m+1)}+\sum_{i\in D_l} z_{ik}+\sum_{i \in R_u} \delta_i u_{i k}^{(m+1)} \log u_{i k}^{(m+1)}+\sum_{i\in R_l}
\delta_i z_{i k} \log z_{ik},
\end{aligned}$$where if $u_{ik}^{(m+1)}=0$ then $\delta_i u_{i k}^{(m+1)} \log u_{i k}^{(m+1)}=0$, and if $z_{ik}=0$ then $\delta_i z_{i k}
\log z_{ik}=0$.

Removing the terms unrelated to $\boldsymbol{\beta}$, it is known that $\tilde{l}_{c, k}$ is the log-partial likelihood. Therefore, the update
of $\boldsymbol{\beta}^{(m+1)}$ can be regarded as a weighted Cox proportional hazards model, i.e., $\boldsymbol{\beta}^{(m+1)}$ can be
obtained by the iteratively
reweighted least squares (IRLS). As iterations proceed in the EM algorithm, the observed log-likelihood function is greater than or equal to
the value of the previous iteration each time \cite{dempster_maximum_1977}, meaning $ \boldsymbol{\Pi^{(m+1)}}, u_{ik}^{(m+1)} $ , $
{\boldsymbol{\beta}}^{(m+1) }, \boldsymbol{H^{(m+1)}}$ is a better estimate than $ \boldsymbol{\Pi^{(m)}}, u_{ik}^{(m)} $, $ {\boldsymbol{\beta}}^{(m)}, \boldsymbol{H^{(m)}}$.

\noindent{\textbullet}\quad\parbox[t]{0.9\textwidth}{\textbf{Convergence criteria}}

Finally, the algorithm ends the iterations according to the following two criteria that hold simultaneously:
\begin{itemize}
    \item Absolute convergence criterion: The EM algorithm is considered to have converged if the difference between the current
        log-likelihood function value $l(\theta^{(t)})$ and the log-likelihood function value $l(\theta^{(t-1)})$ of the previous iteration
        is less than some threshold $\epsilon$,
    \begin{equation}|l(\theta^{(t)}) - l(\theta^{(t-1)})| < \epsilon. \end{equation}
    \item Relative convergence criterion: The EM algorithm is considered to have converged if the difference between the log-likelihood
        function value $l(\theta^{(t)})$ and the log-likelihood function value $l(\theta^{(t-1)})$ of the previous iteration divided by the
        log-likelihood function value is less than some threshold $\epsilon$,
    \begin{equation}|(l(\theta^{(t)}) - l(\theta^{(t-1)}))/l(\theta^{(t)})| < \epsilon.\end{equation}
\end{itemize}

\noindent{\textbullet}\quad\parbox[t]{0.9\textwidth}{\textbf{Classification Rule}}

Denote the values of the final convergence of the parameters as $\tilde{\boldsymbol{\Pi}}, \tilde{\boldsymbol{\beta}}$ and $\tilde{\boldsymbol{H}}$. For sample
$\left(t_m, \delta_m, \boldsymbol{x}_m\right)$, $m\in R_u$. The sample $\left(t_m, \delta_m, \boldsymbol{x}_m\right)$ belongs to which
subgroup is determined by
$$
\underset{k}{\operatorname{argmax} \hat u_{mk}},
$$where$$
\hat u_{mk}=P\left(\widetilde{\boldsymbol{\Pi}}, \widetilde{\boldsymbol{H}}, \widetilde{B} ; k_m=k \mid\left(\boldsymbol{x}_m, t_m,
\delta_m\right)\right)=\frac{\widetilde{\pi}_k f_k\left(t_m, \delta_m \mid \boldsymbol{x}_m, \tilde{h}_{0 k}(\cdot),
\tilde{\boldsymbol{\beta_k}}\right)}{\sum_{k=1}^2 \tilde{\pi}_{k} f_{k}\left(t_m, \delta_m \mid \boldsymbol{x}_m, \tilde{h}_{0
k^{\prime}}(\cdot), \widetilde{\boldsymbol{\beta_{k}}} \right)}.
$$

\noindent{\textbullet}\quad\parbox[t]{0.9\textwidth}{\textbf{Initial Values}}

One of the main advantages of the EM algorithm is that it always converges. However, the disadvantage is that depending on the initial values,
the algorithm may converge to a local maximum rather than a global one \cite{tanner1993tools}. In addition, McLachlan and Peel
(2000)\cite{mclachlan_finite_2000} mention that when the initial values are chosen close to the boundary, there may be no way to make the
parameters converge, as will be verified in the numerical simulations in Section \ref{sec:dual4}. Therefore, the initial value setting is very
important for our algorithm.

Eng and Hanlon (2014)\cite{eng_discrete_2014} contribute to the topic in the finite mixture Cox model by stating that in order to attain the global optimum as much as feasible, their algorithm randomly selects $u_{ik}^{(0)}$ as the initial values and then estimates $\boldsymbol{\beta}^ {(0)}$ and $\pi^{(0)}$. In practice, they run multiple experiments with random initial values and choose the one that maximizes the value of the log-likelihood function.

Under semi-supervised learning, since $\pi_k^{(0)}$ can be estimated from the labeled data (experimental group), $z_{ik}$ is used to estimate
$\boldsymbol{\Pi^{(0)}}$, and then the probability of $\boldsymbol{\Pi^{(0)}}$ can be taken as the prior information of each unlabeled sample, i.e.,
$\boldsymbol{\Pi^{(0)}}=(u_{i1}^{ (0)},u_{i2}^{(0)})$, and finally estimate $\boldsymbol{\beta}^{(0)}$, so that the information of the labeled data
(experimental group) can be maximally utilized. Besides, we can also use $z_{ik}$ to estimate $\boldsymbol{\Pi^{(0)}}$, and then randomly select different
initial values of $u_{ik}^{(0)}$ for multiple experiments, and the final result is chosen as the initial values corresponding to the maximum
likelihood estimators. In the simulation experiments in Section \ref{sec:dual4}, our results illustrate that both methods can achieve
relatively high values for the log-likelihood function.

\noindent{\textbullet}\quad\parbox[t]{0.9\textwidth}{\textbf{SPIRLS-EM Algorithm}}

Following the IRLS-EM algorithm in \cite{you2018subtype}, our fitting algorithm, called as SPIRLS-EM, is summarized as follows.

\begin{algorithm}[H]
    \KwIn{Observed data for the experimental group (labeled) and the control group (unlabeled)}
    \lFor{initial values}{ Let $m=0$, estimate $\boldsymbol{\Pi^{(0)}}$ based on the labeled sample $z_{ik}$. Initialize $u_{ik}^{(0)}$ and
    $\boldsymbol{\beta}^{(0)}$, and estimate $\boldsymbol{H^{(0)}}$ according to (5) and (6)}
   \While {(1) and (3) do not satisfy the convergence conditions (7) and (8)}
{
E-step: Compute (2) to update $u_{ik}^{(m+1)}$ for all unlabeled data;\\
M-step: Use $u_{ik}^{(m+1)}$ and $z_{ik}$ to compute (4) to update $\boldsymbol{\Pi^{(m+1)}}$;\\
\For{k=1,2}
{
Update $\boldsymbol{\beta}^{(m+1)},$ using $u_{ik}^{(m+1)}$ and $z_{ik}$ as the weights of the Cox model;\\
Update $h_{0k}^{(m+1)}(\cdot)$ according to (5) and (6);
}
m=m+1
}
    \KwOut{$\tilde{\boldsymbol{\Pi}}, \tilde {\boldsymbol{U}}, \tilde {\boldsymbol{\beta}}, \tilde{\boldsymbol{H}}$}
    \caption{SPIRLS-EM Algorithm}
    \label{algo:sgd}
\end{algorithm}

\section{Theoretical Properties}
\label{sec:dual3}
\subsection{Regularity Conditions}

Anderson and Gill (1982) \cite{andersen1982cox} proved the theoretical asymptotic properties of partial likelihood estimators in the Cox proportional hazards model, and You et
al. (2018) \cite{you2018subtype} proved the theoretical asymptotic properties of IRSL algorithm in the finite mixture Cox model, and we combine their work to
give consistency and asymptotic normality of SPRIRL-EM estimators in the Dual Cox model.

As described in Section \ref{sec:dual}, subsection 2, the mixing probability vector $\boldsymbol{\Pi} =(\pi_{1},\pi_{2}), \boldsymbol{\beta}_k$ is the
corresponding effect parameter of $\boldsymbol{x}$ in the $k$-th subgroup, and $\boldsymbol{\beta}=\left(\boldsymbol{\beta}_1^T,
\boldsymbol{\beta}_2^T\right)^T$. Let $\boldsymbol{\beta}_0$ be the true value of $\boldsymbol{\beta}$, and $\boldsymbol{\beta}_{0k}$ be the
true value of $\boldsymbol{\beta}_{k}$, where k=1, 2. Without loss of generality, we assume that
$T \in[0,1]$.

To ensure the consistency and asymptotic normality of estimators in the Dual Cox model, we consider the following regularity conditions:

\begin{enumerate}
\item[A.] For $k=1,2$, we have $\int_0^1 h_{0 k}(t) dt<\infty$.
\item[B.] Denote $Q_1\left(t_i\right)=I\left\{T \geq t_i, i \in R_l\right\}, Q_2\left(t_i\right)=I\left\{T \geq t_i, i \in R_u\right\}.$ The
    process $Q(t)=I\{T \geq t\}$ is left-continuous with right-hand limits,  and $P\left(Q(t)=1, \forall t \in[0,1]\right)>0$;
\item[C.] For $k=1,2$, there exists a neighborhood $\mathcal{B}$ of $\boldsymbol{\beta}_0$ such that
$$\quad \mathrm{E}\left\{\sup _{t \in[0.1] \cdot \boldsymbol{\beta} \in \mathcal{B}} \left(\sum_{k=1}^2 V_k Q_2(t)
\boldsymbol{x}^{\mathrm{T}} \boldsymbol{x} \exp \left(\boldsymbol{\beta}_k^{\mathrm{T}} \boldsymbol{x}\right)+ \sum_{k=1}^2 z_k Q_1(t)
\boldsymbol{x}^{\mathrm{T}} \boldsymbol{x} \exp \left(\boldsymbol{\beta}_k^{\mathrm{T}} \boldsymbol{x}\right)\right)\right\}<\infty,$$ $V_k$
is an indicator variable for whether the sample is from the k-th component, whose follows the Bernoulli distribution with probability $
\pi_k $. $z_k$ is a labeled value, indicating whether the experimental group samples come from the k-th component;

\item[D.] Let $w_k^{(0)}\left(\boldsymbol{\beta}_k, t\right)=\mathrm{E} V_k Q_2(t) \exp \left(\boldsymbol{\beta}_k^{\mathrm{T}}
    \boldsymbol{x}\right), w_k^{(1)}\left(\boldsymbol{\beta}_k, t\right)=\mathrm{E} V_k Q_2(t) x \exp
    \left(\boldsymbol{\beta}_k^{\mathrm{T}} \boldsymbol{x}\right)$, and $w_k^{(2)}\left(\boldsymbol{\beta}_k, t\right)=\mathrm{E} V_k Q_2(t)
    \boldsymbol{x} \boldsymbol{x}^{\mathrm{T}} \exp \left(\boldsymbol{\beta}_k^{\mathrm{T}} \boldsymbol{x}\right)$, where $w_k^{(0)}(\cdot,
    t), w_k^{(1)}(\cdot, t)$, and $w_k^{(2)}(\cdot, t)$  are continuous
in $\boldsymbol{\beta}_k \in \mathcal{B}_k$,  uniformly in $t \in[0,1]$, $w_k^{(0)}(\cdot, \cdot), w_k^{(1)}(\cdot, \cdot)$, and
$w_k^{(2)}(\cdot, \cdot)$  are bounded on $\mathcal{B}_k \times[0,1]$, and $w_k^{(0)}(\cdot, \cdot)$ is bounded away from zero on
$\mathcal{B}_k \times[0,1]$. The matrix
$$
\begin{aligned}
I_{u,k}\left(\boldsymbol{\beta}_{0 k}\right)= & \int_0^1\left\{\frac{w_k^{(2)}\left(\boldsymbol{\beta}_{0 k},
t\right)}{w_k^{(0)}\left(\boldsymbol{\beta}_{0 k}, t\right)}-\left(\frac{w_k^{(1)}\left(\boldsymbol{\beta}_{0 k},
t\right)}{w_k^{(0)}\left(\boldsymbol{\beta}_{0 k}, t\right)}\right)\left(\frac{w_k^{(1)}\left(\boldsymbol{\beta}_{0 k},
t\right)}{w_k^{(0)}\left(\boldsymbol{\beta}_{0 k}, t\right)}\right)^{\mathrm{T}}\right\} \\
& \times w_k^{(0)}\left(\boldsymbol{\beta}_{0 k}, t\right) h_{0 k}(t) d t
\end{aligned}
$$

is finite positive definite;

\item[E.] Let $s_k^{(0)}\left(\boldsymbol{\beta}_k, t\right)=\mathrm{E}  z_k Q_1(t) \exp \left(\boldsymbol{\beta}_k^{\mathrm{T}}
    \boldsymbol{x}\right), s_k^{(1)}\left(\boldsymbol{\beta}_k, t\right)=\mathrm{E} z_k Q_1(t) x \exp
    \left(\boldsymbol{\beta}_k^{\mathrm{T}} \boldsymbol{x}\right)$, and $s_k^{(2)}\left(\boldsymbol{\beta}_k, t\right)=\mathrm{E} z_k Q_1(t)
    \boldsymbol{x} \boldsymbol{x}^{\mathrm{T}} \exp \left(\boldsymbol{\beta}_k^{\mathrm{T}} \boldsymbol{x}\right),$
where $s_k^{(0)}(\cdot, t), s_k^{(1)}(\cdot, t)$, and $s_k^{(2)}(\cdot, t)$ are continuous
in $s_k \in \mathcal{B}_k$, uniformly in $t \in[0,1]$, $s_k^{(0)}(\cdot, \cdot), s_k^{(1)}(\cdot, \cdot)$, and $s_k^{(2)}(\cdot, \cdot)$ are
bounded on $\mathcal{B}_k \times[0,1]$, and $s_k^{(0)}(\cdot, \cdot)$ is bounded away from zero on $\mathcal{B}_k \times[0,1]$.  The matrix
$$
\begin{aligned}
I_{l,k}\left(\boldsymbol{\beta}_{0 k}\right)= & \int_0^1\left\{\frac{s_k^{(2)}\left(\boldsymbol{\beta}_{0 k},
t\right)}{s_k^{(0)}\left(\boldsymbol{\beta}_{0 k}, t\right)}-\left(\frac{s_k^{(1)}\left(\boldsymbol{\beta}_{0 k},
t\right)}{s_k^{(0)}\left(\boldsymbol{\beta}_{0 k}, t\right)}\right)\left(\frac{s_k^{(1)}\left(\boldsymbol{\beta}_{0 k},
t\right)}{s_k^{(0)}\left(\boldsymbol{\beta}_{0 k}, t\right)}\right)^{\mathrm{T}}\right\} \\
& \times s_k^{(0)}\left(\boldsymbol{\beta}_{0 k}, t\right) h_{0 k}(t) d t
\end{aligned}
$$
is finite positive definite.

\end{enumerate}

Regularity condition A guarantees that the baseline cumulative hazard is finite, and regularity condition B ensures that the law of large
numbers can be satisfied. The regularity conditions A-E ensure that the local asymptotic quadratic (LAQ) property holds for the partial
likelihood function, allowing the  asymptotic normality of the maximum partial likelihood estimates to be valid\cite{andersen1982cox}.

\subsection{Consistency and Asymptotic Normality}

For the sake of concise description and proof, let Z be the random variable of the sample labels in the experimental group, Z=1 for the
responders, and Z=2 for the non-responders. The sample ($\boldsymbol{x},Z$) is generated from $f(t_i, \delta_i,Z_i \mid \boldsymbol{x})$, and
is a labeled sample if the value of Z is known, or an unlabeled sample if the value of Z is unknown. For the density function with labels then
it can be written as $f(t_i, \delta_i,Z_i=k \mid \boldsymbol{x})=\pi_kf_k(t_i, \delta_i \mid \boldsymbol{x}), i\in R_l, k=1,2$. For the
density function without labels then it can be written as $f(t_i, \delta_i \mid \boldsymbol{x})=\pi_1f_1(t_i, \delta_i \mid
\boldsymbol{x})+\pi_2f_2(t_i, \delta_i \mid \boldsymbol{x}), i\in R _u$ .

Then we can rewrite the likelihood function based on the observed data as
\begin{equation}
L(\boldsymbol{\Pi}, \boldsymbol{H}, \boldsymbol{\beta} ; \boldsymbol{Y}, \boldsymbol{\Delta} \mid \boldsymbol{X})=\prod_{i\in R_l} f(t_i, \delta_i,Z_i \mid \boldsymbol{x}) \prod_{i\in R_u}f(t_i,
\delta_i \mid \boldsymbol{x}).
\end{equation}

To investigate the asymptotic behavior of the estimator that maximizes (3.1), assume that the number of samples for the labeled data
(experimental group) and unlabeled data (control group) are $n_l$ and $n_u$, respectively, and assume that when $n=n_l+n_u \rightarrow
\infty$, $\lambda=\frac{n_l}{n_l+n_u}$ is the probability of a sample belonging to a labeled sample and $\lambda$ is a known constant
independent of the sample. We know that, for sufficiently large n, we can approximate that for labeled samples $n_l= n\lambda \rightarrow
\infty$ and for unlabeled samples $n_u= n(1-\lambda) \rightarrow \infty$. In this way, the asymptotic properties of the maximum likelihood
estimator $\boldsymbol{\beta}$ in (10) can be obtained\cite{dillon2010asymptotic}.

To

\begin{lemma}\label{lemma:lemma1}
    Cozman and Cohen (2003) Assume that$\quad \left(t_i, \delta_i, \boldsymbol{x}_i\right), i=1,2, \ldots, n \quad$ are
independently and identically distributed, $t_i$ and $\delta_i$ are conditionally independent given $\boldsymbol{x}_i$, and suppose
$\lambda=\frac{n_l}{n_l+n_u}$ is the probability of belonging to a labeled sample and is constant, then the maximum likelihood estimate on
$\boldsymbol{\Pi}, \boldsymbol{H}, \boldsymbol{\beta}$ in (3.1) is equivalent to obtaining
    \begin{align}
        \underset{\boldsymbol{\Pi}, \boldsymbol{H}, \boldsymbol{\beta}}{\operatorname{argmax}}\left\lbrace\lambda E_{f\left(t_i, \delta_i, {Z}_i \mid
        \boldsymbol{x}\right)}\left[\log  f\left(t_i, \delta_i, Z_i \mid \boldsymbol{x} \right)\right]+(1-\lambda)E_{f\left(t_i, \delta_i, Z_i
        \mid \boldsymbol{x}\right)}\left[\log f\left(t_i, \delta_i\mid \boldsymbol{x}\right)\right]\right\rbrace.
    \end{align}
\end{lemma}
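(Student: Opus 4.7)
The plan is to reduce the lemma to an application of the strong law of large numbers followed by a standard argmax‑continuity argument. Starting from (10), I would take logarithms and normalize by $n$ to obtain
\begin{equation*}
\frac{1}{n}\log L(\boldsymbol{\Pi},\boldsymbol{H},\boldsymbol{\beta}) = \frac{n_l}{n}\cdot\frac{1}{n_l}\sum_{i\in R_l}\log f(t_i,\delta_i,Z_i\mid\boldsymbol{x}_i) + \frac{n_u}{n}\cdot\frac{1}{n_u}\sum_{i\in R_u}\log f(t_i,\delta_i\mid\boldsymbol{x}_i).
\end{equation*}
Under the assumption $\lambda=n_l/n$ is a fixed constant, the prefactors equal $\lambda$ and $1-\lambda$, and since the samples in each of $R_l$ and $R_u$ are i.i.d.\ and $n_l,n_u\to\infty$ with $n$, the strong law of large numbers applies to each empirical average separately.

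The first empirical average converges almost surely to $E_{f(t,\delta,Z\mid\boldsymbol{x})}[\log f(t,\delta,Z\mid\boldsymbol{x})]$. For the second, I would note that $\log f(t,\delta\mid\boldsymbol{x})$ does not depend on $Z$, so the expectation against the marginal $f(t,\delta\mid\boldsymbol{x})$ agrees with the expectation against the joint $f(t,\delta,Z\mid\boldsymbol{x})$, yielding the form written in (11). Hence
\begin{equation*}
\frac{1}{n}\log L \;\xrightarrow{\text{a.s.}}\; \lambda E_{f(t,\delta,Z\mid\boldsymbol{x})}[\log f(t,\delta,Z\mid\boldsymbol{x})] + (1-\lambda) E_{f(t,\delta,Z\mid\boldsymbol{x})}[\log f(t,\delta\mid\boldsymbol{x})].
\end{equation*}
This identifies the population objective whose maximizer is the desired limit.

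To pass from convergence of the objective to convergence of the argmax, I would invoke a uniform law of large numbers together with the regularity conditions A--E already stated in Section~3.1. Conditions C--E guarantee that the relevant integrands and their derivatives are dominated by integrable envelopes on a neighborhood $\mathcal{B}$ of $\boldsymbol{\beta}_0$, and that the population objective is continuous and (locally) strictly concave in $\boldsymbol{\beta}$ because the associated information matrices $I_{u,k}$ and $I_{l,k}$ are finite and positive definite. Uniform convergence on compact sets in $\mathcal{B}$, combined with the uniqueness of the maximizer at $\boldsymbol{\beta}_0$, then delivers the equivalence asserted: the argmax of the normalized observed log‑likelihood converges to the argmax of the population expectation in (11), as $n\to\infty$.

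The main obstacle is the uniformity step rather than the pointwise limit. The pointwise SLLN is immediate, but one must verify that the convergence is uniform over $\boldsymbol{\Pi}$, $\boldsymbol{H}$, and $\boldsymbol{\beta}$ to legitimately interchange limit and argmax; this is where conditions C--E enter crucially, since they supply the dominating functions and compactness needed to apply, for instance, a standard Glivenko--Cantelli‑type result. I would handle the nonparametric baseline $\boldsymbol{H}$ by restricting to cumulative hazards that are uniformly bounded on $[0,1]$ (guaranteed by condition A), so that the relevant function class remains Donsker/uniformly bounded. Once uniform convergence is in hand, the equivalence in (11) follows directly, and the result can later be used in Section~3.2 to derive consistency and asymptotic normality of the SPIRLS--EM estimators.
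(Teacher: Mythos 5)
Your proposal is correct in substance but follows a genuinely different route from the paper. The paper's proof never touches the empirical log-likelihood: it introduces an augmented label variable $\tilde{Z}\in\{0,1,2\}$ with $P(\tilde{Z}=0)=1-\lambda$ for the unlabeled samples, writes the joint density in the product form $(\lambda f(t,\delta,Z=z\mid\boldsymbol{x}))^{I\{\tilde{Z}\neq 0\}}((1-\lambda)f(t,\delta\mid\boldsymbol{x}))^{I\{\tilde{Z}=0\}}$, and then decomposes $E_{f(t,\delta,\tilde{Z}\mid\boldsymbol{x})}[\log f(t,\delta,\tilde{Z}\mid\boldsymbol{x})]$ via the indicators into the convex combination (11), discarding the $\log\lambda$ and $\log(1-\lambda)$ terms because they do not depend on the parameters. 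That is a purely population-level identity in the spirit of Cozman and Cohen. You instead normalize the observed log-likelihood (10) by $n$, use the fact that the labeled fraction is exactly $\lambda$, apply the SLLN to the labeled and unlabeled sums separately, and then invoke a uniform law of large numbers plus argmax continuity under conditions A--E. Your route buys an explicit account of the empirical-to-population passage -- which is where the actual technical difficulty lies, particularly the uniformity over the infinite-dimensional baseline hazard $\boldsymbol{H}$, a point the paper simply does not address -- at the cost of needing envelope and compactness conditions that the lemma as stated does not supply (conditions A--E are only introduced for the consistency theorem, and condition A bounds the cumulative baseline hazard but does not by itself give a Glivenko--Cantelli class over all admissible step-function hazards). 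The paper's route is shorter because it only ever asserts the decomposition of the population objective and defers all convergence questions to Theorems 1 and 2. One small observation in your favor: your remark that $\log f(t,\delta\mid\boldsymbol{x})$ does not depend on $Z$, so the expectation against the marginal equals the expectation against the joint, is exactly the identification the paper makes implicitly when it writes both expectations in (11) against $f(t_i,\delta_i,Z_i\mid\boldsymbol{x})$.
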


Lemma \ref{lemma:lemma1} was proposed by Cozman and Cohen (2003)\cite{cozman2003semi}, and the resulting expression (11) is the objective
function we need, demonstrating that semi-supervised learning may be seen as a "convex" combination of unsupervised learning $E_{f\left(t_i,
\delta_i, Z_i \mid \boldsymbol{x}\right)}\left[\log f\left(t_i, \delta_i\mid \boldsymbol{x}\right)\right]\rbrace$ and supervised learning
$E_{f\left(t_i, \delta_i, {Z}_i \mid \boldsymbol{x}\right)}\left[\log  f\left(t_i, \delta_i, Z_i \mid \boldsymbol{x} \right)\right]$. The
asymptotic behavior of the labeled samples can be ensured by the work of Andersen and Gill (1982)\cite{andersen1982cox}. The unlabeled
samples, on the other hand, can be guaranteed by the work of You et al. (2018)\cite{you2018subtype}. When the asymptotic behavior of the
labeled and unlabeled samples is guaranteed,  the asymptotic nature of the finite mixture Cox model estimators then extends well to the
semi-supervised case \cite{dillon2010asymptotic}.

\begin{proof}

Denote $\tilde {Z}$ as Z "plus" the random variable of the unlabeled sample, i.e., $\tilde {Z}$ is a random variable with three values, where
$\tilde {Z}=0$ represents the value represented by the unlabeled samples (control group), $\tilde {Z}\neq0$ represents the subgroup
represented by the labeled samples ($\tilde {Z}=1$ responders group and $\tilde {Z}=2$ non-responders group). Then $P(\tilde
{Z}=0)=1-\lambda$, $P(\tilde {Z}\neq0)=\lambda$.

Thus $$f(t, \delta,\tilde {Z}=z \mid \boldsymbol{x})=(\lambda f(t, \delta,Z=z \mid \boldsymbol{x}))^{I_{\lbrace\tilde
{Z}\neq0\rbrace}(z)}((1-\lambda) f(t, \delta \mid \boldsymbol{x}))^{I_{\lbrace\tilde {Z}=0\rbrace}(z)}.$$

The estimates of $\boldsymbol{\Pi}, \boldsymbol{H}, \boldsymbol{\beta}$ are
$$\underset{\boldsymbol{\Pi}, \boldsymbol{H}, \boldsymbol{\beta}}{\operatorname{argmax}} E_ {f\left(t, \delta, \tilde{Z} \mid \boldsymbol{x}\right)}\left(\log f\left(t,
\delta, \tilde{Z} \mid \boldsymbol{x}\right)\right)$$
$$\Downarrow$$
$$\underset{\boldsymbol{\Pi}, \boldsymbol{H}, \boldsymbol{\beta}}{\operatorname{argmax}} E_{f\left(t, \delta, \tilde{Z} \mid
\boldsymbol{x}\right)}\left[I_{\left\lbrace\tilde{Z} \neq 0\right\rbrace}\left(\tilde{Z}\right)\left(\log \lambda f\left(t, \delta, Z \mid
\boldsymbol{x}\right)\right)+I_{\left\lbrace\tilde{Z} = 0\right\rbrace}\left(\tilde{Z}\right)(\log (1-\lambda)f\left(t, \delta\mid
\boldsymbol{x}\right))\right].$$

As $\lambda$ does not depend on $\boldsymbol{\Pi}, \boldsymbol{H}, \boldsymbol{\beta}$, which is equivalent to maximizing

$$\begin{aligned} &E_{f\left(t, \delta, \tilde{Z} \mid \boldsymbol{x}\right)}\left[I_{\left\lbrace\tilde{Z} \neq
0\right\rbrace}\left(\tilde{Z}\right)\left(\log  f\left(t, \delta, Z \mid \boldsymbol{x}\right)\right)+I_{\left\lbrace\tilde{Z} =
0\right\rbrace}\left(\tilde{Z}\right)(\log f\left(t, \delta\mid \boldsymbol{x}\right))\right] \\
=&\lambda E_{f\left(t, \delta, \tilde{Z} \mid \boldsymbol{x}\right)}\log  f\left(t, \delta, Z \mid \boldsymbol{x},
\tilde{Z}\neq0\right)+(1-\lambda)E_{f\left(t, \delta, \tilde{Z} \mid \boldsymbol{x}\right)}\log f\left(t, \delta\mid
\boldsymbol{x},\tilde{Z}=0\right)\\
=&\lambda E_{f\left(t, \delta, {Z} \mid \boldsymbol{x}\right)}\log  f\left(t, \delta, Z \mid \boldsymbol{x}\right)+(1-\lambda)E_{f\left(t,
\delta, Z \mid \boldsymbol{x}\right)}\log f\left(t, \delta\mid \boldsymbol{x}\right),\\
\end{aligned}$$ where the second equal sign is because

$$f\left(t, \delta, Z \mid \boldsymbol{x}, \tilde{Z}\neq0\right)=  f\left(t, \delta, Z \mid \boldsymbol{x} \right),$$
$$ f\left(t, \delta \mid \boldsymbol{x}, \tilde{Z}=0\right)=f\left(t, \delta \mid \boldsymbol{x} \right).$$\end{proof}

\begin{lemma}\label{lemma:lemma2}
 Let $E$ be an open convex subset of $\mathbb{R}^p$, and let $F_1, F_2, \cdots$ be a sequence of random concave functions on $E$ such that
 $\forall x \in E$, $F_n(x) \xrightarrow{p} f(x)$ as $n \rightarrow \infty$ where $f$ is some real function on $E$, and $f$ has a unique
 maximum at $\hat{x} \in E$. Let $\hat{X}_n$ maximize $F_n$. Then $\hat{X}_n \xrightarrow{p} \hat{x}$ as $n \rightarrow \infty.$
\end{lemma}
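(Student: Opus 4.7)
The plan is to combine a classical concavity argument with the pointwise-to-uniform upgrade that is peculiar to convex/concave functions. I would organize the proof around the following reduction. Fix any $\epsilon>0$ small enough that the closed ball $\bar B := \{x : \|x-\hat x\|\leq\epsilon\}$ is contained in $E$. Because $\hat x$ is the unique maximizer of $f$ and $f$ is concave (hence continuous on the interior of $E$), the quantity $\delta := f(\hat x) - \max_{\|y-\hat x\|=\epsilon} f(y)$ is strictly positive. The goal becomes: show that $P(\|\hat X_n - \hat x\|>\epsilon)\to 0$.

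The first key step is to upgrade the hypothesized pointwise convergence in probability, $F_n(x)\xrightarrow{p}f(x)$ for every $x\in E$, to uniform convergence in probability on $\bar B$: that is,
\begin{equation*}
\sup_{x\in\bar B}|F_n(x)-f(x)| \xrightarrow{p} 0.
\end{equation*}
For this I would use the subsequence principle together with the classical deterministic fact that a sequence of concave functions on an open convex set that converges pointwise to a finite function must converge uniformly on compact subsets (Rockafellar's theorem). Given any subsequence, pick a countable dense subset $D\subset E$ and extract a further subsequence along which $F_n(x)\to f(x)$ almost surely for every $x\in D$ simultaneously. On that probability-one event, Rockafellar's theorem produces uniform convergence on $\bar B$ almost surely, which forces uniform convergence in probability along the original sequence.

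The second key step is the concavity trick that localizes the argmax. Suppose $\|\hat X_n-\hat x\|>\epsilon$. Let $t_n = \epsilon/\|\hat X_n - \hat x\|\in(0,1)$ and set $y_n = (1-t_n)\hat x + t_n\hat X_n$, so $y_n$ lies on the boundary $\{\|y-\hat x\|=\epsilon\}$. Concavity of $F_n$ and the optimality $F_n(\hat X_n)\geq F_n(\hat x)$ give
\begin{equation*}
F_n(y_n) \geq (1-t_n)F_n(\hat x) + t_n F_n(\hat X_n) \geq F_n(\hat x).
\end{equation*}
Combining with uniform convergence on $\bar B$ with tolerance $\delta/3$, on an event of probability tending to one we have both $F_n(\hat x)\geq f(\hat x)-\delta/3$ and $F_n(y_n)\leq f(y_n)+\delta/3\leq f(\hat x)-\delta+\delta/3$. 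The inequality $F_n(y_n)\geq F_n(\hat x)$ then yields $f(\hat x)-\delta+\delta/3\geq f(\hat x)-\delta/3$, i.e. $\delta\leq 2\delta/3$, a contradiction. Hence $P(\|\hat X_n-\hat x\|>\epsilon)\to 0$, which is exactly the desired convergence in probability.

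The main obstacle is really the first step: pointwise in-probability convergence does not immediately yield uniform in-probability convergence, and one needs to exploit concavity. The subsequence argument is the cleanest route, but one must be careful to handle the countable dense set $D$ in a single null set so that Rockafellar's deterministic theorem applies pathwise. Once that is in place, the concavity sandwich and the uniqueness of $\hat x$ finish the argument quickly, and no further assumption on $F_n$ (such as equicontinuity) is needed.
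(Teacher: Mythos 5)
Your proof is correct and follows essentially the same route as the source the paper relies on: the paper does not prove this lemma itself but defers to Appendix II of Andersen and Gill (1982), whose argument is precisely your combination of Rockafellar's pointwise-to-uniform upgrade for concave functions (via the almost-sure subsequence principle on a countable dense set) with the concavity localization of the argmax. No gaps; the only points needing care — measurability of the supremum over $\bar B$ (handled since concave functions are continuous, so the supremum over a countable dense subset suffices) and the strict positivity of $\delta$ (from uniqueness of $\hat{x}$ and continuity of the concave limit $f$) — are ones you have already addressed.
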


Lemma \ref{lemma:lemma2} is important because it ensures that $\boldsymbol{\hat\beta}\xrightarrow{p} \boldsymbol{\beta}_0$ can hold. The
details and proof can be found in Appendix II of Anderson and Gill (1982)\cite{andersen1982cox}.

\begin{theorem}[Consistency]\label{thm:thm2}
    Under the conditions of Lemma \ref{lemma:lemma1} and the regularity conditions $A-E$, then $\boldsymbol{\hat\beta}\xrightarrow{p}
    \boldsymbol{\beta}_0$.
\end{theorem}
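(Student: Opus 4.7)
The plan is to combine Lemma~\ref{lemma:lemma1} with Lemma~\ref{lemma:lemma2} in the standard argmax-continuity fashion, reducing the consistency of $\boldsymbol{\hat\beta}$ to two ingredients: (i) pointwise convergence in probability of the (profiled) sample objective to a population objective, and (ii) a unique maximizer of that population objective at $\boldsymbol{\beta}_0$. Specifically, after profiling out the nonparametric baseline hazard $\boldsymbol{H}$ via the Breslow-type updates (5)--(6), let $\ell_n(\boldsymbol{\beta})$ denote the profile log-likelihood corresponding to (10), normalised by $n$. By the weak law of large numbers, applied separately to the two sums over $R_l$ and $R_u$,
\begin{equation*}
\ell_n(\boldsymbol{\beta}) \xrightarrow{p} \lambda\, E_{f(t,\delta,Z\mid\boldsymbol{x})}[\log f(t,\delta,Z\mid\boldsymbol{x})] + (1-\lambda)\, E_{f(t,\delta,Z\mid\boldsymbol{x})}[\log f(t,\delta\mid\boldsymbol{x})] \;=: \; \ell(\boldsymbol{\beta}),
\end{equation*}
for each $\boldsymbol{\beta}$ in an open convex neighbourhood $\mathcal{B}$ of $\boldsymbol{\beta}_0$, where regularity conditions A--C ensure the integrability needed for the LLN and for passing the expectation under profiling.

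For the labeled contribution indexed by $R_l$, I would invoke Andersen and Gill~\cite{andersen1982cox}: conditions A, B, C, and E imply that the Cox partial log-likelihood built from the labeled samples is (asymptotically) locally concave in each $\boldsymbol{\beta}_k$ and that its limiting Kullback--Leibler-type objective is uniquely maximised at $\boldsymbol{\beta}_{0k}$, with curvature given by the information matrix $I_{l,k}(\boldsymbol{\beta}_{0k})$. For the unlabeled contribution indexed by $R_u$, the corresponding result of You et al.~\cite{you2018subtype} applies under conditions A, B, C, and D for the finite mixture Cox model, with curvature $I_{u,k}(\boldsymbol{\beta}_{0k})$. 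Summing gives a population objective whose Hessian at $\boldsymbol{\beta}_0$ is
\begin{equation*}
-\bigl\{\lambda\, I_{l,k}(\boldsymbol{\beta}_{0k}) + (1-\lambda)\, I_{u,k}(\boldsymbol{\beta}_{0k})\bigr\}_{k=1,2},
\end{equation*}
which is negative definite since each block is positive definite by D and E. Hence $\ell(\boldsymbol{\beta})$ is strictly concave on $\mathcal{B}$ with a unique maximiser at $\boldsymbol{\beta}_0$.

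The final step is to verify the hypotheses of Lemma~\ref{lemma:lemma2}: $\ell_n$ must be (eventually) concave on $\mathcal{B}$, and $\ell_n(\boldsymbol{\beta})\xrightarrow{p}\ell(\boldsymbol{\beta})$ pointwise. Concavity of the labeled piece is the classical property of the Cox partial log-likelihood. The unlabeled piece is, after profiling, a weighted log-partial likelihood with weights $u_{ik}$ bounded in $[0,1]$, which again has a concave local structure in a neighbourhood of $\boldsymbol{\beta}_0$ (using D and the boundedness of $w_k^{(0)}, w_k^{(1)}, w_k^{(2)}$). Lemma~\ref{lemma:lemma2} then delivers $\boldsymbol{\hat\beta}\xrightarrow{p}\boldsymbol{\beta}_0$.

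The main obstacle will be establishing the concavity and the identifiability of the unlabeled contribution. The mixture log-likelihood $\log\{\pi_1 f_1+\pi_2 f_2\}$ is not globally concave in $\boldsymbol{\beta}$; one must restrict to a neighbourhood $\mathcal{B}$ of the truth and lean on condition~D so that the profiled weighted log-partial likelihood inherits the local quadratic behaviour of a single Cox model with strictly positive-definite Hessian. Identifiability between the two subgroups is guaranteed by the semi-supervised design: the labeled samples from $R_l$ anchor the labelling of components, ruling out the standard label-switching ambiguity that plagues unsupervised mixtures, so the unique maximiser really is $\boldsymbol{\beta}_0$ rather than any permutation of its blocks.
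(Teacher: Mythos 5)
Your proposal follows essentially the same route as the paper's own proof: both reduce the problem via Lemma~\ref{lemma:lemma1} to a $\lambda$-weighted combination of the labeled and unlabeled contributions, invoke Andersen and Gill (1982) for the labeled piece and You et al. (2018) for the unlabeled piece, identify the limiting curvature as the positive-definite combination $\lambda I_{l,k}(\boldsymbol{\beta}_{0k})+(1-\lambda)I_{u,k}(\boldsymbol{\beta}_{0k})$ so that the population objective has a unique maximizer at $\boldsymbol{\beta}_0$, and conclude with the concave-argmax Lemma~\ref{lemma:lemma2}. Your explicit caveats about the non-concavity of the mixture term and about label-switching being ruled out by the labeled samples are points the paper's proof glosses over, but the substance and the key ingredients of the argument are the same.
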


\begin{proof}

From Section\ref{sec:dual}, subsection 2, it can be obtained that
$$\begin{aligned}
\log_c L(\boldsymbol{\Pi}, \boldsymbol{H}, \boldsymbol{\beta} ; \boldsymbol{Y}, \boldsymbol{\Delta} \mid \boldsymbol{X})
&=\sum _{i\in R_l}\sum _{k=1}^{2}z_{ik}\log\pi_k+\sum _{i\in R_u}\sum _{k=1}^{2}u_{ik}\log\pi_k\\
+\sum _{i\in R_u}\sum _{k=1}^{2}u_{ik}&\log f_k\left(t_i, \delta_i \mid \boldsymbol{x}_i\right)+\sum _{i\in R_l}\sum _{k=1}^{2}z_{ik}\log
f_k\left(t_i, \delta_i \mid \boldsymbol{x}_i\right).
\end{aligned}$$

Since we only consider $\boldsymbol{\beta}$, then by removing the first two terms we can obtain $$\begin{aligned}\tilde{l}_{c}(
\boldsymbol{\beta} ; \boldsymbol{Y}, \boldsymbol{\Delta} \mid \boldsymbol{X},\boldsymbol{U},Z)&=\sum _{i\in R_u}\sum _{k=1}^{2}u_{ik}\log f_k\left(t_i, \delta_i \mid \boldsymbol{x}_i\right)+\sum
_{i\in R_l}\sum _{k=1}^{2}z_{ik}\log f_k\left(t_i, \delta_i \mid \boldsymbol{x}_i\right)\\
&=l_{c,u}( \boldsymbol{\beta} )+l_{c,l}( \boldsymbol{\beta} ),
\end{aligned}$$where$$\sum _{i\in R_u}\sum _{k=1}^{2}u_{ik}\log f_k\left(t_i, \delta_i \mid \boldsymbol{x}_i\right)=l_{c,u}(
\boldsymbol{\beta} ),$$
$$\sum _{i\in R_l}\sum _{k=1}^{2}z_{ik}\log f_k\left(t_i, \delta_i \mid \boldsymbol{x}_i\right)=l_{c,l}( \boldsymbol{\beta} ).$$

From the Lemma \ref{lemma:lemma1} we get

$$\underset{\boldsymbol{\beta}}{\operatorname{argmax}}E\left[\tilde{l}_{c}( \boldsymbol{\beta} ; \boldsymbol{Y}, \boldsymbol{\Delta} \mid \boldsymbol{X},\boldsymbol{U},Z)\right]$$
$$\Updownarrow$$
$$\underset{\boldsymbol{\beta}}{\operatorname{argmax}}\lbrace\lambda \cdot E\left(  l_{c,l}( \boldsymbol{\beta})\right)+(1-\lambda)\cdot
E\left( l_{c,u}( \boldsymbol{\beta} )\right)\rbrace.$$

For $l_{c,u}( \boldsymbol{\beta})- l_{c,u}( \boldsymbol{\beta}_0)$, it follows from the proof of Theorem I of You et al.
(2018)\cite{you2018subtype} that, under the regularity conditions A-D, it is sufficient to show that when $n_u$ is sufficiently large, for any
given $\epsilon_1>0$, there exists a constant $M_1>0$ such that $$|\frac{1}{n_u}(l_{c,u}(\boldsymbol{\beta}_0+n_u^{-1 / 2}
v_1)-l_{c,u}\left(\boldsymbol{\beta}_0 \right))|<\frac{1}{2(1-\lambda)}\cdot\epsilon_1,$$ where for unlabeled data, there exists an estimator
in the ball of  $\left\{\boldsymbol{\beta}_0+n_u^{-1 / 2} \boldsymbol{v_1 }:\|\boldsymbol{v_1}\| \leq M_1\right\}$ that make the partial
likelihood function locally maximal.

Similarly for $l_{c,l}( \boldsymbol{\beta} )-l_{c,l}( \boldsymbol{\beta}_0 )$, the proof of Lemma 3.1 of Andersen and Gill
(1982)\cite{andersen1982cox} yields, under the regularity conditions A, B, C, E, it is sufficient to show that when $n_l$ is large enough, for
any given $\epsilon_1>0$, there exists a constant $M_2>0$ such that $$|\frac{1}{n_l}(l_{c,l}(\boldsymbol{\beta}_0+n_l^{-1 / 2}
v_2)-l_{c,l}\left( \boldsymbol{\beta}_0 \right))|<\frac{1}{2\lambda}\epsilon_1,$$ where for the labeled data, there exists an estimator in the
ball of $\left\{\boldsymbol{\beta}_0+n_l^{-1 / 2} \boldsymbol{v_2}:\|\boldsymbol{v_2}\| \leq M_2\right\}$ that make the partial likelihood
function locally maximal.

Then there exists a constant M > 0 such that $$\left\{\boldsymbol{\beta}_0+n^{-1 / 2} \boldsymbol{v}:\|\boldsymbol{v}\| \leq M\right\} \subset
\left\{\boldsymbol{\beta}_0+n_u^{-1 / 2} \boldsymbol{v_1}:\|\boldsymbol{v_1}\| \leq M_1\right\} \cap \left\{\boldsymbol{\beta}_0+n_l^{-1 / 2}
\boldsymbol{v_2}:\|\boldsymbol{v_2}\| \leq M_2\right\}.$$

So, there exists a constant $M>0$ for any given $\epsilon_1>0$ such that
$$\lbrace\frac{1}{n_l}\lbrace\lambda \cdot l_{c,l}(\boldsymbol{\beta}_0+n^{-1 / 2} v )-\lambda \cdot l_{c,l}( \boldsymbol{\beta}_0
)\rbrace+\frac{1}{n_u}\lbrace(1-\lambda)\cdot l_{c,u}( \boldsymbol{\beta}_0+n^{-1 / 2} v)-(1-\lambda)\cdot l_{c,u}( \boldsymbol{\beta}_0
)\rbrace\rbrace<\epsilon_1.$$

This means that there exists an estimate $\boldsymbol{\hat\beta}$ in the ball
of $\left\{\boldsymbol{\beta}_0+n^{-1 / 2} \boldsymbol{v}:\|\boldsymbol{v}\| \leq M\right\}$  such that $\tilde{l}_{c}( \boldsymbol{\beta} ;
\boldsymbol{Y}, \boldsymbol{\Delta} \mid \boldsymbol{X},\boldsymbol{U},Z)$ is a local maximum.

Let $\tilde{l}_{c}( \boldsymbol{\beta} ; \boldsymbol{Y}, \boldsymbol{\Delta} \mid \boldsymbol{X},\boldsymbol{U},Z)=l_n(\boldsymbol{\beta})$, $\nabla l_n(\boldsymbol{\beta})=\partial
l_n(\boldsymbol{\beta}) / \partial \boldsymbol{\beta}$,  $\nabla^2 l_n(\boldsymbol{\beta})=\partial^2 l_n(\boldsymbol{\beta}) /\left\{\partial
\boldsymbol{\beta} \partial \boldsymbol{\beta}^{\mathrm{T}}\right\}.$

Let $I_u\left( \boldsymbol{\beta}_0\right)=\left[\begin{array}{lll}I_{u,1 }\left(\boldsymbol{\beta}_{01}\right) & & \\  \\ & & I_{u,
2}\left(\boldsymbol{\beta}_{0 2}\right)\end{array}\right].$

Let $I_l\left(\boldsymbol{\beta}_0\right)=\left[\begin{array}{lll}I_{l,1 }\left(\boldsymbol{\beta}_{01}\right) & & \\  \\ & & I_{l,
2}\left(\boldsymbol{\beta}_{0 2}\right)\end{array}\right].$

Let $I\left(\boldsymbol{\beta}_{0 }\right)=\lambda I_{u}(\boldsymbol{\beta}_{0 })+(1-\lambda)I_{l}(\boldsymbol{\beta}_{0})$,

where $I_{u,k}$ is the Fisher information matrix from the k-th subgroup of unlabeled data, $I_{l,k}$ is the Fisher information matrix from the
k-th subgroup of labeled data, k=1,2.

When n is large, we can obtain\begin{align*}
    \frac{\nabla l_n\left(\boldsymbol{\beta_0}\right)^{\mathrm{T}}}{n}& =\vec{\mathbf{0}}, \\
    \frac{\nabla^2 l_n\left(\boldsymbol{\beta_0}\right)}{n} &=-\left(\lambda \cdot E(\nabla^2 l_{c,l}( \boldsymbol{\beta}_0 ))+(1-\lambda)
    \cdot E(\nabla^2 l_{c,u}( \boldsymbol{\beta}_0 ))+o_p(1)\right)=-\left(I\left(\boldsymbol{\beta}_{0 }\right)+o_p(1)\right).
\end{align*}

The second-order Taylor expansion gives
$$
\begin{aligned}
& \frac{1}{n}\left\{l_n\left(\boldsymbol{\beta_0}+n^{-1 / 2} v\right)-l_n\left(\boldsymbol{\beta_0}\right)\right\} \\
= & \frac{1}{n} \nabla l_n\left(\boldsymbol{\beta_0}\right)^{\mathrm{T}} n^{-1 / 2} v+\frac{1}{2 n} v^{\mathrm{T}}\left\{\nabla^2
l_n\left(\boldsymbol{\beta_0}\right) / n\right\} v+\frac{1}{n} v^{\mathrm{T}} o_p(1) v\\
=& \frac{1}{n} \nabla l_n\left(\boldsymbol{\beta_0}\right)^{\mathrm{T}} n^{-1 / 2} v-\frac{1}{2 n}
v^{\mathrm{T}}\left\{I\left(\boldsymbol{\beta}_{0 }\right)+o_p(1)\right\} v+\frac{1}{n} v^{\mathrm{T}} o_p(1) v.
\end{aligned}
$$

From the regularity conditions D, E, we can get that $I\left(\boldsymbol{\beta}_{0 }\right)=\lambda I_{u}(\boldsymbol{\beta}_{0
})+(1-\lambda)I_{l}(\boldsymbol{\beta}_{0})$ is a finite positive definite matrix, and $I\left(\boldsymbol{\beta}\right)$ is a semi-positive
definite matrix. Then we have $v^{\mathrm{T}}\left\{I\left(\boldsymbol{\beta_0}\right)\right\} v>0$. Therefore
$l_n\left(\boldsymbol{\beta_0}+n^{-1 / 2} v\right)-l_n\left(\boldsymbol{\beta_0}\right)<0$, and when n is large enough,
$l_n\left(\boldsymbol{\beta_0}+n^{-1 / 2} v\right)$ converges in probability to $l_n\left(\boldsymbol{\beta_0} \right)$. These mean
$l_n(\boldsymbol{\beta})$ has a unique maximum value at $\boldsymbol{\beta}$=$\boldsymbol{\beta}_0$. In addition, $\boldsymbol{\hat{\beta}}$
is the estimator that maximize $l_n(\boldsymbol{\beta})$ in the ball of $\left\{\boldsymbol{\beta}_0+n^{-1 / 2}
\boldsymbol{v}:\|\boldsymbol{v}\| \leq M\right\}$, which follows from the Lemma \ref{lemma:lemma2}, we can obtain
$\boldsymbol{\hat\beta}\xrightarrow{p}\boldsymbol{\beta}_0$.\\
\end{proof}

\begin{theorem}[Asymptotic Normality]
\label{thm:thm3}
Under the conditions of the Lemma \ref{lemma:lemma1} and the Theorem \ref{thm:thm2}, then
$$\sqrt{n}\left(\boldsymbol{\hat\beta}_{k}-\boldsymbol{\beta}_{0k}\right) \xrightarrow{d} N\left(0, \left(\lambda
I_{u,k}(\boldsymbol{\beta}_{0 k})+(1-\lambda)I_{l,k}(\boldsymbol{\beta}_{0 k})\right)^{-1}\right),$$ where $I_{u,k}$ is the Fisher information
matrix from the k-th subgroup of unlabeled data, $I_{l,k}$ is the Fisher information matrix from the k-th subgroup of labeled data, k=1, 2.
\end{theorem}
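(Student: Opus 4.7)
The plan is to convert the consistency result of Theorem \ref{thm:thm2} into asymptotic normality by a standard one-term Taylor expansion of the score, and then to combine the central limit results already available for the labeled and unlabeled pieces via Slutsky. Writing $l_n(\boldsymbol{\beta})=\tilde l_c(\boldsymbol{\beta};\boldsymbol{Y},\boldsymbol{\Delta}\mid \boldsymbol{X},\boldsymbol{U},Z)=l_{c,l}(\boldsymbol{\beta})+l_{c,u}(\boldsymbol{\beta})$, the first-order condition $\nabla l_n(\boldsymbol{\hat\beta})=\vec{\mathbf 0}$ together with a mean-value expansion gives
\begin{equation*}
\sqrt n\bigl(\boldsymbol{\hat\beta}-\boldsymbol{\beta}_0\bigr)
= -\left[\frac{\nabla^2 l_n(\tilde{\boldsymbol{\beta}})}{n}\right]^{-1}\frac{\nabla l_n(\boldsymbol{\beta}_0)}{\sqrt n},
\end{equation*}
for some $\tilde{\boldsymbol{\beta}}$ on the segment between $\boldsymbol{\hat\beta}$ and $\boldsymbol{\beta}_0$. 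By Theorem \ref{thm:thm2}, $\tilde{\boldsymbol{\beta}}\xrightarrow{p}\boldsymbol{\beta}_0$, so everything will reduce to (i) a law-of-large-numbers statement for the Hessian and (ii) a central limit statement for the score at the true parameter.

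For the Hessian, the computation already used in the proof of Theorem \ref{thm:thm2} shows that $\nabla^2 l_n(\boldsymbol{\beta}_0)/n$ converges in probability to $-I(\boldsymbol{\beta}_0)$ with $I(\boldsymbol{\beta}_0)=\lambda I_u(\boldsymbol{\beta}_0)+(1-\lambda)I_l(\boldsymbol{\beta}_0)$; regularity conditions C, D, and E give uniform convergence in a neighborhood of $\boldsymbol{\beta}_0$, which combined with the consistency of $\tilde{\boldsymbol{\beta}}$ yields $\nabla^2 l_n(\tilde{\boldsymbol{\beta}})/n\xrightarrow{p}-I(\boldsymbol{\beta}_0)$. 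For the score, I would split
\begin{equation*}
\frac{\nabla l_n(\boldsymbol{\beta}_0)}{\sqrt n}
=\sqrt{\lambda}\,\frac{\nabla l_{c,l}(\boldsymbol{\beta}_0)}{\sqrt{n_l}}
+\sqrt{1-\lambda}\,\frac{\nabla l_{c,u}(\boldsymbol{\beta}_0)}{\sqrt{n_u}},
\end{equation*}
since $n_l=n\lambda$ and $n_u=n(1-\lambda)$. Andersen and Gill (1982) provide the martingale CLT for the labeled partial-likelihood score, giving $n_l^{-1/2}\nabla l_{c,l}(\boldsymbol{\beta}_0)\xrightarrow{d} N(0,I_l(\boldsymbol{\beta}_0))$ under conditions A, B, C, E, and the argument of You et al.~(2018) delivers the analogous statement $n_u^{-1/2}\nabla l_{c,u}(\boldsymbol{\beta}_0)\xrightarrow{d} N(0,I_u(\boldsymbol{\beta}_0))$ under A, B, C, D. Because the labeled and unlabeled subsamples are independent, the two score pieces are independent, and so their weighted sum converges jointly to $N(0,\lambda I_l(\boldsymbol{\beta}_0)+(1-\lambda)I_u(\boldsymbol{\beta}_0))$.

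Plugging these two limits into the Taylor identity above and applying Slutsky's theorem yields
\begin{equation*}
\sqrt n\bigl(\boldsymbol{\hat\beta}-\boldsymbol{\beta}_0\bigr)\xrightarrow{d} N\!\left(0,\,I(\boldsymbol{\beta}_0)^{-1}\bigl(\lambda I_l(\boldsymbol{\beta}_0)+(1-\lambda)I_u(\boldsymbol{\beta}_0)\bigr)I(\boldsymbol{\beta}_0)^{-1}\right),
\end{equation*}
and identifying the sandwich factor with $I(\boldsymbol{\beta}_0)$ collapses the expression to the claimed $I(\boldsymbol{\beta}_0)^{-1}$. The coordinate-wise statement for a single $\boldsymbol{\beta}_{0k}$ then follows because the information matrices $I_u$ and $I_l$ are block diagonal in the two subgroup components, as can be read off from the quadratic forms in regularity conditions D and E (there is no cross-term between $\boldsymbol{\beta}_1$ and $\boldsymbol{\beta}_2$ in either $l_{c,l}$ or $l_{c,u}$), so the inverse inherits the block-diagonal structure and one reads off the $k$-th block as $(\lambda I_{u,k}+(1-\lambda)I_{l,k})^{-1}$.

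The main obstacle is the CLT for the unlabeled score $n_u^{-1/2}\nabla l_{c,u}(\boldsymbol{\beta}_0)$: the summands are not a standard counting-process martingale because the weights $u_{ik}$ are themselves estimated, so the expansion of the profile likelihood must be handled carefully to show that the first-order contribution of the $u_{ik}$-estimation error is asymptotically negligible and that only the contribution through $\boldsymbol{\beta}$ survives. Once that is in hand (by invoking the You et al.~(2018) derivation at $\boldsymbol{\beta}_0$), the remainder of the argument is a routine application of Slutsky and of the continuous mapping theorem.
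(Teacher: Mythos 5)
Your proposal is correct and follows essentially the same route as the paper's own proof: a mean-value expansion of the score at $\boldsymbol{\hat\beta}$, convergence of the scaled Hessian to $-\bigl(\lambda I_l(\boldsymbol{\beta}_0)+(1-\lambda)I_u(\boldsymbol{\beta}_0)\bigr)$, a CLT for the score obtained by splitting it into the labeled part (Andersen--Gill) and the unlabeled part (You et al.), and Slutsky's theorem, with the block-diagonal structure of the information matrices giving the componentwise statement. The only cosmetic difference is that you weight the two score pieces deterministically by $\sqrt{\lambda}$ and $\sqrt{1-\lambda}$ whereas the paper encodes the labeled/unlabeled split through Bernoulli$(\lambda)$ indicators $S^{(i)}$ before invoking the CLT; your explicit remark that the estimated weights $u_{ik}$ are the delicate point in the unlabeled-score CLT is a point the paper leaves implicit.
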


\begin{proof}
Let $l_{c,l}( \boldsymbol{\beta})+l_{c,u}( \boldsymbol{\beta} )=l_n(\boldsymbol{\beta})$, $\nabla l_n(\boldsymbol{\beta})=\partial
l_n(\boldsymbol{\beta}) / \partial \boldsymbol{\beta}$ and $\nabla^2 l_n(\boldsymbol{\beta})=\partial^2 l_n(\boldsymbol{\beta})
/\left\{\partial \boldsymbol{\beta} \partial \boldsymbol{\beta}^{\mathrm{T}}\right\}.$

Let$$I^{(1)}_u\left( \boldsymbol{\beta}_0\right)=\left[\begin{array}{lll}I_{u,1 }\left(\boldsymbol{\beta}_{01}\right) & & \\  \\ & & I_{u,
2}\left(\boldsymbol{\beta}_{0 2}\right)\end{array}\right].$$

Let
$$I^{(1)}_l\left(\boldsymbol{\beta}_0\right)=\left[\begin{array}{lll}I_{l,1 }\left(\boldsymbol{\beta}_{01}\right) & & \\  \\ & & I_{l,
2}\left(\boldsymbol{\beta}_{0 2}\right)\end{array}\right].$$

For $\eta\in(0,1)$ , $\boldsymbol{\beta}^{\prime}=\boldsymbol{\beta}_0+\eta\left(\hat{\boldsymbol{\beta}}-\boldsymbol{\beta}_0\right)$. From
the second order Taylor expansion we get $$\nabla l_n(\boldsymbol{\hat\beta})=\nabla l_n(\boldsymbol{\beta_0})+\nabla^2
l_n(\boldsymbol{\beta^{\prime}})(\hat{\boldsymbol{\beta}}-\boldsymbol{\beta}_0).$$

Since $\boldsymbol{\hat\beta}$ is at the maximum point, we have $\nabla l_n(\boldsymbol{\hat\beta})=0$ and
$$\sqrt{n}\left(\boldsymbol{\hat\beta}-\boldsymbol{\beta}\right)=\left(\frac{-\nabla^2 l_n(\boldsymbol{\beta^{\prime}})}{n}\right)^{-1} \cdot
\frac{\nabla l_n(\boldsymbol{\beta})}{\sqrt{n}}.$$

And since  $\boldsymbol{\hat\beta}\xrightarrow{p}\boldsymbol{\beta}_0$ can be obtained from  Theorem \ref{thm:thm2}, we have
$\boldsymbol{\beta}^{\prime}\xrightarrow{p}\boldsymbol{\beta}_0$, and $\nabla^2 l_n(\boldsymbol{\beta})$ is a continuous function, then we can
get
$$\begin{aligned}
\left(\frac{-\nabla^2 \ell_n \left(\boldsymbol{\beta}^{\prime}\right)}{n}\right)^{-1} & \xrightarrow{p}\left(-\frac{\nabla^2
\ell_n\left(\boldsymbol{\beta}_0\right)}{n} \right)^{-1} \\
& \xrightarrow{p}\left(\lambda \cdot I_l\left(\boldsymbol{\beta}_0\right)+(1-\lambda) \cdot I_u\left(\boldsymbol{\beta}_0\right)\right)^{-1}.
\end{aligned}$$

Let
$$\begin{aligned}
\frac{1}{\sqrt{n}} \nabla l_n(\boldsymbol{\beta_0})&=\sqrt{n} \cdot \frac{1}{n}\lbrace  \nabla l_{c,l}( \boldsymbol{\beta}_0)+  \nabla
l_{c,u}( \boldsymbol{\beta}_0 )\rbrace \\
&=\sqrt{n} \cdot \frac{1}{n} \sum_{i=1}^{n}\lbrace  S^{(i)}\cdot \nabla l_{c,l}^{(i)}( \boldsymbol{\beta}_0)+ ( 1-S^{(i)})\cdot \nabla
l_{c,u}^{(i)}( \boldsymbol{\beta}_0 )\rbrace,
\end{aligned}$$

where $S^{(i)}$ represents whether the i-th sample belongs to the labeled data and follows the Bernoulli distribution with probability
$\lambda$. $\nabla l_{c,l}^{(i)}( \boldsymbol{\beta}_0)$ and $\nabla l_{c,u}^{(i)}( \boldsymbol{\beta_0})$ denote their corresponding i-th
samples, respectively.

Denote $\frac{1}{n} \sum_{i=1}^{n}  S^{(i)}\cdot \nabla l_{c,l}^{(i)}( \boldsymbol{\beta}_0)=W$, $\frac{1}{n} \sum_{i=1}^{n}( 1-S^{(i)})\cdot
\nabla l_{c,u}^{(i)}( \boldsymbol{\beta}_0 )=Q$. According to Anderson and Gill (1982)\cite{andersen1982cox} Theorem 3.2 and You et al.
(2018)\cite{you2018subtype} Theorem II, from the central limit theorem $\frac{1}{\sqrt{n}} \nabla l_n(\boldsymbol{\beta_0})$ converges to a
normal distribution, where the expectation and variance are

$$E(W+Q)=\lambda \cdot E( \nabla l_{c,l}( \boldsymbol{\beta}_0))+(1-\lambda)\cdot E(\nabla l_{c,u}( \boldsymbol{\beta}_0))=0,$$

$$\begin{aligned}Var(W+Q)
&=E(WW^{T})+E(QQ^{T})\\
&=\lambda I_l\left(\boldsymbol{\beta}_0\right)+(1-\lambda) I_u\left(\boldsymbol{\beta}_0\right),
\end{aligned}$$

By Slutsky's theorem, we can obtain $$\sqrt{n}\left(\boldsymbol{\hat\beta}_{k}-\boldsymbol{\beta}_{0k}\right) \xrightarrow{d} N\left(0,
\left(\lambda I_{u,k}(\boldsymbol{\beta}_{0 k})+(1-\lambda)I_{l,k}(\boldsymbol{\beta}_{0 k})\right)^{-1}\right).$$
\end{proof}

\section{Simulations}
\label{sec:dual4}
\subsection{Simulation Methods}

\subsubsection{Simulation Data Generation}

Suppose there are n independent samples, and $0.3\cdot n$ samples in the responders' group, $0.7\cdot n$ samples in the non-responders' group,
and the mixing probability is $\boldsymbol{\Pi}$=(0.3,0.7). The $\frac{n}{2}$ samples were randomly selected as the labeled data (experimental group), and
the rest were unlabeled data (control group). The covariate matrix $\boldsymbol{X}$ is a matrix of $n\times4$, where two covariates are independently
generated from the standard normal distribution $N(0,1)$, and the remaining two covariates are also independently generated from the binomial
distribution with probability 0.5. The $\boldsymbol{\beta}^T$ is the true value of the regression coefficients, and the true values of the
regression coefficients for the responders' and responders' groups are $\boldsymbol{\beta}_1$=$( -1, 0.5, 3, 0.8)^T$ and
$\boldsymbol{\beta}_2$=$( 2, -0.1, -3, 0.2)^T$.

Generating survival and censoring times are based on the method used by You et al. (2018)\cite{you2018subtype}. $T_i=\{-35*
\log(\boldsymbol{U})/\exp(\boldsymbol{X}\boldsymbol{\beta}_k)\}$ is the survival time of the i-th sample, where U follows from the uniform distribution $U(0,1)$. \hypertarget{c-definition}{}
The
uniform distribution $U(0,e^{c})$ generates the censoring time $C_i$, where c is a given constant. Thus, the observed survival time is
$t_i=min(C_i,T_i)$. For example, given c=9.5, 6.5, 3.8, conducting multiple experiments yields censoring rates around 5$\%$, 20$\%$, 45$\%$.

\begin{table}[H] 
    \centering
    \caption{The setting of the simulation}
    \begin{tabular}{*{5}{c}}
        \toprule
        Parameter                            & Responders    & Non-Responders   \\
        \midrule
        $\pi$      & 0.3     & 0.7   \\
        \midrule
        $\boldsymbol{\beta}$             & $( -1, 0.5, 3, 0.8)^T$    & $( 2, -0.1, -3, 0.2)^T$                      \\
   \midrule
        $n$             & $0.3\cdot n$    & $0.7\cdot n$                      \\

           \midrule
        $t_i$             & $t_i=min(C_i,\frac{-35*\log(U)}{\exp(\boldsymbol{X}\boldsymbol{\beta}_1)})$    & $
        t_i=min(C_i,\frac{-35*\log(U)}{\exp(\boldsymbol{X}\boldsymbol{\beta}_2)})$                      \\
        \bottomrule
    \end{tabular}
    \label{tab:siftflow}
\end{table}

\subsubsection{Model Evaluation Metrics}

Suppose we have $n_u$ unlabeled samples to classify, where the number of correct predictions is $s$, then the classification accuracy can be
expressed by

$$\text{Accuracy} = \frac{s}{n_u}.$$

For the evaluation of estimators, we consider bias and relative bias:

$$\text{Bias}(\hat{\beta}_j) = \mathbb{E}(\hat{\beta}_j) - \beta_j   ,$$  where $\mathbb{E}$ denotes the expectation, and in the experiment we
use the mean of $n_k$ experiments as the estimate of $\mathbb{E}(\hat{\beta}_j)$. $\hat{\beta}_j$ denotes the estimated coefficient of the
$j$-th baseline covariate, and $\beta_j$ denotes the true coefficient.

$$\text{Relative bias} = \frac{E(\hat{\boldsymbol{\beta}}{_j}) - \boldsymbol{\beta}_j}{\boldsymbol{\beta}_j} \times 100\%.$$

For the stability of the prediction accuracy, consider calculating the standard deviation of the results of $n_k$ experiments

$$\text{SD}(\text{Accuracy})=\sqrt{\frac{\sum(\text{Accuracy} -mean(\text{Accuracy}))^2}{n_k-1}},$$where $n_k$ represents the number of
experiments.

Similarly, for the estimate of $\boldsymbol{\beta}$, we calculate its standard deviation.

At last, for the estimate of mixing probability $\boldsymbol{\Pi}$, we calculate its bias and standard deviation.

\subsubsection{Model Evaluation Methods}

To verify the reliability of the Dual Cox model theory, the following points of the theory were considered: classification accuracy and its standard
deviation, unbiasedness and standard deviation of the estimators, the convergence of the parameter estimators, and the effect of sample size
and censoring rate on the model performance.

\begin{itemize}
\item 1. For the accuracy and stability of classification prediction: The average classification accuracy and standard deviation were
    calculated by repeating the simulation experiment 1000 times.
\item 2. Unbiasedness and robustness of model parameter estimators: bias, relative bias, and standard deviation of the parameters were
    calculated by repeating the simulation experiment 1000 times.
\item 3. Convergence of the fits: whether the model parameter estimates converge to the true values and whether the likelihood function
    converges to the global maximum.

\item 4. Effect of censoring rate: Three different censoring rates (5$\%$,20$\%$, 45$\%$) were considered to observe their impacts over the fitting.

\item 5. Effect of sample size: Given n=1000, 700, and 300 to observe how different sample sizes affect the fitting.\
\end{itemize}

The selection of initial values will be discussed in Section 4.1.3, which is based on 1000 replicate experiments with n=1000, \hyperlink{c-definition}
c=6.5, resulting in a
censoring rate of 18.1$\%$. For the simulation experiments in Section 4.2 with 1000 replications, we are aiming to investigate the
potential effects of sample size and censoring rate on model fitting performance. In those experiments, we select values of 1000, 700,
and 300 for the sample size n, and set \hyperlink{c-definition}
c at 9.5, 6.5, and 3.8, respestively, for the censoring rate to investigate its effect around
5$\%, 20\%, 45\%$ respectively. In total, we conducted 3x3=9 experiments.

\subsubsection{Initial Values}

Although the theoretical properties of Section \ref{sec:dual3} have illustrated the reasonability of the $\boldsymbol{\beta}$ estimate, the
strict concavity of the likelihood function is assumed by the regularity conditions. Moreover, as mentioned in Section \ref{sec:dual},
subsection 2, the EM algorithm may cause the likelihood function to converge to a local maximum rather than a global maximum. Therefore, the
following will consider three types of initial values of $u_{ik}$ and investigate numerically whether they converge to a local maximum.

\begin{itemize}
\item [Method 1.] Initial values are at bounds: For each unlabeled data $u_{i1}^{(0)}$ follows Bernoulli distribution with probability 0.5,
    $u_{i2}^{(0)}=1-u_{i1}^{(0)}$, and the obtained values are 0/1 sequences.
\item [Method 2.] Random assignment of initial values: For each unlabeled data $u_{i1}^{(0)}$ following a uniform distribution $U(0,1)$,
    $u_{i2}^{(0)}=1-u_{i1}^{(0)}. $
\item [Method 3.] $\boldsymbol{\Pi^{(0)}}$ as $u_{ik}$ a priori information: the initial value $\boldsymbol{\Pi^{(0)}}$ is derived from the maximum likelihood
    estimation of the labeled data, and let $u_{i1}^{(0)}$=$\pi_1^{(0)}$, $u_{i2}^{(0)}$=$\pi_2^{(0)}. $
\end{itemize}

In order to compare the effects of the three initial values on the model fitting results under the same survival data, experiments with the
above three assigned initial values are conducted with a sample size of n=1000 and a censoring rate of 18.1$\%$, and the survival data is
generated by the method described in Section \ref{sec:dual4},. Specifically, the experiments are conducted 1000 times using each of the three
methods of assigning initial values, and the convergence results are compared (Note: Method 3 is fitted with the same results for 1000
repetitions because the initial values are the same).

\subsection{Experimental Results}

In this section, a total of 9 experiments were conducted, each with 1000 replications, to investigate the effects of different sample sizes and censoring rates on the precision and stability of the Dual Cox model
fitting algorithm.

\subsubsection{Effect of Censoring Rate}

From the tables \ref{tab:4.3}, \ref{tab:4.4}, \ref{tab:4.5}, with the same sample size and different censoring rates, we can see that for the
relative bias of the estimated $\boldsymbol{\beta}$, the classification accuracy and the estimated mixing probability, the convergence results
of the Dual Cox model algorithm are close to the true values, and the model algorithm fitting results can all reach a certain level of accuracy, and
the consistency of the estimators, which is proved in Section \ref{sec:dual3}, are verified. However, as the censoring rate increases, the
standard deviation increases significantly, and the standard deviation is greatest when the censoring rate is high. As shown in Tables
\ref{tab:4.3}, \ref{tab:4.4}, and \ref{tab:4.5}, for n=1000, the classification accuracy is 0.89 for low censoring rate, the estimated mixing
probability is 0.31, and the estimated $\boldsymbol{\beta}$ ($\hat{\beta}_{11 }$ to $\hat{\beta}_{24}$) have relative bias between 2$\%$ and
7$\%$. For moderate censoring, the classification accuracy is 0.89, the estimated mixing probability is 0.31, and the relative bias of
estimated $\boldsymbol{\beta}$ range from 1$\%$ to 8$\%$, while for high censoring, the classification accuracy is 0.87, the estimated mixing
probability is 0.31, and the relative bias of estimated $\boldsymbol{\beta}$ range from 0$\%$ to 11$\%$. Thus, convergence results of
$\boldsymbol{\beta}$ do not change significantly with different censoring rates, and they are close to the true value. The results can reach a
certain precision. For the standard deviation, it can be found that the standard deviation of all estimated $\boldsymbol{\beta}$ are lower at
low censoring rates than at medium censoring rates. The standard deviation of all estimated $\boldsymbol{\beta}$ is lower at medium censoring
rates than at high censoring rates, which shows that the stability in parameter estimation decreases as the censoring rate increases. This is
in line with the conventional wisdom that the performance of the model algorithm fit decreases as the data becomes more incomplete. In
addition, the standard deviations obtained for all three censoring rates are relatively small at n=1000, and there is no significant
difference, which indicates that the sample size is sufficient to maintain the stability of the Dual Cox model algorithm fitting results even
at high censoring rates.

\subsubsection{Effect of Sample Size}

From the tables \ref{tab:4.3}, \ref{tab:4.4}, \ref{tab:4.5}, we can see that the convergence results of the Dual Cox model algorithm are close
to the true values for the same censoring rate and different sample sizes, for the relative bias of the estimated $\boldsymbol{\beta}$, the
classification accuracy and the mixing probability estimates. All of them are close to the true values, the models can reach a certain
precision, and the consistency of the estimators is verified. As the sample size decreases, the standard deviation increases significantly.
For the case of medium censoring rate and n=1000 of table \ref{tab:4.3}, the classification accuracy is 0.89, the estimated mixing probability
is 0.31, and the relative bias of the estimated $\boldsymbol{\beta}$ is between 2$\%$ and 7$\%$. For n=700, the classification accuracy is
0.89, and the estimated mixing probability is 0.31. The relative bias of the estimated $\boldsymbol{\beta}$ range from 0$\%$ to 7$\%$. For
n=300, the classification accuracy is 0.88, the estimated mixing probability is 0.31, and the relative bias of the estimated
$\boldsymbol{\beta}$  range from 3 $\%$ to 8$\%$. It can be found that there is no significant change in the convergence results of the
estimates in the three sample sizes, which are all close to the true values, and the consistency of the estimators is verified. For the
standard deviation, it can be found that for n=1000, the standard deviation of all estimated $\boldsymbol{\beta}$ is lower than n=700, and
when n=700 the standard deviation of all estimated $\boldsymbol{\beta}$ is lower than n=300, and the standard deviation is largest for n=300.
Besides, for most parameters n=300 has twice the standard deviation of n=1000, i.e., n=300 has the worst stability. It can be seen that the
stability of the parameters increases as the sample size increases. This is consistent with the theoretical proof in Section \ref{sec:dual3}
and the conventional knowledge that the variance decreases when n increases.

\subsubsection{Convergence of the Fit}

The above discussions on sample size and censoring rate both illustrate that the parameter estimates are close to the true values, the
classification accuracy is high, and the fitting algorithm possesses good convergence. In addition, by discussing the selection of initial
values in 4.1.3, it is also shown that the model can converge to a relatively high likelihood function by randomly selecting any initial
values except the boundary values, which shows that the model is not so sensitive to the initial values. This may be because of our
semi-supervised case, the condition of fixed K=2 in the mixture model, and the fact that the proportion of clinical experimentally labeled
data (experimental group) to the total sample is not too low, which can make our model much more stable.

When setting up the algorithm, the log-likelihood function must converge, but the convergence of the log-likelihood function does not mean
that the parameters converge. In the following, we will show whether $\pi$ converges to the true value as the algorithm iterates, whether
$\boldsymbol{\beta}$ converges to the true value as theoretically proven, and how the classification accuracy changes during the iterations.

The sample size is set to n=1000, the censoring rate is 18.1$\%$, and the convergence condition of the log-likelihood function is that the
difference between two iterations is less than $\epsilon$=$10^{-5}$. As shown in Figure \ref{fig:4.4}, the likelihood function converges, and
the classification accuracy and mixing probability also stabilize. For the regression coefficients, it can also be found that the regression
coefficients are estimated closer and closer to the true value of the model as the number of iterations increases. It is noteworthy that the
classification accuracy is highest at the seventh iteration because the log-likelihood function has reached a relatively high value and the
mixing probability is close to the true value of 0.3, which gives rise to this phenomenon.

where the criterion for determining the convergence of $\hat{\boldsymbol{\beta}}$: $\hat{\boldsymbol{\beta}}$ is considered to have converged
if $\Delta \beta$ is less than some threshold $\epsilon'$: $$ \Delta \beta = ||\hat{\boldsymbol{\beta}} - \boldsymbol{\beta}||_2 <\epsilon',$$
where $||\cdot||_2$ denotes the $L_2$ norm and t represents the t-th iteration.

\begin{table}[H] 
 \centering
    \caption{The experiment results with around 5$\%$ censoring rates across varied sample sizes}
\begin{tabular}{*{5}{c}}
\hline $\mathrm{n}=1000$  & Mean      & SD     & BIAS &   Relative Bias   \\
\hline$\hat{\beta}_{11}$ & -1.02 & 0.17 & -0.02 & 0.02 \\
$\hat{\beta}_{12}$ & 0.53 & 0.15 & 0.03 & 0.06 \\
$\hat{\beta}_{13}$ & 3.12 & 0.19 & 0.12 & 0.04 \\
$\hat{\beta}_{14}$ & 0.83 & 0.08 & 0.03 & 0.04 \\
$\hat{\beta}_{21}$ & 2.14 & 0.12 & 0.14 & 0.07 \\
$\hat{\beta}_{22}$ & -0.11 & 0.09 & -0.01 & 0.06 \\
$\hat{\beta}_{23}$ & -3.18 & 0.12 & -0.18 & 0.06 \\
$\hat{\beta}_{24}$ & 0.21 & 0.01 & 0.05 & 0.06 \\
$\hat{\pi}_1$ & 0.31 & 0.01 & 0.01 & \\
 Classification Accuracy & 0.89 & 0.01 & & \\
\hline

\hline $\mathrm{n}=700$  \\
\hline$\hat{\beta}_{11}$ & -1.04 & 0.20 & -0.04 & 0.04 \\
$\hat{\beta}_{12}$ & 0.53 & 0.17 & 0.03 & 0.05 \\
$\hat{\beta}_{13}$ & 3.13 & 0.24 & 0.13 & 0.04 \\
$\hat{\beta}_{14}$ & 0.84 & 0.11 & 0.04 & 0.05 \\
$\hat{\beta}_{21}$ & 2.15 & 0.13 & 0.15 & 0.07 \\
$\hat{\beta}_{22}$ & -0.10 & 0.11 & 0.00 & 0 \\
$\hat{\beta}_{23}$ & -3.19 & 0.14 & -0.19 & 0.06 \\
$\hat{\beta}_{24}$ & 0.21 & 0.05 & 0.01 & 0.05 \\
$\hat{\pi}_1$ & 0.31 & 0.01 & 0.01 & \\
 Classification Accuracy & 0.89 & 0.02 & & \\
\hline

\hline $\mathrm{n}=300$   \\
\hline$\hat{\beta}_{11}$ & -1.04 & 0.33 & -0.04 & 0.04 \\
$\hat{\beta}_{12}$ & 0.52 & 0.32 & 0.02 & 0.03 \\
$\hat{\beta}_{13}$ & 3.22 & 0.39 & 0.22 & 0.07 \\
$\hat{\beta}_{14}$ & 0.85 & 0.18 & 0.05 & 0.07 \\
$\hat{\beta}_{21}$ & 2.17 & 0.21 & 0.17 & 0.08 \\
$\hat{\beta}_{22}$ & -0.11 & 0.17 & -0.01 & 0.07 \\
$\hat{\beta}_{23}$ & -3.23 & 0.23 & -0.23 & 0.07 \\
$\hat{\beta}_{24}$ & 0.21 & 0.08 & 0.01 & 0.06 \\
$\hat{\pi}_1$ & 0.31 & 0.01 & 0.01 & \\
 Classification Accuracy & 0.89 & 0.03 & & \\
\hline

\end{tabular}
    \label{tab:4.3}

\end{table}

\begin{table}[H] 
 \centering
    \caption{The experiment results with around 20$\%$ censoring rates across varied sample sizes}
\begin{tabular}{*{5}{c}}
\hline $\mathrm{n}=1000$  & Mean      & SD     & BIAS &   Relative Bias   \\
\hline$\hat{\beta}_{11}$ & -0.99 & 0.19 & 0.01 & -0.01 \\
$\hat{\beta}_{12}$ & 0.53 & 0.17 & 0.03 & 0.06 \\
$\hat{\beta}_{13}$ & 3.12 & 0.21 & 0.12 & 0.04 \\
$\hat{\beta}_{14}$ & 0.84 & 0.09 & 0.04 & 0.05 \\
$\hat{\beta}_{21}$ & 2.16 & 0.12 & 0.16 & 0.08 \\
$\hat{\beta}_{22}$ & -0.11 & 0.10 & -0.01 & 0.06 \\
$\hat{\beta}_{23}$ & -3.20 & 0.12 & -0.20 & 0.07 \\
$\hat{\beta}_{24}$ & 0.21 & 0.05 & 0.01 & 0.06 \\
$\hat{\pi}_1$ & 0.31 & 0.01 & 0.01 & \\
 Classification Accuracy & 0.89 & 0.01 & & \\
\hline

\hline $\mathrm{n}=700$  \\
\hline$\hat{\beta}_{11}$ & -1.01 & 0.23 & -0.01 & 0.01 \\
$\hat{\beta}_{12}$ & 0.53 & 0.20 & 0.03 & 0.05 \\
$\hat{\beta}_{13}$ & 3.13 & 0.28 & 0.13 & 0.04 \\
$\hat{\beta}_{14}$ & 0.84 & 0.12 & 0.04 & 0.05 \\
$\hat{\beta}_{21}$ & 2.17 & 0.14 & 0.17 & 0.08 \\
$\hat{\beta}_{22}$ & -0.10 & 0.12 & 0.00 & 0 \\
$\hat{\beta}_{23}$ & -3.21 & 0.15 & -0.21 & 0.07 \\
$\hat{\beta}_{24}$ & 0.21 & 0.06 & 0.01 & 0.06 \\
$\hat{\pi_1}$ & 0.31 & 0.01 & 0.01 & \\
 Classification Accuracy & 0.89 & 0.02 & & \\
\hline

\hline $\mathrm{n}=300$  \\
\hline$\hat{\beta}_{12}$ & -1.01 & 0.4 & -0.01 & 0.01 \\
$\hat{\beta}_{12}$ & 0.51 & 0.37 & 0.01 & 0.02 \\
$\hat{\beta}_{13}$ & 3.22 & 0.45 & 0.22 & 0.07 \\
$\hat{\beta}_{14}$ & 0.86 & 0.21 & 0.06 & 0.07 \\
$\hat{\beta}_{21}$ & 2.19 & 0.23 & 0.19 & 0.10 \\
$\hat{\beta}_{22}$ & -0.10 & 0.18 & -0.01 & 0.05 \\
$\hat{\beta}_{23}$ & -3.26 & 0.24 & -0.26 & 0.09 \\
$\hat{\beta}_{24}$ & 0.21 & 0.09 & 0.01 & 0.07 \\
$\hat{\pi}_1$ & 0.31 & 0.01 & 0.01 & \\
 Classification Accuracy & 0.88 & 0.03 & & \\
\hline

\end{tabular}
    \label{tab:4.4}

\end{table}

\begin{table}[H] 
 \centering
    \caption{The experiment results with around 45$\%$ censoring rates across varied sample sizes}
\begin{tabular}{*{5}{c}}
\hline $\mathrm{n}=1000$  & Mean      & SD     & BIAS &   Relative Bias   \\
\hline$\hat{\beta}_{11}$ & -0.95 & 0.25 & 0.05 & -0.05 \\
$\hat{\beta}_{12}$ & 0.51 & 0.22 & 0.01 & 0.02 \\
$\hat{\beta}_{13}$ & 3.08 & 0.27 & 0.08 & 0.03 \\
$\hat{\beta}_{14}$ & 0.80 & 0.13 & 0 & 0 \\
$\hat{\beta}_{21}$ & 2.18 & 0.15 & 0.18 & 0.09 \\
$\hat{\beta}_{22}$ & -0.11 & 0.12 & -0.01 & 0.11 \\
$\hat{\beta}_{23}$ & -3.20 & 0.14 & -0.20 & 0.07 \\
$\hat{\beta}_{24}$ & 0.20 & 0.06 & 0 & 0.01 \\
$\hat{\pi}_1$ & 0.31 & 0.01 & 0.01 & \\
Classification Accuracy & 0.87 & 0.02 & & \\
\hline

\hline $\mathrm{n}=700$   \\
\hline$\hat{\beta}_{11}$ & -0.97 & 0.30 & -0.03 & -0.03 \\
$\hat{\beta}_{12}$ & 0.51 & 0.27 & 0.01 & 0.02 \\
$\hat{\beta}_{13}$ & 3.10 & 0.35 & 0.10 & 0.03 \\
$\hat{\beta}_{14}$ & 0.80 & 0.16 & 0 & 0 \\
$\hat{\beta}_{21}$ & 2.19 & 0.17 & 0.19 & 0.09 \\
$\hat{\beta}_{22}$ & -0.10 & 0.14 & 0 & 0.02 \\
$\hat{\beta}_{23}$ & -3.21 & 0.16 & -0.21 & 0.07 \\
$\hat{\beta}_{24}$ & 0.20 & 0.07 & 0 & 0.02 \\
$\hat{\pi}_1$ & 0.31 & 0.01 & 0.01 & \\
Classification Accuracy & 0.87 & 0.02 & & \\
\hline

\hline $\mathrm{n}=300$   \\
\hline$\hat{\beta}_{11}$ & -0.96 & 0.51 & 0.04 & -0.04 \\
$\hat{\beta}_{12}$ & 0.50 & 0.50 & 0 & 0 \\
$\hat{\beta}_{13}$ & 3.19 & 0.58 & 0.19 & 0.06 \\
$\hat{\beta}_{14}$ & 0.82 & 0.27 & 0.02 & 0.02 \\
$\hat{\beta}_{21}$ & 2.23 & 0.28 & 0.23 & 0.11 \\
$\hat{\beta}_{22}$ & -0.10 & 0.21 & -0 & 0.05 \\
$\hat{\beta}_{23}$ & -3.27 & 0.27 & -0.27 & 0.09 \\
$\hat{\beta}_{24}$ & 0.20 & 0.11 & 0 & 0.02 \\
$\hat{\pi}_1$ & 0.31 & 0.01 & 0.01 & \\
Classification Accuracy & 0.87 & 0.03 & & \\
\hline
\end{tabular}
    \label{tab:4.5}

\end{table}

\begin{figure}[H]
    \centering
    \includegraphics[width=.47\textwidth,height=.65\textwidth]{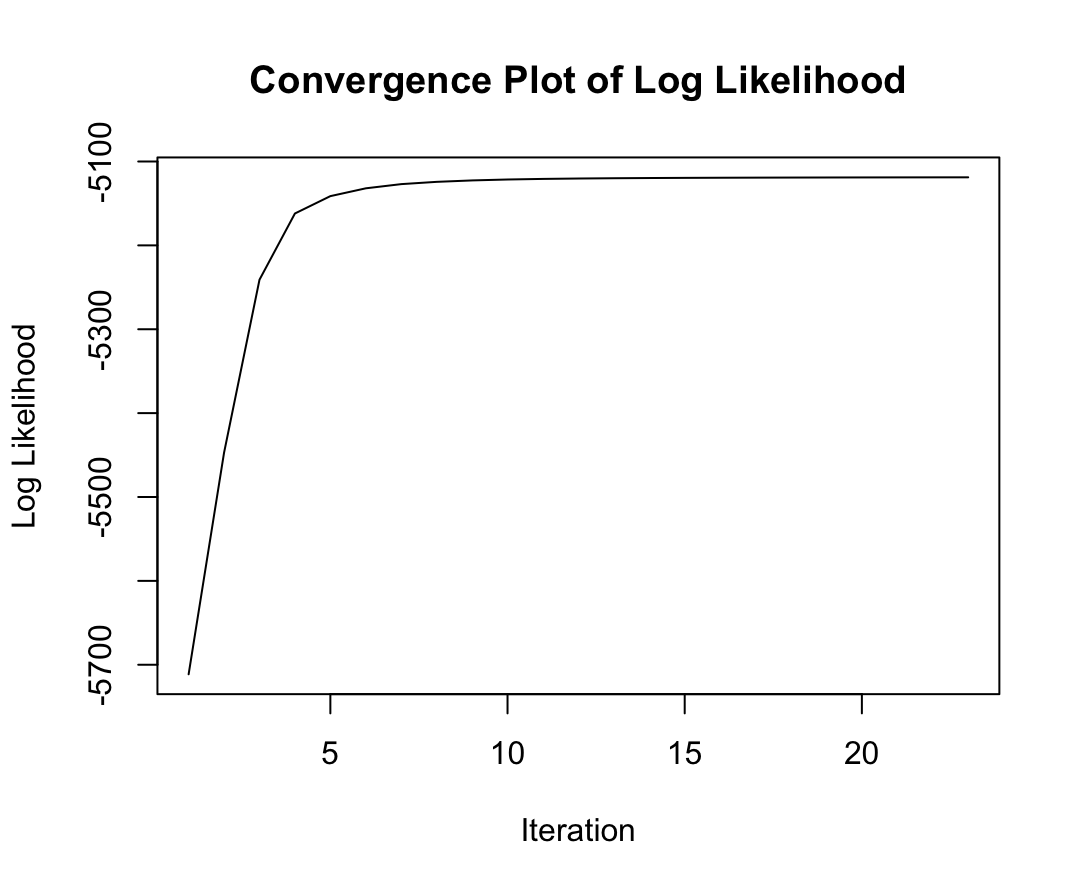}
    \includegraphics[width=.47\textwidth,height=.65\textwidth]{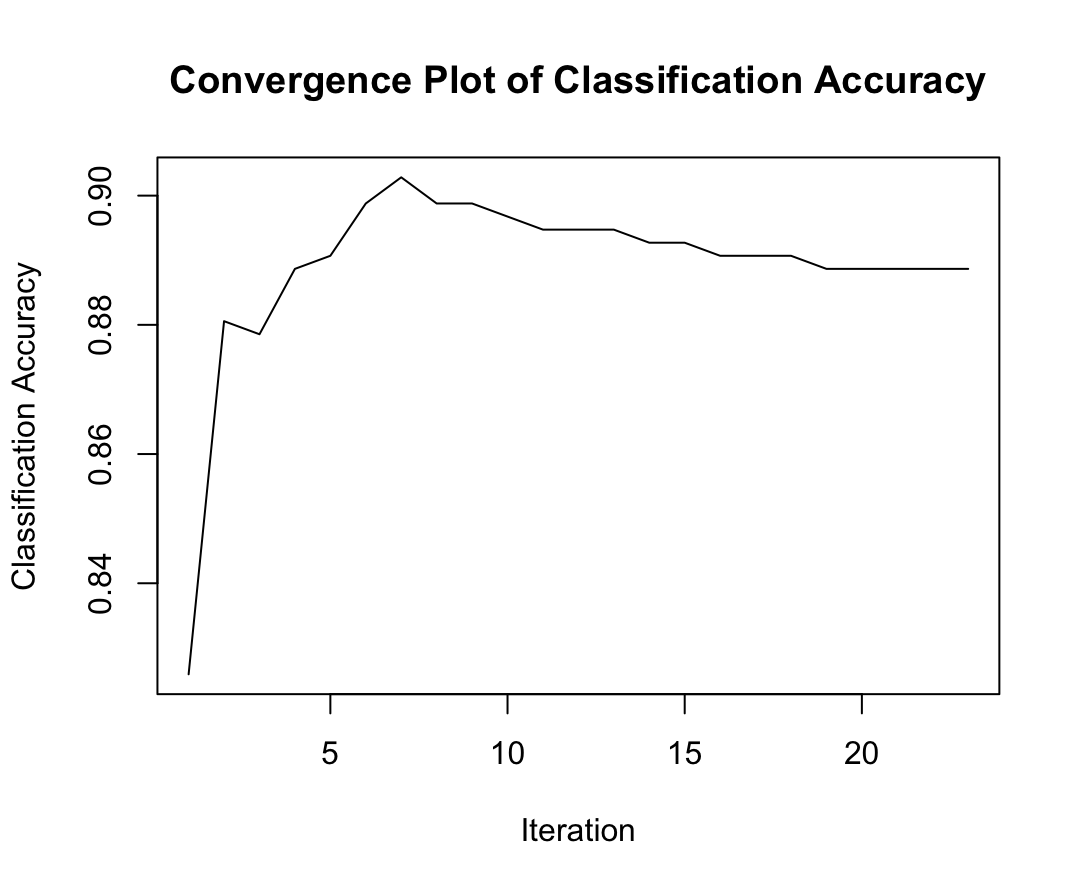}
    \includegraphics[width=.47\textwidth,height=.65\textwidth]{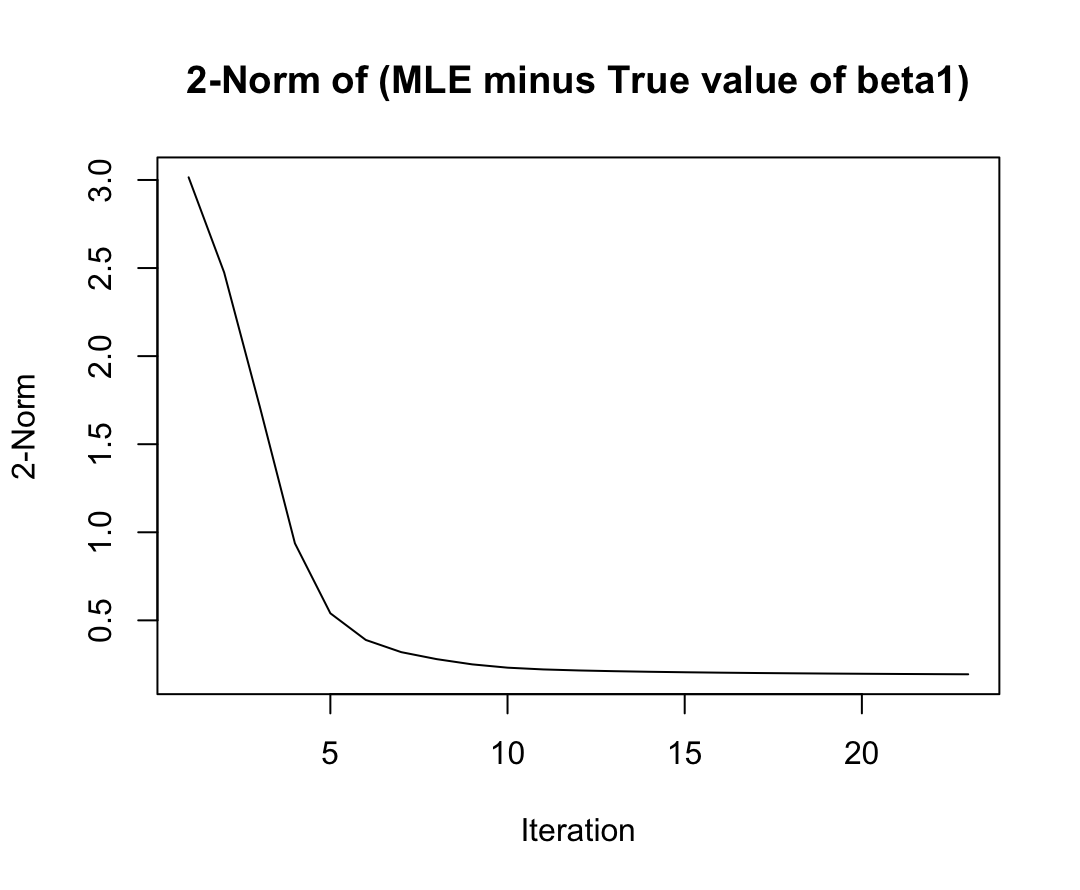}
    \includegraphics[width=.47\textwidth,height=.65\textwidth]{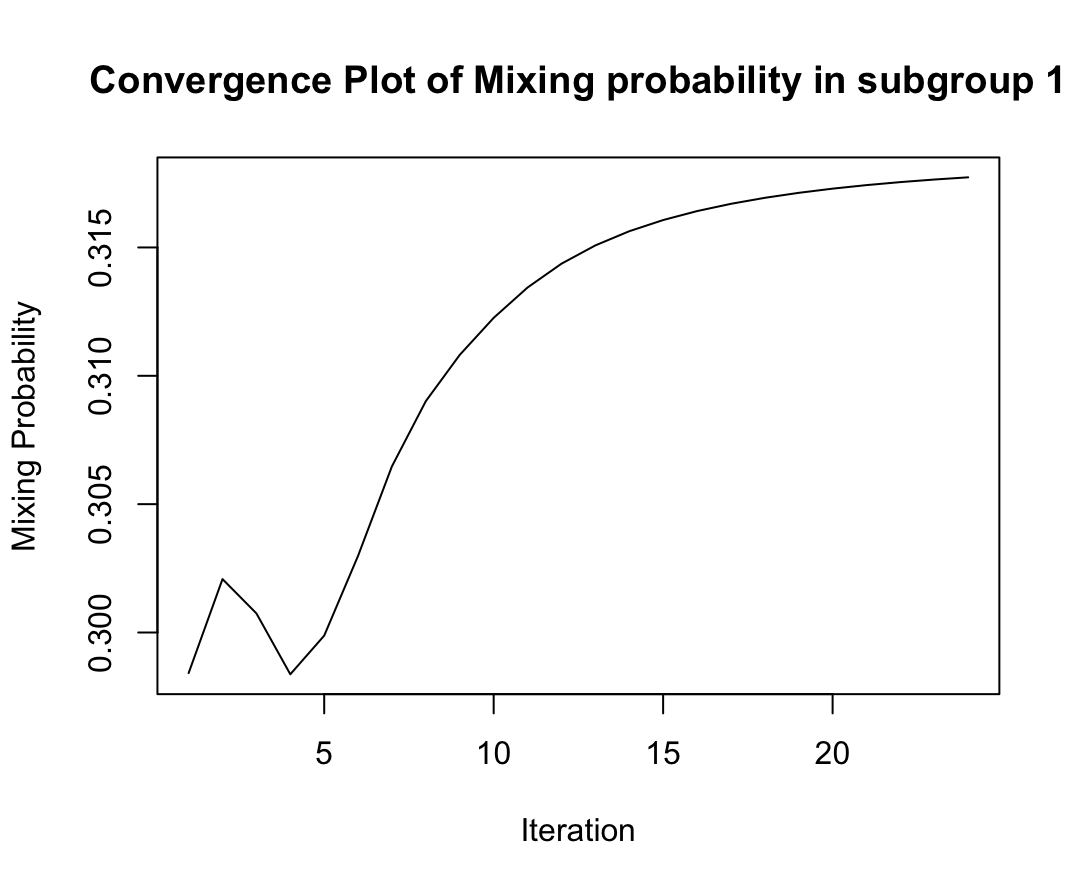}

    \caption{Convergence plot for the Dual Cox model's algorithm}
    \label{fig:4.4}
\end{figure}

\subsubsection{Initial Values}
According to the results in Figure \ref{fig:4.1}, it can be found that the observed
log-likelihood function is small when using Method 1 as the initial values method, and there is a large difference in the convergence results
for each of the 1000 experiments. In contrast, when using Method 2 as the initial values method, although the values obtained in each
experiment are slightly different, the difference between the values is not large, the convergence of the model fit is more stable, and the
observed log-likelihood function can converge to a higher value. And when using Method 3, the values of the observed log-likelihood function
obtained are higher than most of the experimental results of Method 2. This is because by using $\boldsymbol{\Pi^{(0)}}$ as $u_{ik}^{(0)}$ a priori, the
model can obtain as much information as possible from the labeled data at the very beginning.

From Table \ref{tab:4.1}, it can be seen that the classification accuracy using Method 1 is significantly lower, with an average of 0.64, and
the relative bias of the estimators is also higher, with a rate of change between 22.9$\%$ and 132$\%$. The standard deviation is also much
higher compared to Method 2. For Method 2 and Method 3, both methods can obtain high classification accuracy with a mean value of 0.87, and
the relative deviations of the estimated coefficients for both are also relatively small, ranging from 0.6$\%$ to 11.2$\%$ and 2.8$\%$ to
11.3$\%$, respectively. In terms of relative bias and classification accuracy, there is no significant difference between these two methods.
In addition, for Method 1, the number of iterations is only 4$\sim$6, while for Methods 2 and 3, 14$\sim$24 and 17 iterations can be obtained.
This indicates that the initial values selection of Method 1 can quickly fall into a local optimum.

Therefore, it is suggested that the choice of using $\boldsymbol{\Pi^{(0)}}$ as a priori in the selection of the initial value of $u_{ik}$, or using
multiple initial values and selecting from them the initial values that make the maximum of the likelihood function, are to try to avoid the
influence of local maxima.

\begin{figure}[H]
    \centering
    \includegraphics[width=1\textwidth,height=0.8\textwidth]{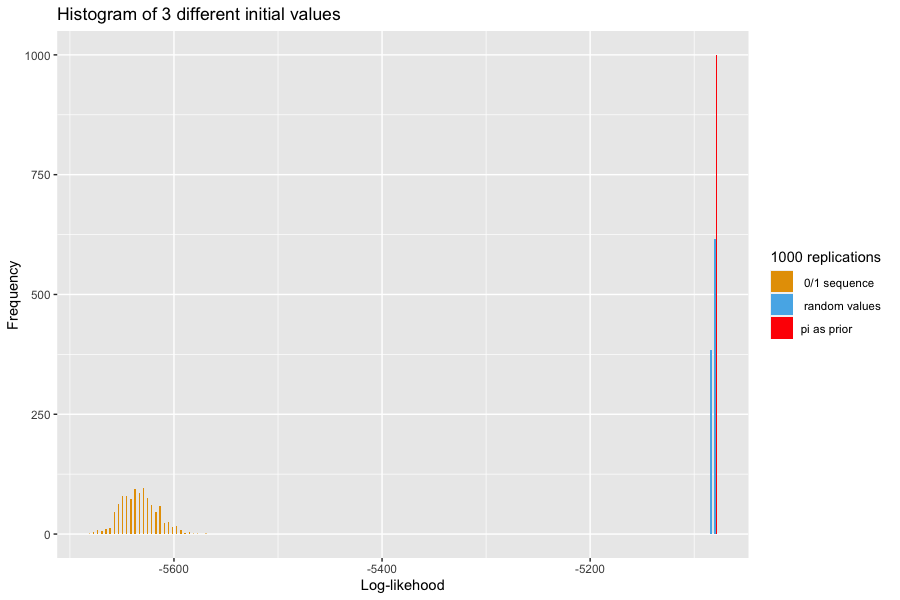}
    \caption{The log-likelihood function of the observed data obtained using different initial values of $\boldsymbol{U}$.}
    \label{fig:4.1}
\end{figure}

\begin{table}[H] 
    \centering
    \caption{The sample size is n = 1000, the censoring rate is 18.1$\%$, and the experiment is repeated 1000 times for each of the three initial values using our method}
    \begin{tabular}{*{5}{c}}
        \toprule
        Method 1                               & Mean      & SD     & BIAS &   Relative Bias   \\
        \midrule
        $\hat{\beta}_{11}$              & 0.32  & 0.093 & 1.32  & \textbf{-1.320}            \\
        $\hat{\beta}_{12}$      &  0.26  & 0.097 & -0.24 & \textbf{-0.485}             \\
        $\hat{\beta}_{13}$           &  0.44  & 0.082 & -2.56 & \textbf{-0.855}          \\
        $\hat{\beta}_{14}$ &0.14  & 0.044 & -0.66 & \textbf{-0.824}        \\
         $\hat{\beta}_{21}$              &1.54  & 0.050 & -0.46 & \textbf{-0.229}       \\
        $\hat{\beta}_{22}$      &  0.03  & 0.056 & 0.13  & \textbf{-1.259}              \\
        $\hat{\beta}_{23}$           & -1.70 & 0.058 & 1.31  & \textbf{-0.435}           \\
        $\hat{\beta}_{24}$ & 0.04  & 0.031 & -0.16 & \textbf{-0.777}          \\
        $\hat{\pi}_1$           & 0.315 & 0.010 & 0.015           \\
        Classification Accuracy &  \textbf{0.64}  & 0.020         \\

  \toprule
        Method 2                               & Mean      & SD     & BIAS &   Relative Bias   \\
        \midrule
        $\hat{\beta}_{11}$              & -0.96 & 0.018  & 0.04   & -0.042            \\
        $\hat{\beta}_{12}$      &  0.50  & 0.014  & 0      & 0.006             \\
        $\hat{\beta}_{13}$           &  3.04  & 0.021  & 0.04   & 0.013          \\
        $\hat{\beta}_{14}$ &0.76  & 0.010  & -0.04  & -0.049      \\
         $\hat{\beta}_{21}$              &2.20  & 0.010  & 0.20   & 0.101       \\
        $\hat{\beta}_{22}$      &  -0.11 & 0.006  & -0.009 & 0.091             \\
        $\hat{\beta}_{23}$           & -3.29 & 0.007 & -0.29  & 0.096         \\
        $\hat{\beta}_{24}$ & 0.22  & 0.003  & 0.022  & 0.112            \\
        $\hat{\pi}_1$           & 0.309 & 0.001  & 0.009  &           \\
         Classification Accuracy & 0.87  & 0.004          \\

        \midrule

       \toprule
        Method 3                              & Mean      & SD     & BIAS &   Relative Bias   \\
        \midrule
        $\hat{\beta}_{11}$              & -0.95 &  --  & 0.05  & -0.049           \\
        $\hat{\beta}_{12}$      &  0.51  &  --  & 0.01  & 0.028              \\
        $\hat{\beta}_{13}$           & 3.09  &  --  & 0.09  & -0.030           \\
        $\hat{\beta}_{14}$ &0.77  &  --  & -0.03 & -0.037        \\
         $\hat{\beta}_{21}$              &2.18  & --  & 0.18  & 0.092       \\
        $\hat{\beta}_{22}$      &  -0.11 &  --  & -0.01 & 0.091              \\
        $\hat{\beta}_{23}$           & -3.27 &  --  & -0.27 & 0.090           \\
        $\hat{\beta}_{24}$ & 0.22  &  --  & 0.02  & 0.113        \\
        $\hat{\pi}_1$           & 0.308 &  --  & 0.08             \\
         Classification Accuracy & 0.87  &   --         \\

        \midrule
    \end{tabular}
    \label{tab:4.1}
\end{table}

\section{ Real Data Analyses}

\subsection{PRIME Trial}

PRIME\cite{douillard_randomized_2010}\cite{douillard_panitumumabfolfox4_2013}\cite{douillard2014final}  is a randomized, multicenter Phase III
trial, The objective of this study is to evaluate the efficacy and safety of the targeted agent panitumumab (a monoclonal antibody against
epidermal growth factor receptor) in combination with chemotherapy agent FOLFOX4 (folinic acid, 5-fluorouracil, and oxaliplatin) in patients
with metastatic colorectal cancer (mCRC).

The trial was conducted between 2006 and 2013 and enrolled 1183 previously untreated mCRC patients. As shown in Figure \ref{fig:5.1}, 593
patients in the experimental group received panitumumab plus FOLFOX4, and 590 patients in the control group received FOLFOX4 alone. The trial
also included an analysis of the relationship between treatment outcomes in two subgroups of patients with wild-type KRAS tumors and patients
with KRAS-mutated tumors, respectively.KRAS is a gene that encodes a KRAS protein, which KRAS mutations are common in various types of cancer,
including colorectal cancer. KRAS mutations and wild-type refer to the status of the KRAS gene in tumor cells of patients with metastatic
colorectal cancer. KRAS mutations are genetic alterations that result in changes in the DNA sequence of the KRAS gene, resulting in the
production of abnormal KRAS proteins. In contrast, wild-type KRAS refers to the normal, unaltered form of the KRAS gene. The primary endpoint
of the trial is progression-free survival (PFS), with secondary endpoints including overall survival (OS), objective response rate (ORR),
duration of response (DOR), and safety.

As shown in Figures \ref{fig:5.2},\ref{fig:5.3} and tables \ref{tab:5.1} (Note: The public data set contains only $80\%$ of subjects, and the results
are slightly different from those in the original paper), the results of the test taking KRAS status into account show that, progression-free
survival is improved with the addition of panitumumab to chemotherapy versus chemotherapy alone in patients with wild-type KRAS tumors.
Specifically, median progression-free survival is 9.73 months ($95\%$ CI 9.33 to 11.37) in the group receiving panitumumab plus chemotherapy
versus 9.33 months (95$\%$ confidence interval (CI) 7.80 to 11.10) in the group receiving chemotherapy alone. The hazard ratio (HR) is 0.84
(95$\%$ CI 0.69-1.02, p=0.07) with a log-rank test p-value of 0.06. However, the addition of panitumumab to chemotherapy does not significantly
improve progression-free survival, as compared with chemotherapy alone, in patients with KRAS-mutated tumors. Median PFS is 7.47 months
(95$\%$ CI 6.40 to 8.47) in the group receiving panitumumab plus chemotherapy versus 9.1 months (95$\%$ CI 7.80 to 10.03) in the group receiving
chemotherapy alone, with a hazard ratio (HR) of 1.19 (95$\%$ CI 0.95 to 1.50, p=0.13), log-rank test p-value is 0.075. Thus, the difference
between KRAS mutant and wild-type in the PRIME trial is related to the difference in response to panitumumab plus chemotherapy between the two
groups. Patients with KRAS wild-type tumors have an improved risk with the addition of panitumumab, whereas patients with KRAS mutant tumors
have an increased risk.

In summary, this trial helps establish panitumumab as an effective treatment option for patients with metastatic colorectal cancer, especially
for patients with KRAS wild-type tumors, highlighting the potential benefit of targeted therapy in this patient population and that KRAS
status may be a useful biomarker for predicting response to panitumumab in patients with metastatic colorectal cancer. Next, we, therefore, will
consider 514 patients only with wild-type KRAS.

\begin{figure}[H]
    \centering
    \includegraphics[width=0.65\textwidth,height=0.4\textheight]{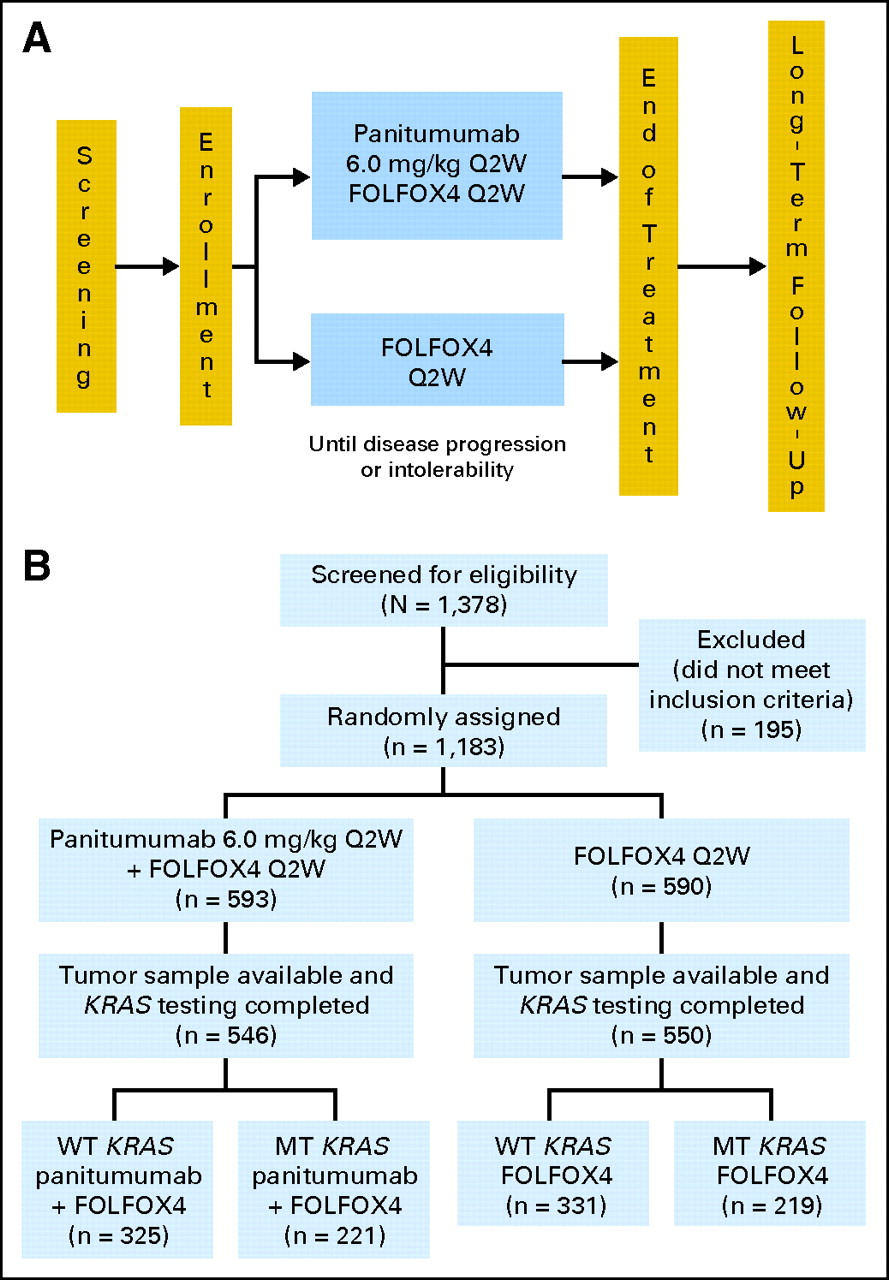}
    \caption{PRIME Trial\cite{douillard_randomized_2010}}
    \label{fig:5.1}
\end{figure}

\begin{figure}[H]
    \centering
    \includegraphics[width=1\textwidth,height=0.6\textwidth]{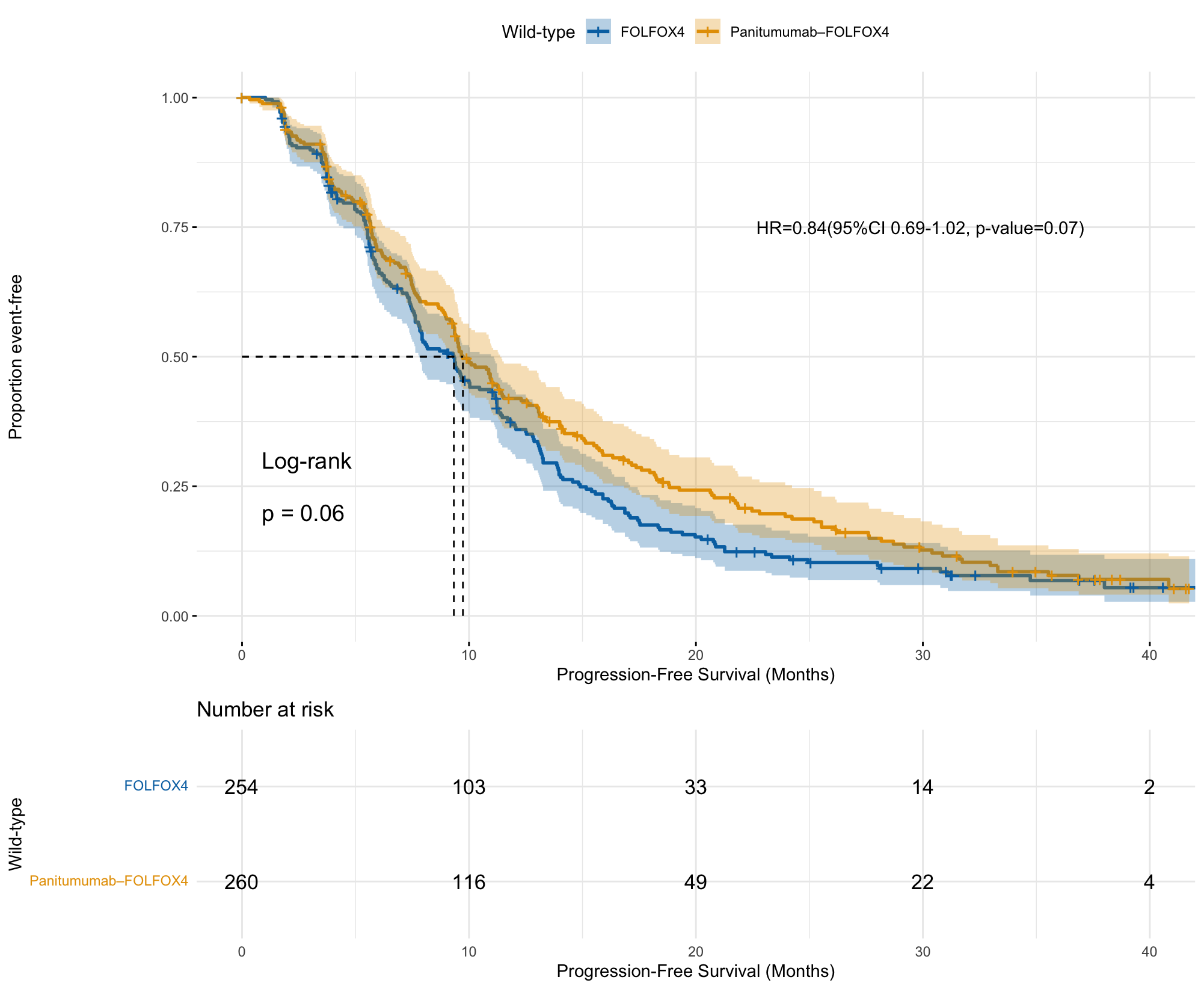}

    \caption{Kaplan-Meier curves, p-values from the log-rank test, hazard ratios with confidence intervals, and p-values for panitumumab plus
    FOLFOX4 and FOLFOX4 alone in patients with wild-type KRAS.}
    \label{fig:5.2}
\end{figure}

\begin{figure}[H]
    \centering
    \includegraphics[width=1\textwidth,height=0.6\textwidth]{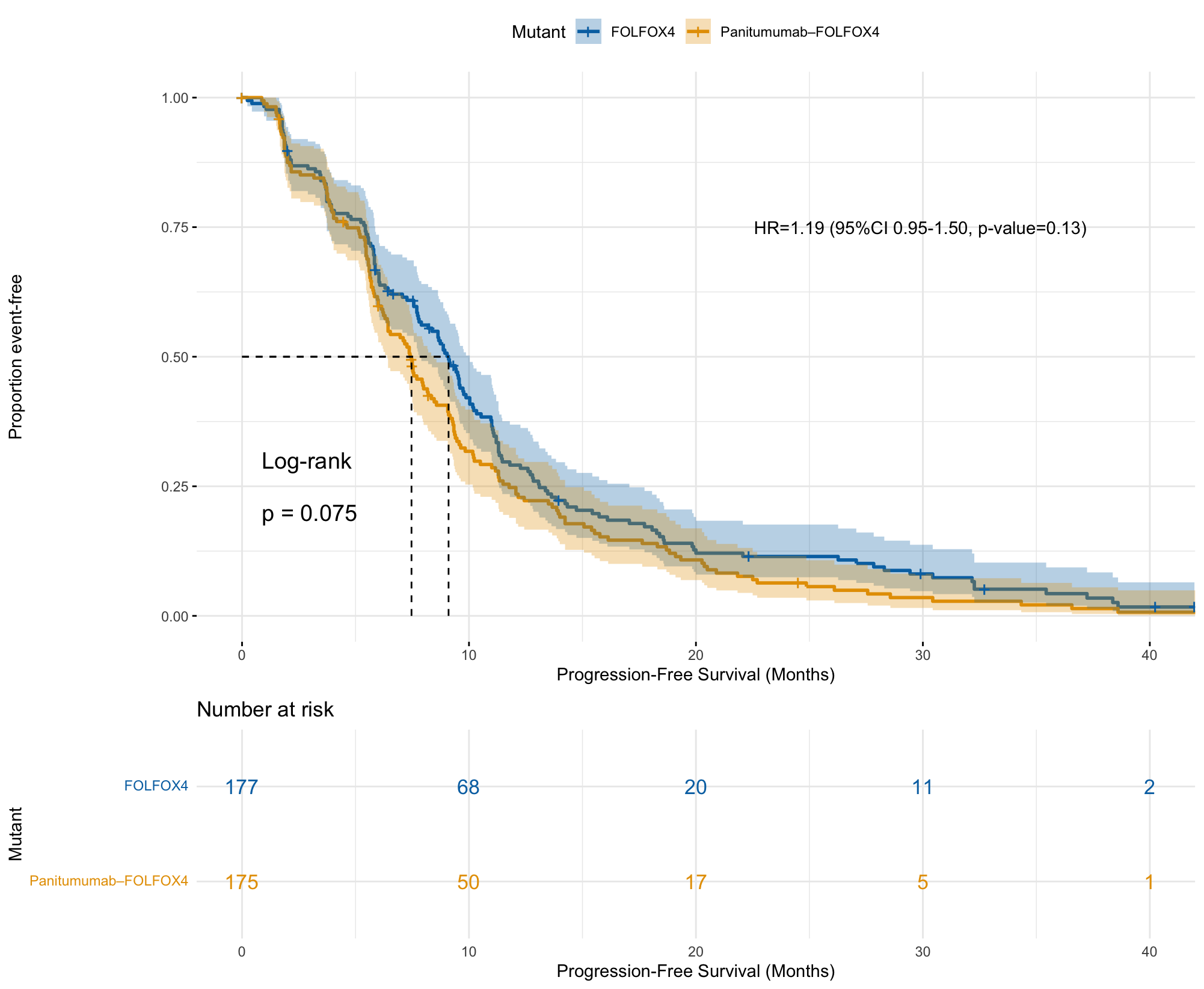}

    \caption{Kaplan-Meier curves, p-values from the log-rank test, hazard ratios with confidence intervals, and p-values for panitumumab plus
    FOLFOX4 and FOLFOX4 alone in patients with KRAS mutations.}
    \label{fig:5.3}
\end{figure}

\begin{table}[H]
\centering
\caption{Median progression-free survival and its 95$\%$ confidence interval with panitumumab plus FOLFOX4 and FOLFOX4 alone in patients with
wild-type KRAS}
\begin{tabular}{rlrrr}

  \hline
  Treatment & PFS & 95\% CI Lower & 95\% CI Upper \\
  \hline
FOLFOX4  & 9.33 & 7.80 & 11.10 \\
  Panitumumab–FOLFOX4  & 9.73 & 9.33 & 11.37 \\
   \hline
\end{tabular}
    \label{tab:5.1}
\end{table}

\begin{table}[H]
\centering
\caption{Median progression-free survival and its 95$\%$ confidence interval with panitumumab plus FOLFOX4 and FOLFOX4 alone in patients with
KRAS mutations}
\begin{tabular}{rlrrr}
  \hline
Treatment & PFS & 95\% CI Lower & 95\% CI Upper \\
  \hline
FOLFOX4 & 9.10 & 7.80 & 10.03 \\
Panitumumab–FOLFOX4 & 7.47 & 6.40 & 8.47 \\
   \hline
\end{tabular}
\label{tab:5.2}
\end{table}

 As shown in Table \ref{tab:5.3}, the names, meanings, and taking values of baseline covariates after preprocessing of 514 cases of wild-type
 KRAS are given. To assess the impact of various baseline covariates on patient risk, a total of 9 baseline covariates were included: ATRT,
 ECOG, $SITES\_ NUM \_ STRATA$, SEX, $AGE\_65$, $IS \_ WHITE$, DIAGTYPE, LIVERMET, LDH. The details are described below:
\begin{enumerate}
\item Treatment modality (ATRT): Panitumumab + FOLFOX4 in the experimental group and FOLFOX4 in the control group.
\item ECOG performance status (ECOG): The Eastern Cooperative Oncology Group (ECOG) performance status was used to assess the functional
    status of patients. Patients scoring 0-2 were included in the trial.ECOG performance status is a measure of a patient's functional
    status and ability to perform daily activities. \
\item Metastatic burden ($SITES\_ NUM \_STRATA$): The number of metastatic sites was analyzed as a potential prognostic factor. \
\item Gender (SEX): The patient's gender was recorded in the study entry. \
\item Age ($AGE\_65$): The patient's age was recorded at the time of study entry.\
\item Race ($IS \_ WHITE$): The patient's race was recorded at the time of study entry.

\item Primary tumor location (DIAGTYPE): The location of the primary tumor in the colon or rectum was recorded. \
\item Presence or absence of metastasis to the liver at study entry (LIVERMET): The presence of liver metastasis is an important baseline
    covariate because it is a known prognostic factor in metastatic colorectal cancer and can affect treatment outcome. \
\item Baseline LDH (LDH): As a marker of tumor burden and prognosis. High LDH levels are associated with poorer prognosis in patients with
    metastatic colorectal cancer. \
\end{enumerate}

\begin{table}[H] 
 \centering
    \caption{Name, meaning, take value, and percentage of baseline covariates after PRIME data preprocessing}
\begin{tabular}{|c|c|c| }
\hline Baseline covariates name & Meaning & Value ( percentage $\%$)   \\

\hline

 &  &1=Panitumumab + FOLFOX4\\
ATRT&experimental/control group  & (experimental group 50.58$\%$)   \\
& &0=FOLFOX4(control group)\\
\hline

  &  &1=Less than 50$\%$ of time in bed ($5.45\%$)   \\
ECOG &Baseline ECOG status & 0=Fully active or \\ & &Symptomatic but movable people\\
\hline

SITES$\_$NUM$\_$STRATA & Metastasis sites &1=metastasis sites$\geq$ 3(44.16$\%$),0=others\\
\hline

SEX & Gender &1=Male(64.01$\%$),0=Female   \\
\hline

$AGE\_65$ & Age &1=Age$\geq$ 65 (41.8$\%$),0=
others   \\
\hline
$IS \_ WHITE$ &White or not &1= White or Caucasian(93.39$\%$),0=others   \\
\hline
DIAGTYPE & Primary tumor type &1=Rectal(36.38$\%$),
0=Colon   \\
\hline
LIVERMET & initial metastasis to the liver &1=Yes($88.33\%$),0=NO \\
\hline
&  &1=LDH \\
LDH & Lactate dehydrogenase levels &$\geq$2ULN(29.18$\%$) \\
&  &0=LDH<2ULN \\
\hline
\end{tabular}
\label{tab:5.3}

\end{table}

\noindent{\textbullet}\quad\parbox[t]{0.9\textwidth}{\textbf{Model Algorithm Fitting}}

In the PRIME trial, the response of patients in the experimental group to the drugs was assessed according to the RECIST 1.1 criteria
\cite{eisenhauer2009new}, and in the experimental group, we can estimate an objective response rate at $61\%$. For the control group, we can not
observe the drug's response conditions in the control group because of the different mechanisms of action of the targeted drugs and
chemotherapy, so in this subsection, we fit the Dual Cox model algorithm to 514 wild-type KRAS patients to classify the control group and
explore the difference in the efficacy of panitumumab plus FOLFOX4 and FOLFOX4 alone on patients in different subgroups.

According to the Dual Cox model described in Section\ref{sec:dual}, subsection 2, we randomly choose 1000 different initial values to fit the
Dual Cox model, from which we select the initial values that make the maximum value of the log-likelihood function.

Finally, we obtain the form of the Dual Cox model as

$$S(t,\delta\mid\textbf{x})=\pi S_1(t,\delta\mid\textbf{x})+(1-\pi)S_2(t,\delta\mid\textbf{x}),$$

the hazard function in the response group
$$h_1(t \mid \boldsymbol{x})=h_{0 1}(t) \exp \left(\boldsymbol{\beta}_1^{\mathrm{T}} \boldsymbol{x}\right),$$the hazard function in the
non-response group
$$h_2(t \mid \boldsymbol{x})=h_{0 2}(t) \exp \left(\boldsymbol{\beta}_2^{\mathrm{T}} \boldsymbol{x}\right),$$
where\begin{align*}
  \boldsymbol{x} &= (ATRT, ECOG, SITES\_NUM\_STRATA, SEX, \\
  &\qquad AGE\_65, IS\_WHITE, DIAGTYPE, LIVERMET, LDH)^{\mathrm{T}},\\
    \boldsymbol{\Pi}&=(0.53,0.47),\\
    \boldsymbol{\beta}_1 &= (\num{-2.27}, \num{0.99}, \num{0.38}, \num{-0.18}, \num{0.34}, \num{0.06}, \num{-0.21}, \num{-0.54},
    \num{0.53})^{\mathrm{T}}, \\
\boldsymbol{\beta}_2 &= (\num{1.26}, \num{0.53}, \num{0.13}, \num{-0.41}, \num{0.06}, \num{0.30}, \num{-0.43}, \num{-0.29},
\num{0.53})^{\mathrm{T}}.
\end{align*}

\noindent{\textbullet}\quad\parbox[t]{0.9\textwidth}{\textbf{Subgroup Classification}}

The Dual Cox model divides the unobserved population into the response and non-response groups, and the results are shown in Table
\ref{tab:5.4}. Among them, the response group contains 302 individuals with a censoring rate of 16.6$\%$, and the non-response group contains
212 individuals with a censoring rate of 18.4$\%$. The mean values of individual convergence probabilities $\hat{u}_{ik}$ in different
populations are 0.95 and 0.98, respectively, which are close to 1, [[[indicating that the model can well converge individuals into the same
subgroup, and there is little difference between individuals within the group]]]. In addition, as shown in Figure \ref{fig:5.4}, the Dual Cox
model algorithm is able to group the non-censored data well, and the maximum posterior probability $\max(\hat{u}_{i1},\hat{u}_{i2})$ is 1 for
all non-censored data for the unobserved control population. However, for the censored data, the model has some uncertainty, but the certainty
of classification increases with increasing progression-free survival, indicating that our algorithm has difficulty in classifying patients
with censored, short progression-free survival, which is consistent with the conclusion of Eng and Hanlo\cite{eng_discrete_2014}’s discussion
about the unsupervised case. Based on the results of the subgroup classification, we can obtain an estimate of the objective response rate,
which is 0.51 in the control group, compared to an estimated response rate of 0.61 in the experimental group, and the convergence results were
in line with the real situation, that is, panitumumab plus FOLFOX4 can improve the objective response rate of patients in wild-type KRAS
patients.

\begin{table}[H]
\centering
\caption{The Dual Cox model algorithm classifies wild-type KRAS patients with unobserved response conditions to drugs}
\begin{tabular}{|c|c|c|c|c|}
\hline
& Total Patients & Censored Rate($\%$) & Mean of $\hat{u}_{ik}$  \\
\hline
Responders & 302 & 16.6$\%$ & 0.95 \\
\hline
Non-Responders & 212 & 18.4$\%$ & 0.98  \\
\hline
Overall Population & 514 & 17.3$\%$ & --\\
\hline
\end{tabular}
\label{tab:5.4}

\end{table}

\begin{figure}[H]
    \centering
    \includegraphics[width=1\textwidth,height=0.5\textwidth]{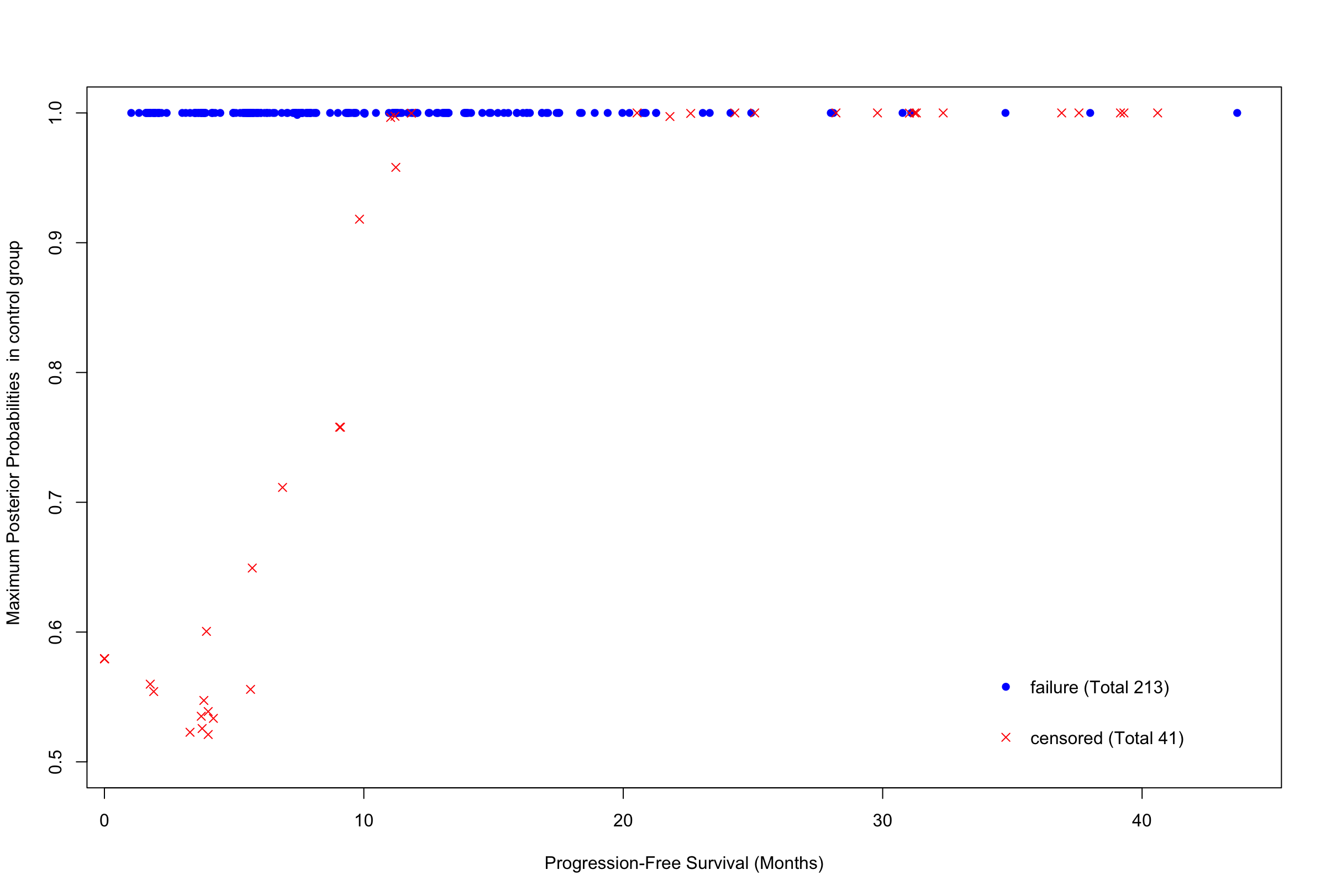}
    \caption{The maximum posterior probability $\max(\hat{u}_{i1},\hat{u}_{i2})$  fitted by the Dual Cox model algorithm for wild-type KRAS
    patients with unobserved response conditions.}
    \label{fig:5.4}
\end{figure}

\newpage

\noindent{\textbullet}\quad\parbox[t]{0.9\textwidth}{\textbf{Parameter Estimation}}

We fit the Cox proportional hazards model to the overall population and compare the results with the results of the Dual Cox Model. As shown
in Table \ref{tab:5.5} and Figure \ref{fig:5.5}, the estimated coefficients for the overall population, response, and non-response groups are
all of the same sign for the eight baseline covariates except for treatment efficacy ATRT, which illustrates the homogeneity of the response
and non-response groups for the eight baseline covariates except for ATRT and the heterogeneity for treatment efficacy ATRT.

For the overall population, ATRT is not significant at the level of 0.05 (p=0.07), and in the response group HR=0.1 with p-value <0.05, indicating
better efficacy in the experimental group than in the control group for patients who have responded. In contrast, HR=3.52 with p-value<0.05 in the
non-response group, indicating better efficacy in the control group than in the experimental group for non-response patients; also, the
estimated coefficients of baseline covariates ECOG and SITES$\_$NUM$\_$STRATA, which are significant in both the responder group and overall
population, show ECOG=1 (less than 50$\%$ of time in bed) and metastatic site$\geq$3 have a significantly higher risk, but do not raise risk
in the non-response group; also, for patients with age$\geq$65 in the response group, the risk is significantly higher for age$\geq$65,
indicating that age$\geq$65 has a higher risk for the response group, but not for the overall population, non-response group. For the overall
population, non-responders group, there is no significant increase in risk. For the covariate LDH, LDH=1 shows a significantly higher risk
than LDH=0 for the overall population, responders, and non-responders groups.

\begin{table}[H] 
 \centering
    \caption{Estimated coefficients, HR, and p-values are obtained from the Dual Cox model and the Cox proportional hazards model for
    wild-type KRAS patients}

        \begin{threeparttable}
\begin{tabular}{p{2cm}ccccccccc}
\hline & \multicolumn{3}{c}{ Responders } & \multicolumn{3}{c}{ Non-Responders } & \multicolumn{3}{c}{ Overall Population } \\
\hline Covariates & $\hat{\boldsymbol{\beta}_1}$ & HR &p-values & $\hat{\boldsymbol{\beta}_2}$ & HR &p-values  & $\hat{\boldsymbol{\beta}}$ &
HR & p-values \\
\hline
ATRT & -2.27 & 0.10& $\pmb{2\cdot 10^{-16}}$
 & 1.26 &3.52 & $\pmb{2\times10^{-11}}$ & -0.18&  0.84 & 0.07 \\
\hline
ECOG & 0.99 & 2.68& $\pmb{4\cdot 10^{-5}}$ & 0.53 & 1.69& 0.16 & 0.50 &1.64 & $\textbf{0.02}$\\
\hline
SITES &  0.38 & 1.50& $\textbf{0.004}$  & 0.13 &1.14 & 0.45 & 0.27 &1.31 & $\textbf{0.007}$\\
\hline
SEX & -0.18& 0.83 & 0.15 & -0.41 &0.66 & $\textbf{0.01}$ & -0.09 & 0.91& 0.39 \\
\hline
AGE\_65 & 0.34 &1.40 & $\textbf{0.01}$ & 0.06 & 1.06& 0.74 & 0.11 &1.11& 0.30  \\
\hline
IS\_WHITE & 0.06 & 1.06& 0.81 & 0.30 & 1.35& 0.31 & 0.08  &1.08&0.70\\
\hline
DIAGTYPE & -0.21 & 0.81& 0.11 & -0.43 & 0.65 & $\textbf{0.03}$ & -0.03 &0.97& 0.75\\
\hline
LIVERMET & -0.54 & 0.58 & $\textbf{0.004}$ & -0.29 & 0.75& 0.19 & -0.24 &0.78 & 0.12\\
\hline
LDH & 0.53 &1.70 &  $\pmb{5\cdot10^{-4}}$ & 0.53 &1.70 & $\textbf{0.004}$ & 0.34 &1.40&$\textbf{0.001}$\\
\hline
\end{tabular}
    \label{tab:5.5}
    \begin{tablenotes}
        \footnotesize
        \item[1] $SITES\_NUM\_STRATA$ is
    indicated here by SITES
        \item[2]  p-values below the significance level of 0.05 are indicated in bold
      \end{tablenotes}
          \end{threeparttable}

\end{table}

\begin{figure}[H]
    \centering
    \includegraphics[width=1\textwidth,height=0.65\textwidth]{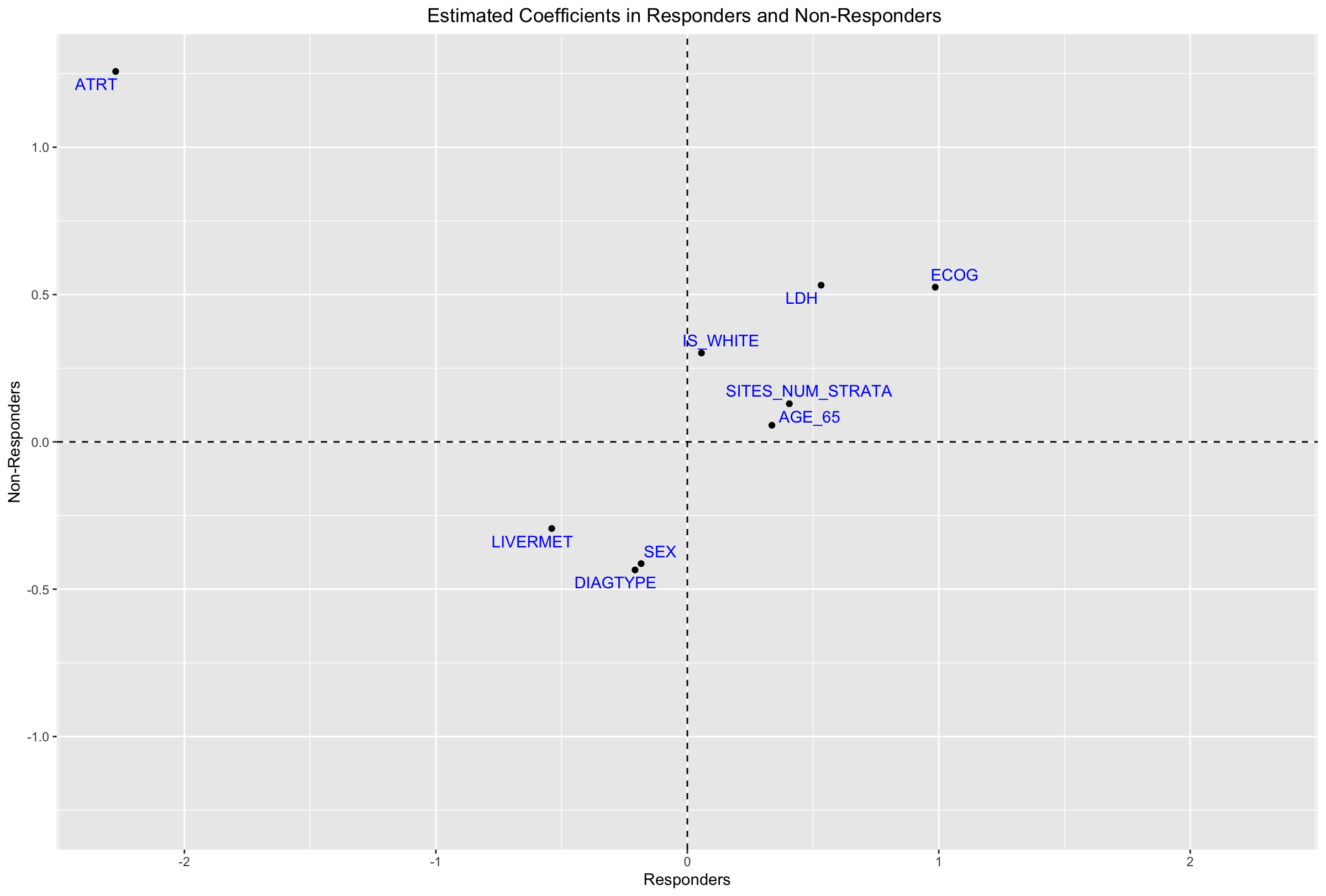}
    \caption{Estimated coefficients for the response and non-response groups separately for patients in the wild-type KRAS patients.}
    \label{fig:5.5}
\end{figure}

\noindent{\textbullet}\quad\parbox[t]{0.9\textwidth}{\textbf{Diagnostics}}

After subgroup classifying, we fit Cox proportional hazards model in the response and non-response groups, respectively.
After that, time-dependent ROC curve analysis is performed on the two subgroups and the overall population. Time-dependent ROC curve analysis
is widely used in biomedical research to assess the performance of models \cite{kamarudin_time-dependent_2017}.

Various definitions and estimation methods have been proposed in several literatures on time-dependent ROC curves
\cite{kamarudin_time-dependent_2017}\cite{Blanche2012TimedependentAW}. Here, based on the method proposed by Heagerty and
Zheng(2005)\cite{heagerty_survival_2005}, sensitivity is extended to incident sensitivity and specificity is extended to dynamic specificity,
and then ROC curves are drawn. Incident sensitivity refers to the probability that an individual has a marker value greater than c among
individuals experiencing an event at time t. Dynamic specificity refers to the probability that an individual has a marker value less than or
equal to c among individuals without an event at time t. It is generally accepted that higher marker values are more indicative of disease.
The performance of markers is assessed by the area under the ROC curve (AUC); the higher the AUC value, the better the model performance.

At each time point t, each individual is categorized as a case or control. For each patient $i (i=1, \ldots, n)$, let $T_i$ be the time of
disease event and $M_i$ be the marker value ($\boldsymbol{\beta}^{\mathrm{T}} \boldsymbol{x}$ under the Cox  proportional hazards model). The
observed survival time is $Y_i=\min \left(T_i\right.$, $\left.C_i\right)$, where $C_i$ is the censoring time, and let $\delta_i$ be the
censoring indicator variable that takes the value 1 when the event (disease) occurs and 0 otherwise.

Therefore, the sensitivity, specificity, and the resulting $A U C(t)$ are defined as
$$
\begin{aligned}
 S e^{I}(c,t)&=P\left(M_i>c \mid T_i=t\right), \\
 S p^{D}(c,t)&=P\left(M_i \leq c \mid T_i>t\right), \\
 A U C^{I, D}(t)&=P\left(M_i>M_j \mid T_i=t, T_j>t\right), i \neq j.
\end{aligned}
$$

Denote the risk set as $R(t)=\left\{i: Y_i \geq t\right\}$.
$S(t)=\left\{i: Y_i > t\right\}$, $n_t$ is the size of S(t). Hagerty and Zheng\cite{heagerty_survival_2005} proposed to make the following
estimates under the assumption of proportional hazards.
$$
\widehat{\mathrm{S e}}^{\mathbb{I}}(c, t)=\frac{\sum_{i \in R(t)} \mathbb{I}\left(M_i>c\right) \exp \left\{\boldsymbol{\beta}^{\mathrm{T}}
\boldsymbol{x_i}\right\}}{\sum_{i \in R(t)} \exp \left\{\boldsymbol{\beta}^{\mathrm{T}} \boldsymbol{x_i}\right\}},
$$

$$
\widehat{\operatorname{S p}}^D(c,t)=1-\frac{1}{n_t} \sum_{i \in S(t)} \mathbb{I}\left (M_i>c \right).
$$

Depending on the value of different c, the sensitivity, and specificity sequences at the same time point can be obtained, then the ROC curve
can be drawn. Xu and O'Quigley \cite{xu_proportional_2000} proved the consistency of sensitivity estimation. Since the specificity is an
empirical distribution on S(t), the specificity is also consistent if the sample is unbiased \cite{heagerty_survival_2005}.

From Figure \ref{fig:5.6}, it can be seen that from 0 to 1000 days, for the response group, the AUC is approximately between 0.8 and 0.6. For
the non-response group, the AUC is between 0.7 and 0.65, and for the overall population, it is consistently below 0.6. These illustrate that
the Cox model fitted for both the response and non-response groups performs better than the Cox model without classifying the subgroup.
Similarly, as seen in Figure \ref{fig:5.7}, the AUC at day 100 was 0.8 for the response group, 0.7 for the non-response group, and 0.58 for
the overall population, showing that our model can successfully split the population into two subgroups. In this dataset, the Dual Cox model
outperforms the Cox proportional hazards model.

\begin{figure}[H]
    \centering
    \includegraphics[width=1\textwidth,height=0.5\textwidth]{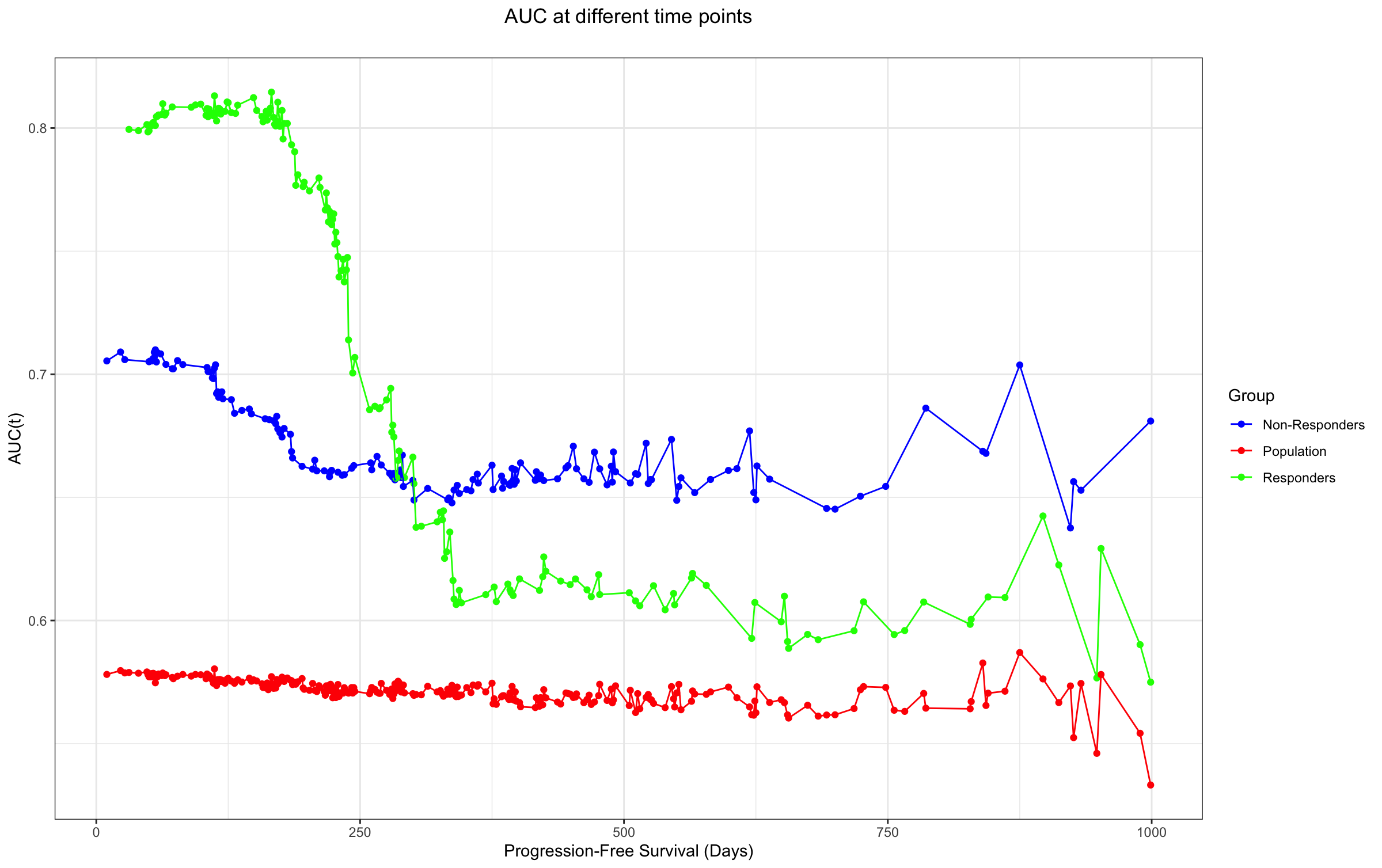}
    \caption{Values of AUC at different time points by fitting separate Cox proportional hazards models for the  responders, non-responders,
    and overall population in the wild-type KRAS patients.}
    \label{fig:5.6}
\end{figure}

\begin{figure}[H]
    \centering
    \includegraphics[width=1\textwidth,height=0.5\textwidth]{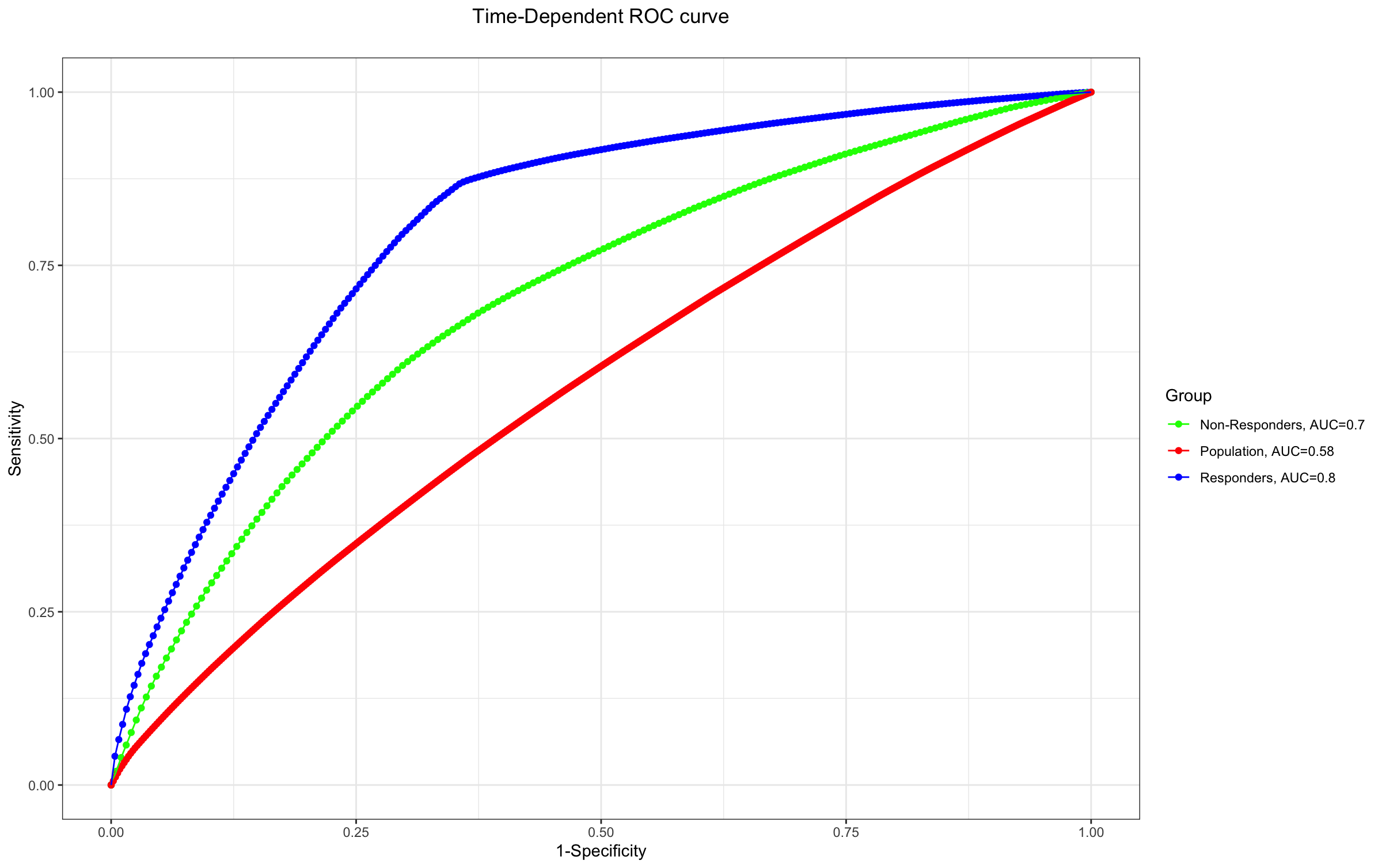}
    \caption{The ROC curves of the Cox proportional hazards model at day 100 for responders, non-responders, and the overall population in the
    wild-type KRAS patients.}
    \label{fig:5.7}
\end{figure}

\subsection{PACCE Trial}

PACCE\cite{hecht2009randomized} is a randomized, open-label, multicenter phase IIIB clinical trial designed to evaluate the efficacy and
safety of combining chemotherapy, bevacizumab (a humanized monoclonal antibody against tumor angiogenesis) and panitumumab (a monoclonal
antibody against the epidermal growth factor receptor) and chemotherapy in combination with bevacizumab for the treatment of patients with
metastatic colorectal cancer (mCRC). Based on investigator selection, patients were enrolled in one of two cohorts: a fluorouracil, folinic
acid, and oxaliplatin-based chemotherapy (Ox-CT) cohort or a fluorouracil, folinic acid, and irinotecan-based chemotherapy (Iri-CT) cohort.

The study was conducted between 2005 and 2007, and as shown in Figure \ref{fig:5.8}, 1053 patients were randomized to the study at 200 centers
in the U.S. 823 were enrolled in the Ox-CT cohort and 230 were enrolled in the Iri-CT cohort. In both cohorts, patients were assigned in a 1:1
ratio to the experimental (bevacizumab and panitumumab) and control (bevacizumab) groups. The primary endpoint is progression-free survival
(PFS), and secondary endpoints included overall survival (OS), objective response rate (ORR), and safety.

As shown in Figures \ref{fig:5.9}, \ref{fig:5.10} and Tables \ref{tab:5.6}, \ref{tab:5.7} (Note: The public data set contains only 80$\%$ of subjects,
and the results are slightly different from those in the original paper), the trial results in the Ox-CT cohort shows that, compare to
bevacizumab plus panitumumab, the use of bevacizumab alone improves progression-free survival in patients who receives bevacizumab
compares to bevacizumab plus panitumumab.

Specifically, median progression-free survival is 10.93 months (95$\%$ CI 9.23-11.83) in the
group receiving bevacizumab plus panitumumab compared with 12.10 months (95$\%$ CI 11.10-13.37) in the group receiving
bevacizumab, with a hazard ratio (HR) of 1.18 (95$\%$ CI 0.99-1.40. P=0.07), log-rank test p-value of 0.096. However, the trial results at
Iri-CT show no significant improvement in progression-free survival in patients on bevacizumab alone compared with bevacizumab added to
panitumumab. The median progression-free survival is 12.43 months (95$\%$ CI 10.10-14.47) in the group receiving bevacizumab added to
panitumumab compared to 11.87 months (95$\%$ CI 10.33-16.23) in the group receiving bevacizumab, with a hazard ratio (HR) of 1.09 (95$\%$ CI
0.74-1.61, p=0.66). The log-rank test p-value is 0.48. Thus, the difference between the Ox-CT cohort and the Iri-CT in the PACCE trial is
related to the difference in response to the addition of panitumumab between the two groups. Patients with tumors in the Iri-CT cohort have no
improvement in risk with the addition of panitumumab, whereas patients with tumors in the Ox-CT cohort have an increased risk.

In summary, the trial does not support the use of panitumumab in combination with bevacizumab and oxaliplatin or irinotecan-based chemotherapy
for the treatment of metastatic colorectal cancer. There is no difference between the panitumumab in combination with bevacizumab group and
the bevacizumab group in the Iri-CT cohort, and patients with panitumumab in combination with bevacizumab in the Ox-CT cohort perform worse in
terms of progression-free survival. We therefore next consider only 653 patients in the Ox-CT cohort.

\begin{figure}[H]
    \centering
    \includegraphics[width=1\textwidth,height=0.4\textheight]{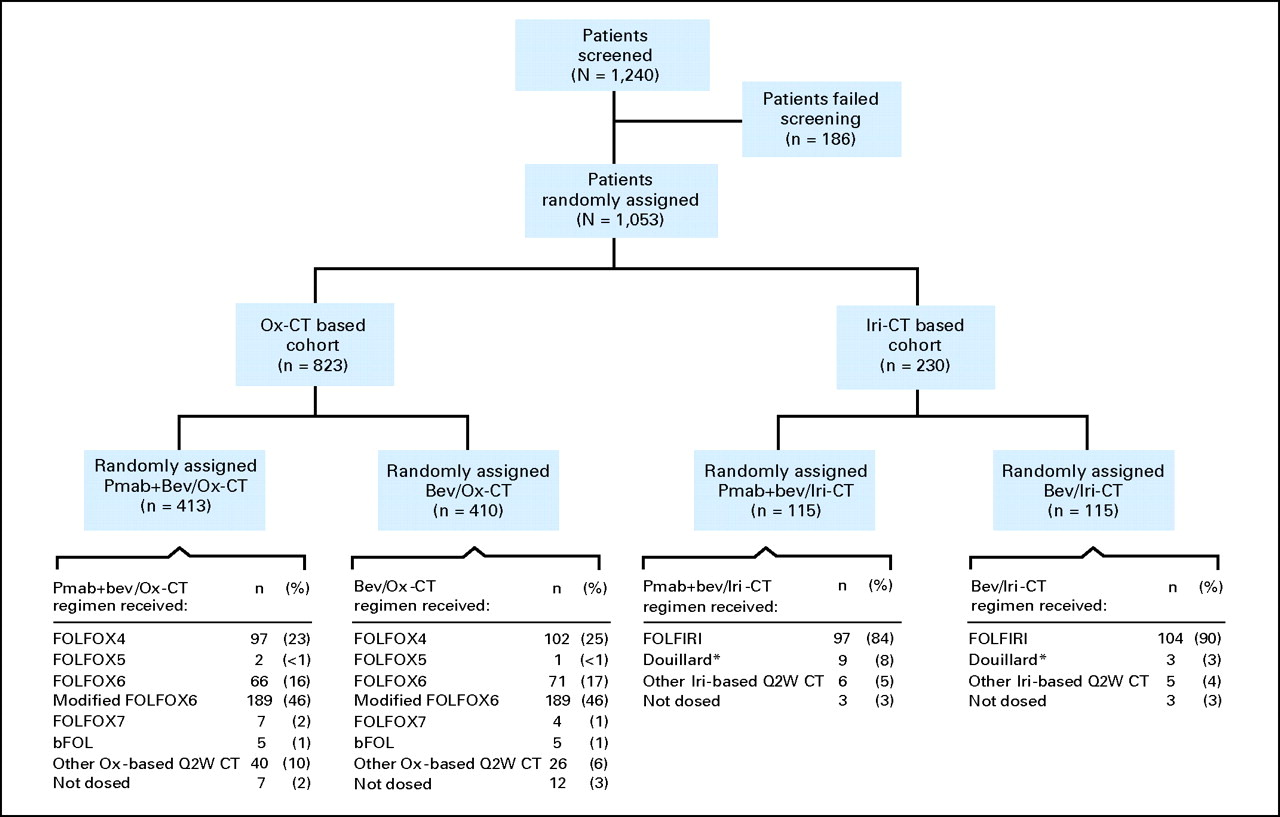}
    \caption{PACCE Trial\cite{hecht2009randomized}}
    \label{fig:5.8}
\end{figure}

\begin{figure}[H]
    \centering
    \includegraphics[width=1\textwidth,height=0.5\textwidth]{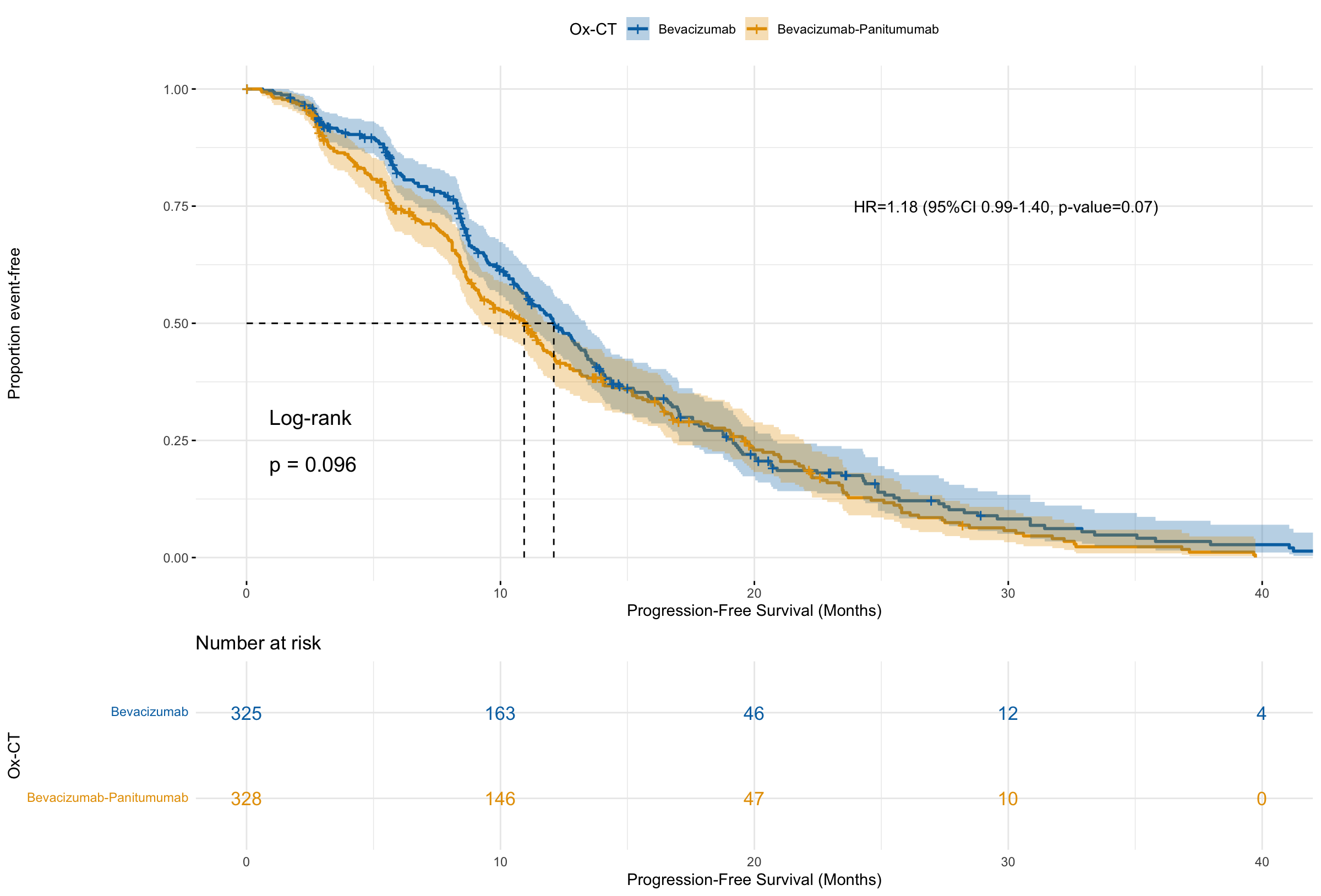}

    \caption{Kaplan-Meier curves, p-values from the log-rank test, hazard ratios with confidence intervals, and p-values for bevacizumab plus
    panitumumab and bevacizumab in the Ox-CT cohort of patients.}
    \label{fig:5.9}
\end{figure}

\begin{figure}[H]
    \centering
    \includegraphics[width=1\textwidth,height=0.5\textwidth]{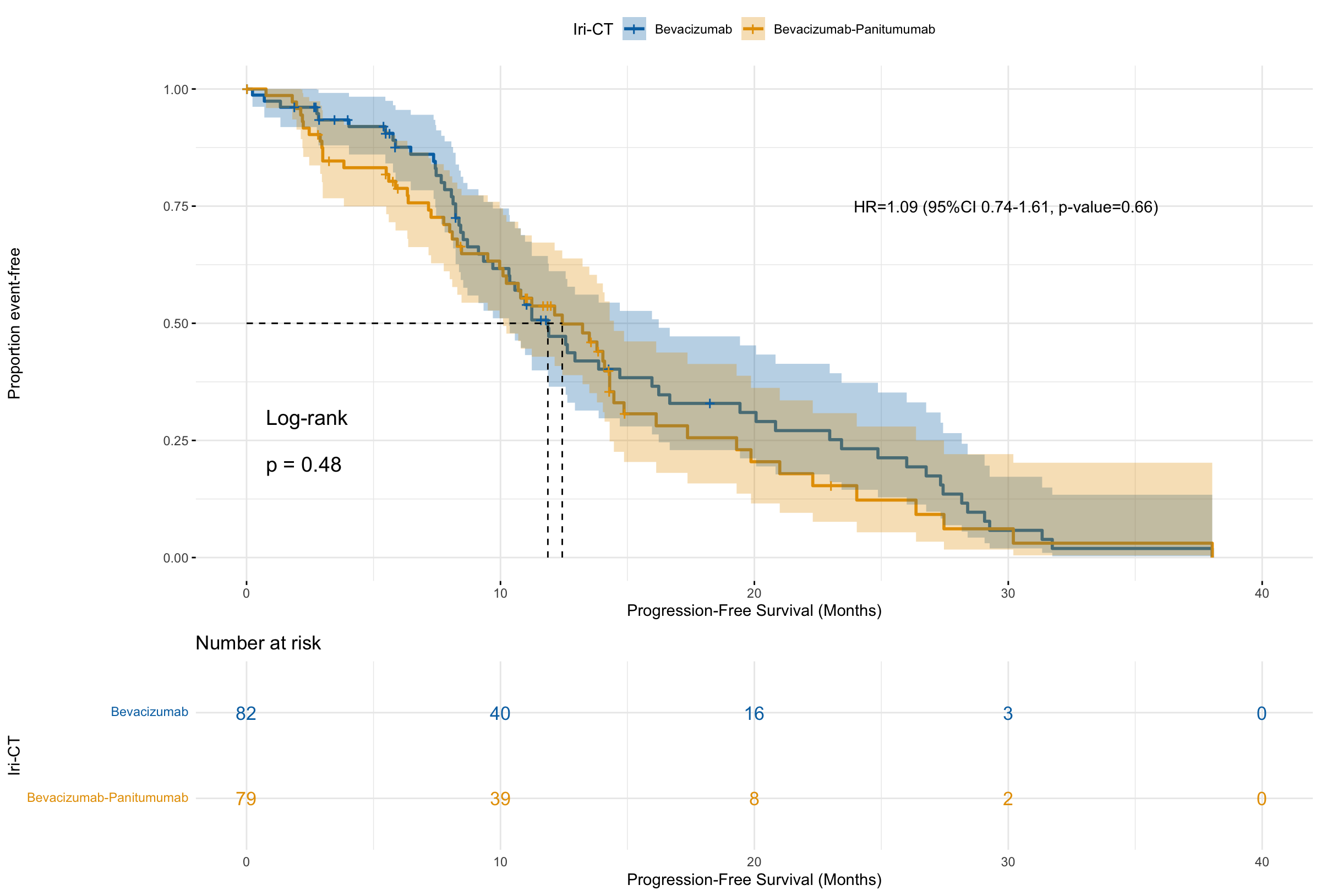}

    \caption{Kaplan-Meier curves, p-values from the log-rank test, hazard ratios with confidence intervals, and p-values for bevacizumab plus
    panitumumab and bevacizumab in the Iri-CT cohort of patients}
    \label{fig:5.10}
\end{figure}

\begin{table}[H]
\centering
\caption{Median progression-free survival and its 95$\%$ confidence interval with bevacizumab plus panitumumab and with bevacizumab in the
Ox-CT cohort of patients}
\begin{tabular}{rlrrr}
  \hline
 Treatment & Progression-free Survival & 95\% CI Lower & 95\% CI Upper \\
  \hline
Bevacizumab  & 12.10 & 11.10 & 13.37 \\
  Bevacizumab-Panitumumab  & 10.93 & 9.23 & 11.83 \\
   \hline
\end{tabular}
    \label{tab:5.6}

\end{table}

\begin{table}[H]
\centering
\caption{Median progression-free survival and its 95$\%$ confidence interval with bevacizumab plus panitumumab and with bevacizumab in the
Iri-CT cohort of patients.}
\begin{tabular}{rlrrr}
  \hline
  Treatment & Progression-free Survival & 95\% CI Lower & 95\% CI Upper \\
  \hline
Bevacizumab  & 11.87 & 10.33 & 16.23 \\
  Bevacizumab-Panitumumab  & 12.43 & 10.10 & 14.47 \\
   \hline
\end{tabular}
    \label{tab:5.7}

\end{table}

 As shown in Table \ref{tab:5.8}, the names, meanings, and values of the baseline covariates after preprocessing are given for 653
 patients in the Ox-CT cohort. To assess the impact of various baseline covariates on patient risk, a total of 10 baseline covariates were
 included: $ATRT, ECOG $,$SITES\_ NUM \_ STRATA$, SEX, $AGE\_65$, $IS \_ WHITE$, DIAGTYPE, LDH, KRASCD, and PRADJYN. The details are
 described below:
\begin{enumerate}
\item Treatment modality (ATRT): Bevacizumab + FOLFOX4 in the experimental group and bevacizumab in the control group.
\item ECOG performance status (ECOG): The Eastern Cooperative Oncology Group (ECOG) performance status was used to assess the functional
    status of patients. Patients scoring 0-2 were included in the trial. ECOG performance status is a measure of a patient's functional
    status and ability to perform daily activities. \
\item Metastatic burden ($SITES\_ NUM \_ STRATA$): The number of metastatic sites was analyzed as a potential prognostic factor. \

\item Gender (SEX): The patient's gender was recorded in the study entry. \

\item Age ($AGE\_65$): The patient's age was recorded at the time of study entry.\

\item Race ($IS \_ WHITE$): The patient's race was recorded at the time of study entry.

\item Primary tumor location (DIAGTYPE): The location of the primary tumor in the colon or rectum was recorded. \

\item Baseline LDH (LDH): As a marker of tumor burden and prognosis. High LDH levels are associated with poorer prognosis in patients with
    metastatic colorectal cancer. \

\item KRAS gene (KRASCD): KRAS gene is a useful biomarker for predicting response to panitumumab. \

\item Adjuvant therapy (PRADJYN): Records whether the patient has had adjuvant therapy. \
\end{enumerate}

\begin{table}[H] 
 \centering
    \caption{Name, meaning, take value, and percentage of baseline covariates after PRIME data preprocessing}
\begin{tabular}{|c|c|c| }
\hline Baseline covariates name & Meaning & Value ( percentage $\%$)   \\

\hline

 &  &1 = Bevacizumab +Panitumumab\\
ATRT&Experimental/control group & (experimental group 49$\%$)   \\
& &0=FOLFOX4(control group)\\
\hline

  &  &1=Less than 50$\%$ of time in bed ($41\%$)   \\
ECOG &Baseline ECOG status & 0=Fully active or \\ & &Symptomatic but movable people\\
\hline

SITES$\_$NUM$\_$STRATA & Metastasis sites &1=metastasis sites$\geq$ 3(21.7$\%$),0=others\\
\hline

SEX & Gender &1=Male(55.3$\%$),0=Female   \\
\hline

$AGE\_65$ & Age &1=Age$\geq$ 65 (34.8$\%$),0=
others   \\
\hline
$IS \_ WHITE$ &White or not &1= White or Caucasian(72.7$\%$),0=others \\
\hline
DIAGTYPE & Primary tumor type &1=Rectal(21.7$\%$),
0=Colon   \\

\hline
 &  &1=LDH \\
LDH & Lactate dehydrogenase levels &$\geq$1.5ULN(18.6$\%$)\\
 &  &0=LDH<1.5ULN \\
\hline

\hline
 &  &1=Wild type($49.31\%$)\\

KRASCD & KRAS Gene Results & 2=Mutant type($32.92\%$) \\
 &  & 88=Failure\\

\hline
PRADJYN & Adjuvant treatment  &1=Yes($33.54\%$),0=NO \\
\hline

\end{tabular}
\label{tab:5.8}
\end{table}

\noindent{\textbullet}\quad\parbox[t]{0.9\textwidth}{\textbf{Model Fitting}}

In the PACCE trial, we estimate an objective response rate of 0.45 in the experimental group, which was assessed according to mRECIST criteria
\cite{lencioni_modified_2010}. In this subsection, we fit the Dual Cox model to the 672 patients in the Ox-CT cohort, classified the
control population in the Ox-CT cohort, and investigated the effect of panitumumab plus Bevacizumab as compared with Bevacizumab alone.

Similar to the previous subsection 5.1.1 for the PRIME dataset, we randomly choose 1000 different initial values to fit the Dual Cox model
algorithm, from which we select the initial value that make the maximum value of the log-likelihood function.

Finally, we obtain the form of the the Dual Cox model as

$$S(t,\delta\mid\textbf{x})=\pi S_1(t,\delta\mid\textbf{x})+(1-\pi)S_2(t,\delta\mid\textbf{x}),$$

the hazard function in the response group

$$h_1(t \mid \boldsymbol{x})=h_{0 1}(t) \exp \left(\boldsymbol{\beta}_1^{\mathrm{T}} \boldsymbol{x}\right),$$the hazard function in the
non-response group
$$h_2(t \mid \boldsymbol{x})=h_{0 2}(t) \exp \left(\boldsymbol{\beta}_2^{\mathrm{T}} \boldsymbol{x}\right),$$where\begin{align*}
  \boldsymbol{x} &= (ATRT, ECOG, SITES\_NUM\_STRATA, SEX, \\
  &\qquad AGE\_65, IS\_WHITE, DIAGTYPE, LDH,KRASCD,PRADJYN)^{\mathrm{T}},\\
  \boldsymbol{\Pi}&=(0.53,0.47),\\
\boldsymbol{\beta}_1 &= (-\num{1.58}, \num{0.73}, \num{0.47}, \num{0.32}, \num{-0.11}, \num{-0.54}, \num{0.30}, \num{0.22}, \num{0.00},
\num{0.10})^{\mathrm{T}}, \\
\boldsymbol{\beta}_2 &= (\num{1.66}, \num{0.61}, \num{0.07}, \num{-0.01}, \num{0.30}, \num{-0.10}, \num{0.14}, \num{0.18}, \num{0.00},
\num{-0.14})^{\mathrm{T}}.
\end{align*}

\noindent{\textbullet}\quad\parbox[t]{0.9\textwidth}{\textbf{Subgroup Classification}}

The Dual Cox model divides the unobserved population into two groups, responders and non-responders, and the results are shown in Table
\ref{tab:5.9}. The response group contains 348 individuals with a censoring rate of 17.0$\%$, and the non-response group contains 305
individuals with a censoring rate of 31.8$\%$. The mean values of individual convergence probabilities $\hat{u}_{ik}$ in different subgroups
are 0.96 and 0.96, respectively, which are close to 1, indicating that the model can well converge individuals into the same subgroup, and
there is little difference between individuals within the group. In addition, as shown in Figure \ref{fig:5.11}, the Dual Cox model can
classify the non-censored data well, and the maximum posterior probability $\hat{u}_{i1},\hat{u}_{i2}$ is 1 for all non-censored data for the
unobserved control group. However, for the censored data, the model has some uncertainty in classifying patients, but the certainty of
classification increases as the progression-free survival increases, suggesting that our algorithm has difficulty classifying patients with
censored and short progression-free survival. These conclusions for the convergence results are consistent with those of the previous
subsection for the PRIME dataset. Based on the subgroup classification results, the objective response rate of 0.61 is estimated for the
control group, compared to an estimated response rate of 0.45 in the experimental group, and the convergence results are consistent with the
true situation, i.e., the response rate of patients with panitumumab combined with bevacizumab in the Ox-CT cohort is worse than that of
bevacizumab.

\begin{table}[H]
\centering
\caption{The Dual Cox model algorithm classifies patients with unobserved response conditions in the Ox-CT cohort}
\begin{tabular}{|c|c|c|c|c|}
\hline
& Total Patients & Censored Rate($\%$) & Mean of $\hat{u}_{ik}$  \\
\hline
Responders & 348 & 17.0$\%$ & 0.96 \\
\hline
Non-Responders & 305 & 31.8$\%$ & 0.96  \\
\hline
Overall Population & 653 & 23.9$\%$ & --\\
\hline
\end{tabular}
    \label{tab:5.9}
\end{table}

\begin{figure}[H]
    \centering
    \includegraphics[width=1\textwidth,height=0.5\textwidth]{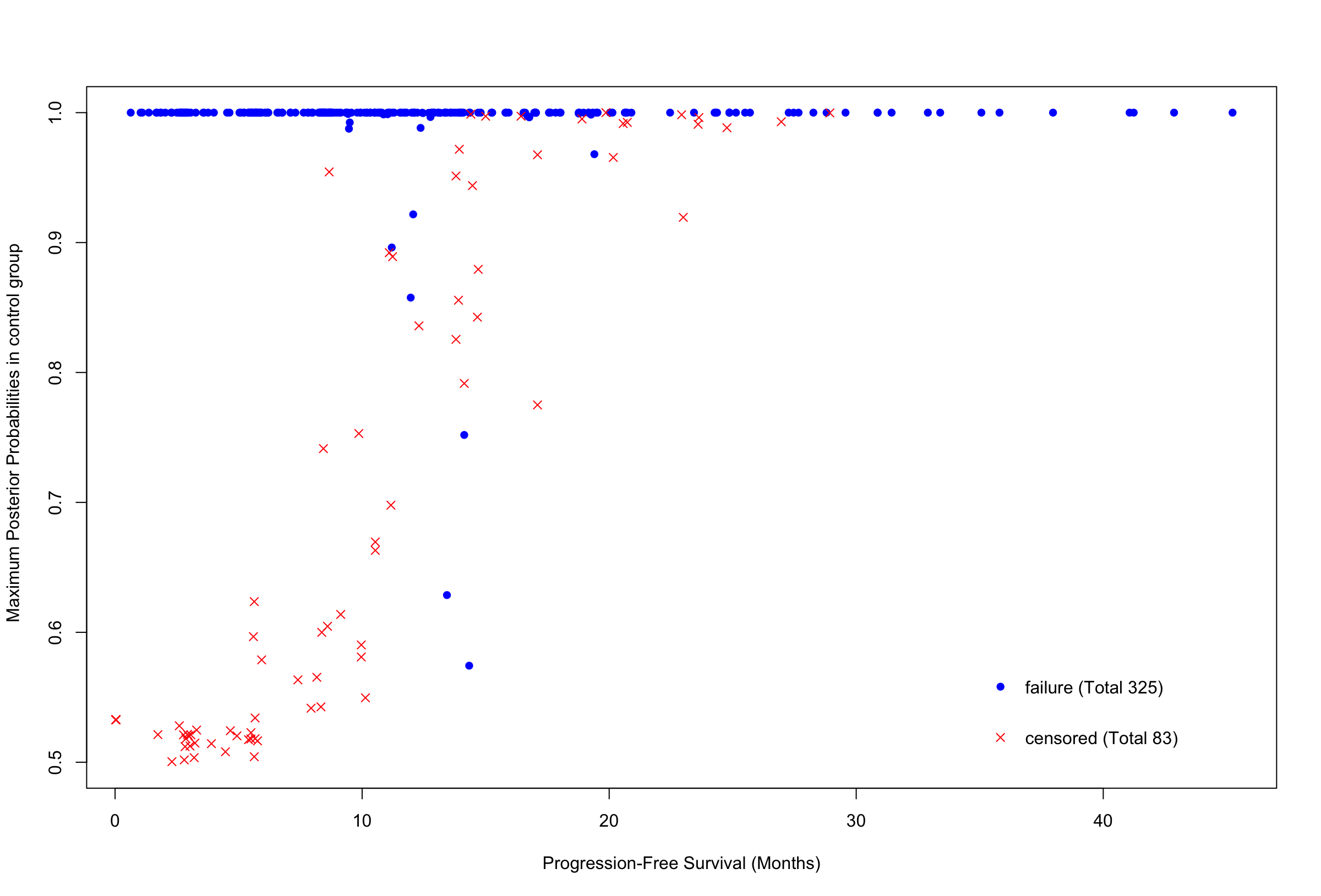}
    \caption{The maximum posterior probability $\max(\hat{u}_{i1},\hat{u}_{i2})$  fitted by the Dual Cox model algorithm for the Ox-CT cohort
    of patients with unobserved response conditions.}
    \label{fig:5.11}
\end{figure}

\noindent{\textbullet}\quad\parbox[t]{0.9\textwidth}{\textbf{Parameter Estimation}}

We fit the Cox proportional hazards model to the overall population and compare the results with those based on the Dual Cox Model. As shown
in Table \ref{tab:5.10} and Figure \ref{fig:5.12}, for the overall, response, and non-response groups on treatment efficacy ATRT, the hazard
ratio is 0.21 for the response group and 5.26 for the non-response group, and the p-values for both response and non-response groups are less
than the significance level of 0.05, indicating the responding and non-responding populations on treatment efficacy heterogeneity in ATRT;
some of the baseline covariates have the same effect on the population in the overall, responder, and non-responder groups, e.g., the
estimated regression coefficients signs of the six baseline covariates ECOG, SITES, SEX, DIAGTYPE, LDH, and KRASCD are the same. For the
baseline covariate ECOG (1=less than 50$\%$ of time in bed, 0=fully active or symptomatic but mobile), it is significant in all three groups
and has the same sign in all three groups, showing that ECOG=1 had a significantly higher risk than ECOG=0, which is in line with our
perception that bedridden people are at greater risk; also, for example, for age$\geq$ 65 years old or with metastatic site number $\geq$3 or
primary tumor type of rectum, the risk is significantly higher in the response and overall groups, but not in the non-response group,
suggesting that age $\geq$ 65 years old, metastatic site number $\geq$3 or primary tumor type of rectum in the non-response group does not
raise the risk of patients. For the effect of the covariate SEX, it can be found that only the response group has an effect on risk, while in
the overall and non-response groups, there is no effect on patients' progression-free survival; for KRASCD, PRADJYN, LDH, and $IS \_ WHITE$,
they are non-significant in all three groups, which means that in all three groups, these baseline covariates do not have an impact on patient
risk in any of the three groups.

\begin{table}[H] 
 \centering
    \caption{Estimated coefficients, HR, and p-values of the Ox-CT cohort patients are obtained from the Dual Cox model and the Cox
    proportional hazards model}
        \begin{threeparttable}

\begin{tabular}{p{2cm}ccccccccc}
\hline & \multicolumn{3}{c}{ Responders } & \multicolumn{3}{c}{ Non-Responders } & \multicolumn{3}{c}{ Overall Population } \\
\hline Covariates & $\hat{\boldsymbol{\beta}_1}$ & HR &p-values & $\hat{\boldsymbol{\beta}_2}$ & HR &p-values  & $\hat{\boldsymbol{\beta}}$ &
HR & p-values \\
\hline
ATRT & -1.58 & 0.21& $\pmb{2\cdot 10^{-16}}$
 & 1.66 &5.26 & $\pmb{2\cdot10^{-16}}$ & 0.16&  1.18 & 0.07 \\
\hline
ECOG & 0.73 & 2.07& $\pmb{2\cdot 10^{-8}}$ & 0.61 & 1.85& $\pmb{5\cdot10^{-5}}$ & 0.31 &1.37 & $\pmb{7\cdot10^{-4}}$\\
\hline
SITES &  0.47 & 1.60& $\textbf{0.004}$  & 0.07 &1.08 & 0.71 & 0.30 &1.35 & $\textbf{0.01}$\\
\hline
SEX & -0.54& 0.58 & $\pmb{2\cdot 10^{-5}}$ & -0.10 &0.91 & 0.51 & -0.04 & 0.96& 0.64 \\
\hline
AGE\_65 & 0.32 &1.38 & $\textbf{0.01}$ & -0.01 & 0.99& 0.95 & 0.21 &1.23& $\textbf{0.02}$  \\
\hline
IS\_WHITE & -0.11 & 0.89& 0.81 & 0.30 & 1.35& 0.14 & 0.10  &1.10&0.45\\
\hline
DIAGTYPE & 0.30 & 1.35& $\textbf{0.05}$ & 0.14 & 1.15 & 0.43 & 0.24 &1.27& $\textbf{0.03}$\\
\hline
LDH & 0.22 &1.25 &  0.15 & 0.18 &1.20 & 0.29 & 0.03 &1.03&0.81\\
\hline
KRASCD & 0 &1.00 &  0.98 & 0 &1.00 & 0.16 & 0 &1.00&0.33\\
\hline
PRADJYN & 0.10 &1.11 &  0.58 & -0.14 &0.87 & 0.44 & 0.06 &1.06&0.63\\
\hline
\end{tabular}

\label{tab:5.10}

 \begin{tablenotes}
        \footnotesize
        \item[1] $SITES\_NUM\_STRATA$ is
    indicated here by SITES
        \item[2]  p-values below the significance level of 0.05 are indicated in bold
      \end{tablenotes}
          \end{threeparttable}

\end{table}

\begin{figure}[H]
    \centering
    \includegraphics[width=1\textwidth,height=0.65\textwidth]{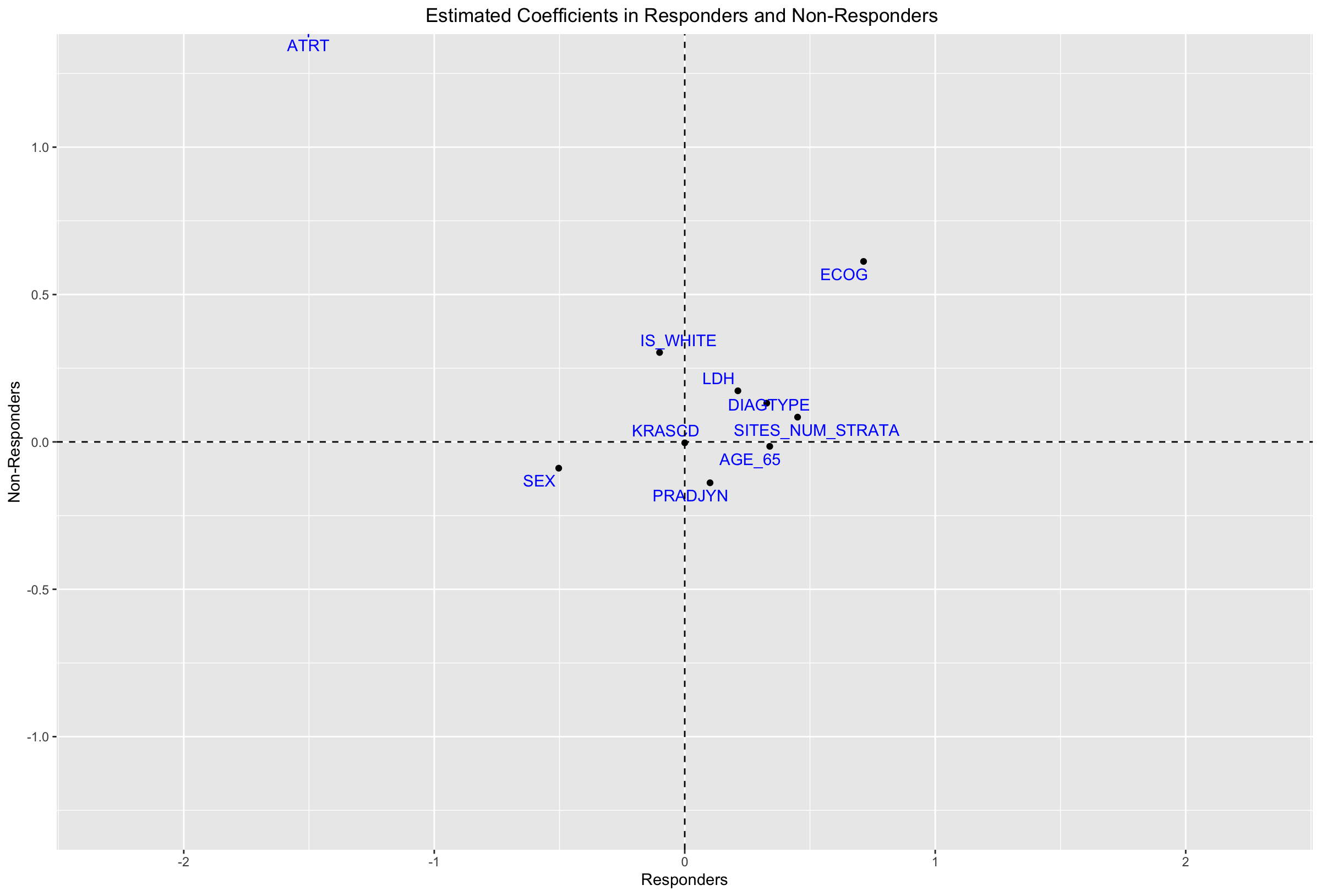}
    \caption{ Estimated coefficients for the response and non-response groups separately for patients in the Ox-CT cohort.}
    \label{fig:5.12}
\end{figure}

\noindent{\textbullet}\quad\parbox[t]{0.9\textwidth}{\textbf{Diagnostics}}

As in section 5.1.1 for the PRIME data, to assess the performance of the model, after classifying the subgroup, we fit three separate Cox
proportional hazards models for the responders, non-responders, and the overall population, after which time-dependent ROC curves are
analyzed.

From Figure \ref{fig:5.13}, it can be seen that from 0 to 900 days, for the response group, the AUC is approximately between 0.75 and 0.55.
For the non-response group, the AUC is between 0.75 and 0.68, and for the overall population it is consistently below 0.6. These illustrate
that the Cox model fitted for both the response and non-response groups performs better than the unclassified population, meaning that the
Dual Cox model is better. In addition, it can be found that for the response group, the AUC decreases gradually after 250 days. To better
illustrate that our model performs better than fitting the Cox proportional hazards model alone, we give the ROC curves for the three
populations at day 100, and from Figure \ref{fig:5.14}, we can see that the AUC at day 100 is 0.73 for the response group, 0.72 for the
non-response group, and 0.57 overall, all of which indicate that our model can effectively classify the subgroup, and the Dual Cox model
fitted to the corresponding group have higher model performance compared to the overall population.

\begin{figure}[H]
    \centering
    \includegraphics[width=1\textwidth,height=0.5\textwidth]{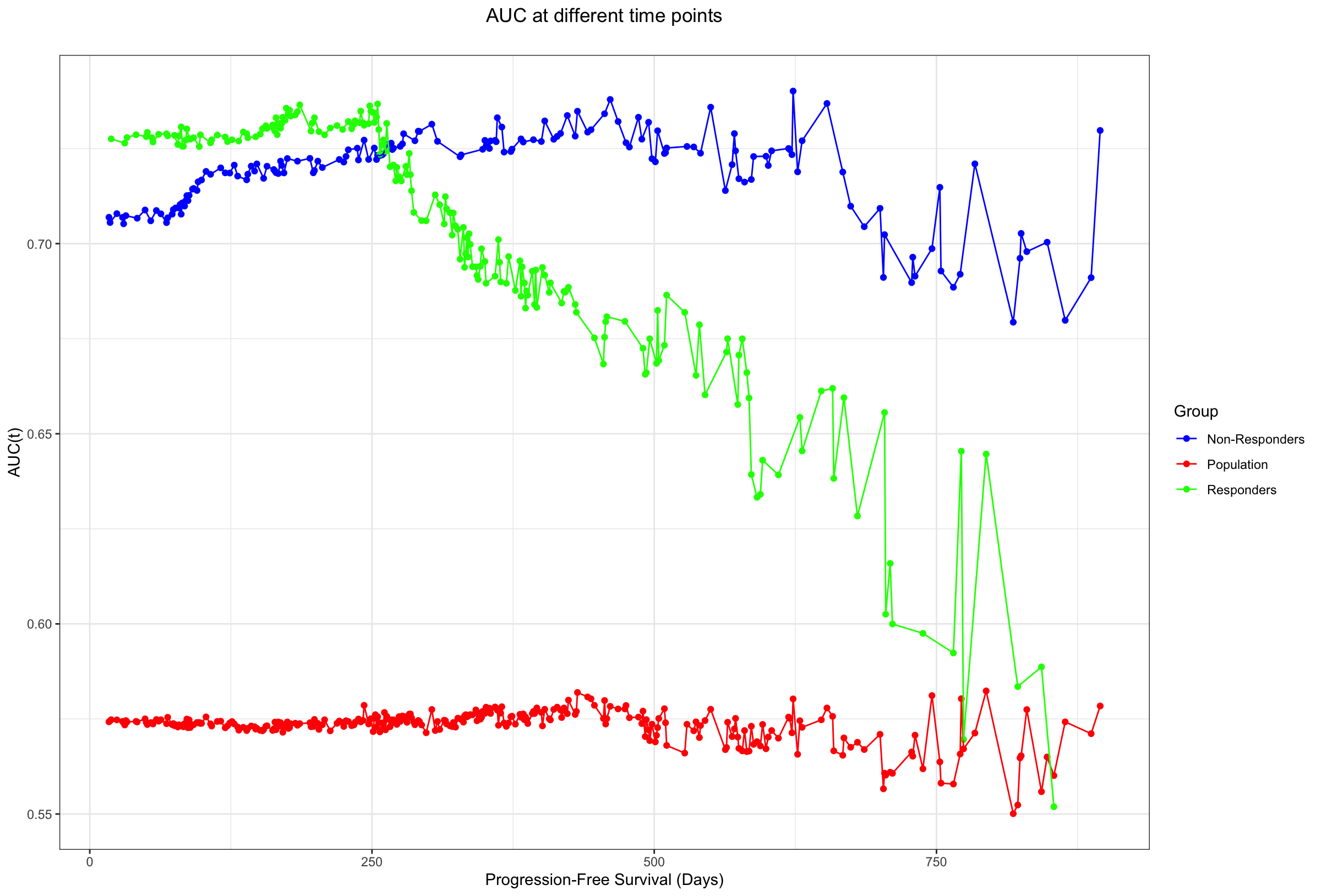}
    \caption{Values of AUC at different time points by fitting separate Cox proportional hazards models for the  responders, non-responders,
    and overall population in the Ox-CT cohort}
    \label{fig:5.13}
\end{figure}

\begin{figure}[H]
    \centering
    \includegraphics[width=1\textwidth,height=0.5\textwidth]{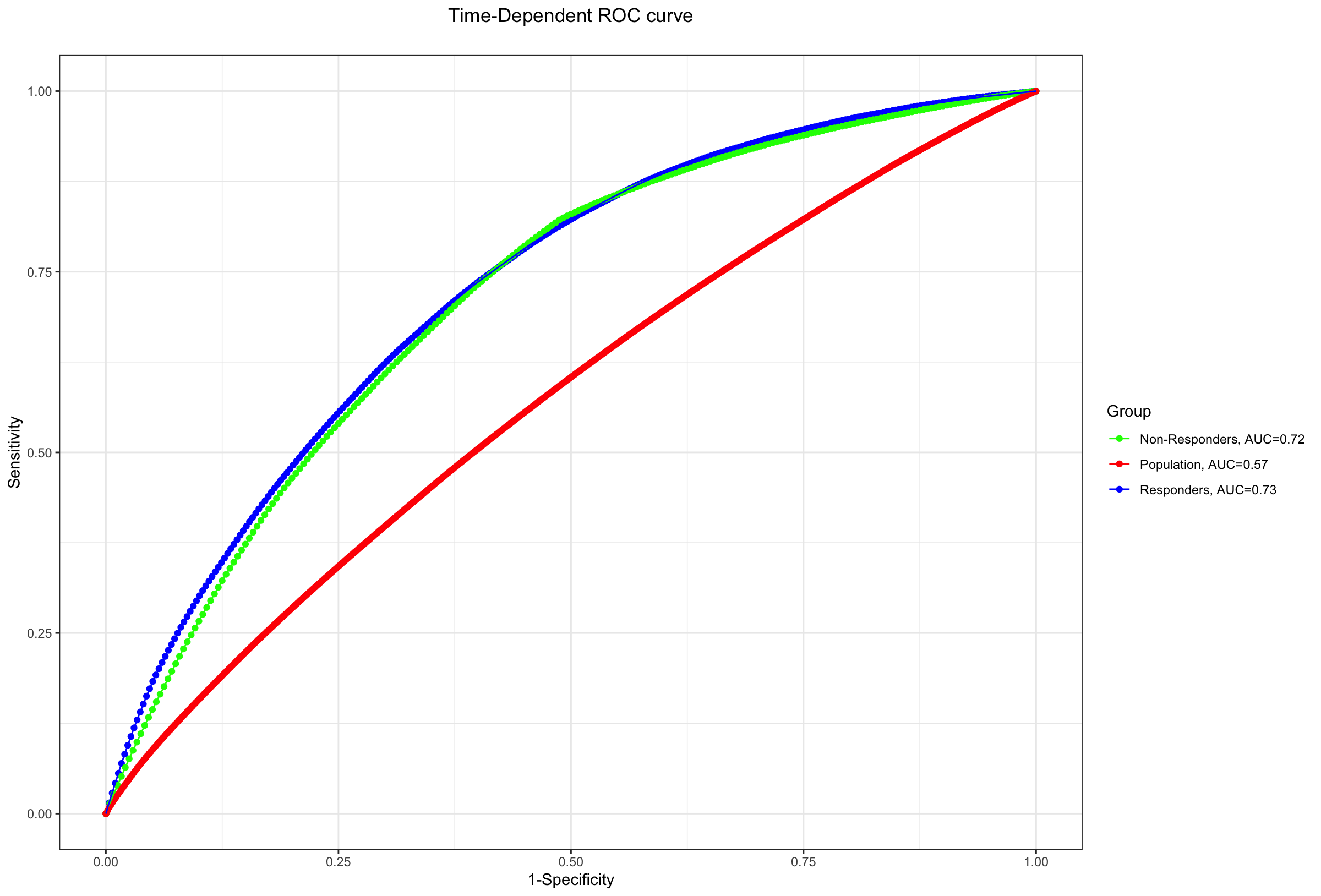}
    \caption{The ROC curves of the Cox proportional hazards model at day 100 for responders, non-responders, and the overall population in the
    Ox-CT cohort.}

    \label{fig:5.14}
\end{figure}

\section{Conclusion and Discussion}

\subsection{Conclusion}

Targeted therapy and immunotherapy differ significantly from conventional cancer treatment. Cox proportional hazards models may not be as
effective when long-term survival and delayed clinical effects occur due to individual differences in drug response.  Since the innovative
cancer treatment and the control treatment have such distinct mechanisms of action, we can only observe responses in the
experimental group. These considerations prompt us to investigate a semi-supervised model that would classify patients in the control group
into responders and non-responders, thereby reducing the number of samples and time required to complete clinical trials and lowering research
and development expenses. In other words, we can predict how the control population will respond to the medicine and restrict the assumption
of proportional hazards to just the responders and non-responders.  Finally, we analyzed the drug's effectiveness by modeling both the
responders and the non-responders simultaneously. The Dual Cox Model is therefore what we advocate for.

For the fitting of the Dual Cox model, as with the general finite mixture model, the semi-supervised mixture Cox model uses the EM
algorithm to calculate the posterior probabilities of individuals belonging to the response and non-response groups by the E step, updates the
mixing probabilities by the M step, and estimates the regression coefficients by maximizing partial likelihood function in the weighted Cox
proportional hazards model. The algorithm iterates until the log-likelihood function converges by successive iterations.

This paper expands on the work of Anderson and Gill (1982)\cite{andersen1982cox} and You et al. (2018)\cite{you2018subtype} to further show
that our model has estimators with consistency and asymptotic normality, proving the validity of the Dual Cox model with large samples. A
series of simulation tests are carried out in this study to confirm the validity of the model fit further once the theoretical properties are
guaranteed. We first investigate the impact of various initial values selections on the model’s convergence outcomes. The results of the
experiments demonstrate that the method will converge to a local maximum if the initial values are chosen on the boundary, thus leading to
unguaranteed convergence. In contrast, the initial values chosen away from the boundary can produce better convergence outcomes. The impact of
various sample sizes and censoring rates on the efficacy of the model is then covered. The experimental findings demonstrate that our approach
has high classification accuracy and can guarantee unbiased parameter estimation. The stability is assured with 1000 repetitions of an
experiment with minor variance.

To demonstrate the utility of the Dual Cox model theory in oncology clinical trials, we select the PRIME and PACCE trial datasets. For the subgroup
classification problem, the Dual Cox model shows good ability to classify non-censored data accurately, but there is some
uncertainty for the classification of censored data. However, the certainty of classification gradually improves with increasing
progression-free survival. In estimating the objective response rate in the control group, the model can better reflect the actual situation.
In addition, the model can effectively demonstrate heterogeneity in the drug's efficacy in estimating regression coefficients and are
consistent with our knowledge of the homogeneity of other baseline covariates. Finally, we use the diagnostics to demonstrate that
the Dual Cox model outperforms the Cox proportional hazards model adapted to the overall population at various time points.

\subsection{Discussion}

Finite mixture models usually assume that each subgroup has a separate distribution to deal with heterogeneity among subgroups. Although the
Cox proportional hazards model has become the most popular regression model applicable to censored time, its non-parametric properties make
the construction of a Cox model with a finite mixture more challenging\cite{you2018subtype}. Therefore, Eng and Hanlon
(2014)\cite{eng_discrete_2014} proposed an algorithm to fit a finite mixture Cox model. Our proposed Dual Cox model serves as an extension of
the finite mixture Cox model, whose main concern is to classify patients into responders' and non-responders' groups, thus fixing the number
of subgroups to 2. In addition, a semi-supervised mixture Cox model is necessary to take into account some of the patients for whom a
medication response has been observed. The distributions selected in the finite mixture model can be those used in common survival analyses in
addition to the Cox proportional hazards model to characterize different subgroups of hazards, including Log-normal distribution, Weibull
distribution, Gamma distribution, etc.\cite{erisoglu2012pak}\cite{mclachlan_role_1994}. Similar to our work, a mixture model based on these
distributions can be extended to the semi-supervised case and compared with the Dual Cox model, which can help us better to assess the merits
of the Dual Cox model.

The setting of the number of subgroups of the finite mixture Cox model generally needs to be set in advance, and the number of subgroups and
the selection of initial values may lead to unstable convergence results of the EM algorithm, thus posing a challenge to the analysis.
Compared with this, the advantage of the Dual Cox model is that the number of fixed groups is two, which are the responses’ group and the
non-responders’ group, and the convergence result of its algorithm is more stable than that of the mixture Cox model with multiple groups.

The reasonableness of the $\boldsymbol{\beta}$ estimation has been shown in Section \ref{sec:dual3}, but the strict concavity of the
likelihood function is based on the assumption of regularity conditions. Therefore, after fitting using the EM algorithm, there is no
guarantee that the likelihood function of the Dual Cox model necessarily converges to a global maximum. In our experiments, with a 1:1 ratio
of labeled to unlabeled data, the log-likelihood function converges to higher values well enough to achieve high classification accuracy and
ensure consistency and asymptotic normality in the estimated coefficients, as long as we avoid the choice of initial values in the boundary.
To mitigate the effect of initial values, in addition to using the EM algorithm, Ueda and Nakano (1998)\cite{ueda1998deterministic} proposed
using the deterministic annealing EM algorithm, while Celeux, Chauveau, and Diebolt (1996)\cite{Celeux1996StochasticVO} proposed a stochastic
EM algorithm, all of these techniques seek to allow the EM algorithm to iterate from less-than-ideal initial values and converge to better
results. Consequently, we will consider how to appropriately modify the algorithm for the Dual Cox model in the future.

With the increase in data collection and baseline covariates, variable selection becomes an effective method for selecting risk factors. The
importance of variable selection is particularly highlighted when there are many baseline covariates or even when the number of variables is
larger than the sample size. The variables that are statistically significantly different for the three groups, the response group, the
non-response group, and the overall population, are also different, so a model that allows variable selection on different subgroups becomes
especially necessary. Tibshirani (1997)\cite{tibshirani1997lasso} proposed the use of LASSO on a Cox proportional hazards model with penalty
term, Fan and Li (2002)\cite{fan2002variable} used SCAD penalty term on Cox proportional hazards model, You et al. (2018)\cite{you2018subtype}
used adaptive LASSO penalty term on finite mixture Cox model. Thus, the Dual Cox model can be easily extended to include the use of variable
selection methods to identify risk factors. By selecting variables, a better understanding of risk factors in different groups can be
achieved, providing stronger evidence for clinical decision-making.

After determining the optimal subgroup, an important step is to evaluate its actual effects. Although statistical inference is usually used to
analyze the selected subgroups, we must note that the subgroups may not be completely independent of the data selection. This may lead to an
overly optimistic evaluation of the selected subgroups, which may influence our judgment of the treatment effect \cite{guo_inference_2021}. In
addition, the lack of ability to detect treatment effects in smaller patient populations may produce false-positive results, which is
particularly important because most clinical trials are designed to initially test treatment effects on the entire study population
\cite{bonetti_patterns_2004}. Thus, the Dual Cox model algorithm’s fitting results concerning subgroup classification may be biased. To reduce
this bias, we need to adopt an approach to minimize the impact of false-positive results on the evaluation of treatment effects and provide
more accurate individualized treatment plans to achieve the best results averaged over the entire patient population \cite{zhang_robust_2012}.
This will also be an essential research direction in our future work.

\section*{Acknowledgments}
This was supported in part by the Fundamental Research Funds for the Central Universities, Sun Yat-sen University (Grant No. 20lgpy145 \&
2021qntd21) and the Science and Technology Program of Guangzhou Project, Fundamental and Applied Research Project (202102080175).

\bibliographystyle{unsrt}
\bibliography{references}

\begin{thebibliography}{10}

\bibitem{wyld_evolution_2015}
Lynda Wyld, Riccardo~A. Audisio, and Graeme~J. Poston.
\newblock The evolution of cancer surgery and future perspectives.
\newblock {\em Nature Reviews Clinical Oncology}, 12(2):115--124, 2015.

\bibitem{warren_inhalation_1846}
John~C. Warren.
\newblock Inhalation of {Ethereal} {Vapor} for the {Prevention} of {Pain} in
  {Surgical} {Operations}.
\newblock {\em The Boston Medical and Surgical Journal}, 35(19):375--379, 1846.

\bibitem{lister1867antiseptic}
Joseph Lister.
\newblock On the antiseptic principle in the practice of surgery.
\newblock {\em British medical journal}, 2(351):246, 1867.

\bibitem{devita_two_2012}
Vincent~T. DeVita and Steven~A. Rosenberg.
\newblock Two {Hundred} {Years} of {Cancer} {Research}.
\newblock {\em New England Journal of Medicine}, 366(23):2207--2214, 2012.

\bibitem{papac_origins_2001}
R.~J. Papac.
\newblock Origins of cancer therapy.
\newblock {\em The Yale Journal of Biology and Medicine}, 74(6):391--398, 2001.

\bibitem{gianfaldoni_overview_2017}
Serena Gianfaldoni, Roberto Gianfaldoni, Uwe Wollina, Jacopo Lotti, Georgi
  Tchernev, and Torello Lotti.
\newblock An overview on radiotherapy: from its history to its current
  applications in dermatology.
\newblock {\em Open access Macedonian journal of medical sciences}, 5(4):521,
  2017.

\bibitem{devita2008history}
Vincent~T DeVita~Jr and Edward Chu.
\newblock A history of cancer chemotherapy.
\newblock {\em Cancer research}, 68(21):8643--8653, 2008.

\bibitem{amjad_cancer_2022}
Muhammad~T. Amjad, Anusha Chidharla, and Anup Kasi.
\newblock Cancer {Chemotherapy}.
\newblock In {\em {StatPearls}}. StatPearls Publishing, Treasure Island (FL),
  2022.

\bibitem{kohler_continuous_1975}
G.~K\"{o}hler and C.~Milstein.
\newblock Continuous cultures of fused cells secreting antibody of predefined
  specificity.
\newblock {\em Nature}, 256(5517):495--497, 1975.

\bibitem{padma2015overview}
Viswanadha~Vijaya Padma.
\newblock An overview of targeted cancer therapy.
\newblock {\em BioMedicine}, 5:1--6, 2015.

\bibitem{lee_molecular_2018}
Yeuan~Ting Lee, Yi~Jer Tan, and Chern~Ein Oon.
\newblock Molecular targeted therapy: {Treating} cancer with specificity.
\newblock {\em European Journal of Pharmacology}, 834:188--196, 2018.

\bibitem{wu2006targeted}
Han-Chung Wu, De-Kuan Chang, and Chia-Ting Huang.
\newblock Targeted therapy for cancer.
\newblock {\em J Cancer Mol}, 2(2):57--66, 2006.

\bibitem{davis_overview_2000}
Ian~D Davis.
\newblock An overview of cancer immunotherapy.
\newblock {\em Immunology and Cell Biology}, 78(3):179--195, 2000.

\bibitem{srivastava2015engineering}
Shivani Srivastava and Stanley~R. Riddell.
\newblock Engineering {CAR}-{T} cells: {Design} concepts.
\newblock {\em Trends in Immunology}, 36(8):494--502, 2015.

\bibitem{melenhorst_decade-long_2022}
J.~Joseph Melenhorst, Gregory~M. Chen, Meng Wang, David~L. Porter, Changya
  Chen, McKensie~A. Collins, Peng Gao, Shovik Bandyopadhyay, Hongxing Sun,
  Ziran Zhao, Stefan Lundh, Iulian Pruteanu-Malinici, Christopher~L. Nobles,
  Sayantan Maji, Noelle~V. Frey, Saar~I. Gill, Alison~W. Loren, Lifeng Tian,
  Irina Kulikovskaya, Minnal Gupta, David~E. Ambrose, Megan~M. Davis, Joseph~A.
  Fraietta, Jennifer~L. Brogdon, Regina~M. Young, Anne Chew, Bruce~L. Levine,
  Donald~L. Siegel, C\'{e}cile Alanio, E.~John Wherry, Frederic~D. Bushman,
  Simon~F. Lacey, Kai Tan, and Carl~H. June.
\newblock Decade-long leukaemia remissions with persistence of {CD4}+ {CAR} {T}
  cells.
\newblock {\em Nature}, 602(7897):503--509, 2022.

\bibitem{chen_statistical_2013}
Tai-Tsang Chen.
\newblock Statistical issues and challenges in immuno-oncology.
\newblock {\em Journal for ImmunoTherapy of Cancer}, 1(1):18, 2013.

\bibitem{raval_tumor_2014}
Raju~R Raval, Andrew~B Sharabi, Amanda~J Walker, Charles~G Drake, and Padmanee
  Sharma.
\newblock Tumor immunology and cancer immunotherapy: summary of the 2013 {SITC}
  primer.
\newblock {\em Journal for ImmunoTherapy of Cancer}, 2(1):14, 2014.

\bibitem{mahoney_next_2015}
Kathleen~M. Mahoney, Gordon~J. Freeman, and David~F. McDermott.
\newblock The {Next} {Immune}-{Checkpoint} {Inhibitors}: {PD}-1/{PD}-{L1}
  {Blockade} in {Melanoma}.
\newblock {\em Clinical Therapeutics}, 37(4):764--782, 2015.

\bibitem{waldman_guide_2020}
Alex~D. Waldman, Jill~M. Fritz, and Michael~J. Lenardo.
\newblock A guide to cancer immunotherapy: from {T} cell basic science to
  clinical practice.
\newblock {\em Nature Reviews Immunology}, 20(11):651--668, 2020.

\bibitem{falzone_evolution_2018}
Luca Falzone, Salvatore Salomone, and Massimo Libra.
\newblock Evolution of {Cancer} {Pharmacological} {Treatments} at the {Turn} of
  the {Third} {Millennium}.
\newblock {\em Frontiers in Pharmacology}, 9:1300, 2018.

\bibitem{renfro_statistical_2017}
L.A. Renfro and D.J. Sargent.
\newblock Statistical controversies in clinical research: basket trials,
  umbrella trials, and other master protocols: a review and examples.
\newblock {\em Annals of Oncology}, 28(1):34--43, 2017.

\bibitem{catenacci_next-generation_2015}
Daniel~V.T. Catenacci.
\newblock Next-generation clinical trials: {Novel} strategies to address the
  challenge of tumor molecular heterogeneity.
\newblock {\em Molecular Oncology}, 9(5):967--996, 2015.

\bibitem{kaplan_nonparametric_1958}
E.~L. Kaplan and Paul Meier.
\newblock Nonparametric {Estimation} from {Incomplete} {Observations}.
\newblock {\em Journal of the American Statistical Association},
  53(282):457--481, 1958.

\bibitem{bland2004logrank}
J~Martin Bland and Douglas~G Altman.
\newblock The logrank test.
\newblock {\em Bmj}, 328(7447):1073, 2004.

\bibitem{cox_regression_1972}
D.~R. Cox.
\newblock Regression {Models} and {Life}-{Tables}.
\newblock {\em Journal of the Royal Statistical Society: Series B
  (Methodological)}, 34(2):187--202, 1972.

\bibitem{liao_flexible_2019}
Jason~J.Z. Liao and Guanghan~Frank Liu.
\newblock A flexible parametric survival model for fitting time to event data
  in clinical trials.
\newblock {\em Pharmaceutical Statistics}, 18(5):555--567, 2019.

\bibitem{kalbfleisch_estimation_1981}
John~D. Kalbfleisch and Ross~L. Prentice.
\newblock Estimation of the average hazard ratio.
\newblock {\em Biometrika}, 68(1):105--112, 1981.

\bibitem{boag1949maximum}
John~W Boag.
\newblock Maximum likelihood estimates of the proportion of patients cured by
  cancer therapy.
\newblock {\em Journal of the Royal Statistical Society. Series B
  (Methodological)}, 11(1):15--53, 1949.

\bibitem{tsodikov1996stochastic}
Alexander~D Tsodikov, Andrei~Yu Yakovlev, and B~Asselain.
\newblock {\em Stochastic models of tumor latency and their biostatistical
  applications}, volume~1.
\newblock World Scientific, 1996.

\bibitem{mclachlan_role_1994}
Gj~McLachlan and Dc~McGiffin.
\newblock On the role of finite mixture models in survival analysis.
\newblock {\em Statistical Methods in Medical Research}, 3(3):211--226, 1994.

\bibitem{mclachlan_finite_2000}
Geoffrey McLachlan and David Peel.
\newblock {\em Finite {Mixture} {Models}: {McLachlan}/{Finite} {Mixture}
  {Models}}.
\newblock Wiley {Series} in {Probability} and {Statistics}. John Wiley \& Sons,
  Inc., Hoboken, NJ, USA, 2000.

\bibitem{erisoglu2012pak}
Ulku Erisoglu, Murat Erisoglu, and Hamza Erol.
\newblock Pak. j. statist. 2012 vol. 28 (1), 115-130 mixture model approach to
  the analysis of heterogeneous survival data.
\newblock {\em Pak. J. Statist}, 28(1):115--130, 2012.

\bibitem{liu_analysis_2020}
Guanghan~Frank Liu and Jason J.~Z. Liao.
\newblock Analysis of time-to-event data using a flexible mixture model under a
  constraint of proportional hazards.
\newblock {\em Journal of Biopharmaceutical Statistics}, 30(5):783--796, 2020.

\bibitem{jia_inferring_2021}
Beilin Jia, Donglin Zeng, Jason J.~Z. Liao, Guanghan~F. Liu, Xianming Tan,
  Guoqing Diao, and Joseph~G. Ibrahim.
\newblock Inferring latent heterogeneity using many feature variables
  supervised by survival outcome.
\newblock {\em Statistics in Medicine}, 40(13):3181--3195, 2021.

\bibitem{wang_exponential_2014}
Chi Wang.
\newblock An {Exponential} {Tilt} {Mixture} {Model} for {Time}-to-{Event}
  {Data} to {Evaluate} {Treatment} {Effect} {Heterogeneity} in {Randomized}
  {Clinical} {Trials}.
\newblock {\em Biometrics \& Biostatistics International Journal}, 1(2), 2014.

\bibitem{qin_empirical_1999}
Jing Qin.
\newblock Empirical {Likelihood} {Ratio} {Based} {Confidence} {Intervals} for
  {Mixture} {Proportions}.
\newblock {\em The Annals of Statistics}, 27(4):1368--1384, 1999.
\newblock Publisher: Institute of Mathematical Statistics.

\bibitem{zou_empirical_2002}
F.~Zou.
\newblock On empirical likelihood for a semiparametric mixture model.
\newblock {\em Biometrika}, 89(1):61--75, 2002.

\bibitem{rosen_mixtures_1999}
Ori Rosen and Martin Tanner.
\newblock Mixtures of proportional hazards regression models.
\newblock {\em Statistics in Medicine}, 18(9):1119--1131, 1999.

\bibitem{wu_subgroup_2016}
Ruo-fan Wu, Ming Zheng, and Wen Yu.
\newblock Subgroup {Analysis} with {Time}-to-{Event} {Data} {Under} a
  {Logistic}-{Cox} {Mixture} {Model}: {Subgroup} analysis with censored data.
\newblock {\em Scandinavian Journal of Statistics}, 43(3):863--878, 2016.

\bibitem{eng_discrete_2014}
Kevin~H. Eng and Bret~M. Hanlon.
\newblock Discrete mixture modeling to address genetic heterogeneity in
  time-to-event regression.
\newblock {\em Bioinformatics}, 30(12):1690--1697, February 2014.
\newblock \_eprint:
  https://academic.oup.com/bioinformatics/article-pdf/30/12/1690/48926918/bioinformatics\_30\_12\_1690.pdf.

\bibitem{dempster_maximum_1977}
A.~P. Dempster, N.~M. Laird, and D.~B. Rubin.
\newblock Maximum {Likelihood} from {Incomplete} {Data} {Via} the \textit{{EM}}
  {Algorithm}.
\newblock {\em Journal of the Royal Statistical Society: Series B
  (Methodological)}, 39(1):1--22, 1977.

\bibitem{you2018subtype}
Na~You, Shun He, Xueqin Wang, Junxian Zhu, and Heping Zhang.
\newblock Subtype classification and heterogeneous prognosis model construction
  in precision medicine.
\newblock {\em Biometrics}, 74(3):814--822, 2018.

\bibitem{andersen1982cox}
Per~Kragh Andersen and Richard~D Gill.
\newblock Cox's regression model for counting processes: a large sample study.
\newblock {\em The annals of statistics}, pages 1100--1120, 1982.

\bibitem{zhou_understanding_2001}
Mai Zhou.
\newblock Understanding the {Cox} {Regression} {Models} {With} {Time}-{Change}
  {Covariates}.
\newblock {\em The American Statistician}, 55(2):153--155, 2001.

\bibitem{therneau2000cox}
Terry~M Therneau, Patricia~M Grambsch, Terry~M Therneau, and Patricia~M
  Grambsch.
\newblock {\em The cox model}.
\newblock Springer, 2000.

\bibitem{therneau2015package}
Terry~M Therneau and Thomas Lumley.
\newblock Package ‘survival’.
\newblock {\em R Top Doc}, 128(10):28--33, 2015.

\bibitem{wong2017piecewise}
George~YC Wong, Michael~P Osborne, Qinggang Diao, and Qiqing Yu.
\newblock Piecewise cox models with right-censored data.
\newblock {\em Communications in Statistics-Simulation and Computation},
  46(10):7894--7908, 2017.

\bibitem{berkson1952survival}
Joseph Berkson and Robert~P Gage.
\newblock Survival curve for cancer patients following treatment.
\newblock {\em Journal of the American Statistical Association},
  47(259):501--515, 1952.

\bibitem{lambert2007modeling}
Paul~C Lambert.
\newblock Modeling of the cure fraction in survival studies.
\newblock {\em The Stata Journal}, 7(3):351--375, 2007.

\bibitem{felizzi_mixture_2021}
Federico Felizzi, Noman Paracha, Johannes P\"{o}hlmann, and Joshua Ray.
\newblock Mixture {Cure} {Models} in {Oncology}: {A} {Tutorial} and {Practical}
  {Guidance}.
\newblock {\em PharmacoEconomics - Open}, 5(2):143--155, 2021.

\bibitem{amico2018cure}
Mailis Amico and Ingrid Van~Keilegom.
\newblock Cure models in survival analysis.
\newblock {\em Annual Review of Statistics and Its Application}, 5:311--342,
  2018.

\bibitem{jia_cure_2013}
Xiaoyu Jia, Camelia~S. Sima, Murray~F. Brennan, and Katherine~S. Panageas.
\newblock Cure models for the analysis of time-to-event data in cancer studies:
  {Cure} {Models} in {Cancer} {Studies}.
\newblock {\em Journal of Surgical Oncology}, 108(6):342--347, 2013.

\bibitem{sy_estimation_2000}
Judy~P. Sy and Jeremy M.~G. Taylor.
\newblock Estimation in a {Cox} {Proportional} {Hazards} {Cure} {Model}.
\newblock {\em Biometrics}, 56(1):227--236, 2000.

\bibitem{wu_convergence_1983}
C.~F.~Jeff Wu.
\newblock On the {Convergence} {Properties} of the {EM} {Algorithm}.
\newblock {\em The Annals of Statistics}, 11(1), 1983.

\bibitem{owen2001empirical}
Art~B Owen.
\newblock {\em Empirical likelihood}.
\newblock CRC press, 2001.

\bibitem{shen_inference_2015}
Juan Shen and Xuming He.
\newblock Inference for {Subgroup} {Analysis} {With} a {Structured}
  {Logistic}-{Normal} {Mixture} {Model}.
\newblock {\em Journal of the American Statistical Association},
  110(509):303--312, 2015.

\bibitem{breslow1972contribution}
Norman~E Breslow.
\newblock Contribution to discussion of paper by dr cox.
\newblock {\em Journal of the Royal Statistical Society, Series B},
  34:216--217, 1972.

\bibitem{world1979handbook}
World~Health Organization et~al.
\newblock {\em WHO handbook for reporting results of cancer treatment}.
\newblock World Health Organization, 1979.

\bibitem{therasse_new_2000}
Patrick Therasse, Susan~G. Arbuck, Elizabeth~A. Eisenhauer, Jantien Wanders,
  Richard~S. Kaplan, Larry Rubinstein, Jaap Verweij, Martine Van~Glabbeke,
  Allan~T. van Oosterom, Michaele~C. Christian, and Steve~G. Gwyther.
\newblock New {Guidelines} to {Evaluate} the {Response} to {Treatment} in
  {Solid} {Tumors}.
\newblock {\em JNCI: Journal of the National Cancer Institute}, 92(3):205--216,
  2000.

\bibitem{eisenhauer2009new}
Elizabeth~A Eisenhauer, Patrick Therasse, Jan Bogaerts, Lawrence~H Schwartz,
  Danielle Sargent, Robert Ford, Janet Dancey, S~Arbuck, Steve Gwyther,
  Margaret Mooney, et~al.
\newblock New response evaluation criteria in solid tumours: revised recist
  guideline (version 1.1).
\newblock {\em European journal of cancer}, 45(2):228--247, 2009.

\bibitem{lencioni_modified_2010}
Riccardo Lencioni and Josep Llovet.
\newblock Modified {RECIST} ({mRECIST}) {Assessment} for {Hepatocellular}
  {Carcinoma}.
\newblock {\em Seminars in Liver Disease}, 30(01):052--060, 2010.

\bibitem{choi2007correlation}
Haesun Choi, Chuslip Charnsangavej, Silvana~C Faria, Homer~A Macapinlac,
  Michael~A Burgess, Shreyaskumar~R Patel, Lei~L Chen, Donald~A Podoloff, and
  Robert~S Benjamin.
\newblock Correlation of computed tomography and positron emission tomography
  in patients with metastatic gastrointestinal stromal tumor treated at a
  single institution with imatinib mesylate: proposal of new computed
  tomography response criteria.
\newblock {\em Journal of clinical Oncology}, 25(13):1753--1759, 2007.

\bibitem{smith2010assessing}
Andrew~Dennis Smith, Michael~L Lieber, and Shetal~N Shah.
\newblock Assessing tumor response and detecting recurrence in metastatic renal
  cell carcinoma on targeted therapy: importance of size and attenuation on
  contrast-enhanced ct.
\newblock {\em American Journal of Roentgenology}, 194(1):157--165, 2010.

\bibitem{wolchok2009guidelines}
Jedd~D Wolchok, Axel Hoos, Steven O'Day, Jeffrey~S Weber, Omid Hamid, Celeste
  Lebb{\'e}, Michele Maio, Michael Binder, Oliver Bohnsack, Geoffrey Nichol,
  et~al.
\newblock Guidelines for the evaluation of immune therapy activity in solid
  tumors: immune-related response criteria.
\newblock {\em Clinical cancer research}, 15(23):7412--7420, 2009.

\bibitem{tanner1993tools}
Martin~A Tanner.
\newblock {\em Tools for statistical inference}, volume~3.
\newblock Springer, 1993.

\bibitem{dillon2010asymptotic}
Joshua~V Dillon, Krishnakumar Balasubramanian, and Guy Lebanon.
\newblock Asymptotic analysis of generative semi-supervised learning.
\newblock {\em arXiv preprint arXiv:1003.0024}, 2010.

\bibitem{cozman2003semi}
Fabio~Gagliardi Cozman, Ira Cohen, Marcelo~Cesar Cirelo, et~al.
\newblock Semi-supervised learning of mixture models.
\newblock In {\em ICML}, volume~4, page~24, 2003.

\bibitem{douillard_randomized_2010}
Jean-Yves Douillard, Salvatore Siena, James Cassidy, Josep Tabernero, Ronald
  Burkes, Mario Barugel, Yves Humblet, Gy{\"o}rgy Bodoky, David Cunningham,
  Jacek Jassem, et~al.
\newblock Randomized, phase iii trial of panitumumab with infusional
  fluorouracil, leucovorin, and oxaliplatin (folfox4) versus folfox4 alone as
  first-line treatment in patients with previously untreated metastatic
  colorectal cancer: the prime study.
\newblock {\em Journal of clinical oncology}, 28(31):4697--4705, 2010.

\bibitem{douillard_panitumumabfolfox4_2013}
Jean-Yves Douillard, Kelly~S Oliner, Salvatore Siena, Josep Tabernero, Ronald
  Burkes, Mario Barugel, Yves Humblet, Gyorgy Bodoky, David Cunningham, Jacek
  Jassem, et~al.
\newblock Panitumumab--folfox4 treatment and ras mutations in colorectal
  cancer.
\newblock {\em New England Journal of Medicine}, 369(11):1023--1034, 2013.

\bibitem{douillard2014final}
Jean-Yves Douillard, S~Siena, J~Cassidy, J~Tabernero, R~Burkes, M~Barugel,
  Y~Humblet, G~Bodoky, D~Cunningham, J~Jassem, et~al.
\newblock Final results from prime: randomized phase iii study of panitumumab
  with folfox4 for first-line treatment of metastatic colorectal cancer.
\newblock {\em Annals of Oncology}, 25(7):1346--1355, 2014.

\bibitem{kamarudin_time-dependent_2017}
Adina~Najwa Kamarudin, Trevor Cox, and Ruwanthi Kolamunnage-Dona.
\newblock Time-dependent {ROC} curve analysis in medical research: current
  methods and applications.
\newblock {\em BMC Medical Research Methodology}, 17(1):53, 2017.

\bibitem{Blanche2012TimedependentAW}
Paul Blanche, Aur{\'e}lien Latouche, and Vivian Viallon.
\newblock Time-dependent auc with right-censored data: a survey study.
\newblock {\em arXiv: Methodology}, 2012.

\bibitem{heagerty_survival_2005}
Patrick~J. Heagerty and Yingye Zheng.
\newblock Survival {Model} {Predictive} {Accuracy} and {ROC} {Curves}.
\newblock {\em Biometrics}, 61(1):92--105, 2005.

\bibitem{xu_proportional_2000}
Ronghui Xu and John O'Quigley.
\newblock Proportional {Hazards} {Estimate} of the {Conditional} {Survival}
  {Function}.
\newblock {\em Journal of the Royal Statistical Society Series B: Statistical
  Methodology}, 62(4):667--680, 2000.

\bibitem{hecht2009randomized}
J~Randolph Hecht, Edith Mitchell, Tarek Chidiac, Carroll Scroggin, Christopher
  Hagenstad, David Spigel, John Marshall, Allen Cohn, David McCollum, Philip
  Stella, et~al.
\newblock A randomized phase iiib trial of chemotherapy, bevacizumab, and
  panitumumab compared with chemotherapy and bevacizumab alone for metastatic
  colorectal cancer.
\newblock {\em Journal of Clinical Oncology}, 27(5):672--680, 2009.

\bibitem{ueda1998deterministic}
Naonori Ueda and Ryohei Nakano.
\newblock Deterministic annealing em algorithm.
\newblock {\em Neural networks}, 11(2):271--282, 1998.

\bibitem{Celeux1996StochasticVO}
Gilles Celeux, Didier Chauveau, and Jean Diebolt.
\newblock Stochastic versions of the em algorithm: an experimental study in the
  mixture case.
\newblock {\em Journal of Statistical Computation and Simulation}, 55:287--314,
  1996.

\bibitem{tibshirani1997lasso}
Robert Tibshirani.
\newblock The lasso method for variable selection in the cox model.
\newblock {\em Statistics in medicine}, 16(4):385--395, 1997.

\bibitem{fan2002variable}
Jianqing Fan and Runze Li.
\newblock Variable selection for cox's proportional hazards model and frailty
  model.
\newblock {\em The Annals of Statistics}, 30(1):74--99, 2002.

\bibitem{guo_inference_2021}
Xinzhou Guo and Xuming He.
\newblock Inference on {Selected} {Subgroups} in {Clinical} {Trials}.
\newblock {\em Journal of the American Statistical Association},
  116(535):1498--1506, 2021.

\bibitem{bonetti_patterns_2004}
M.~Bonetti.
\newblock Patterns of treatment effects in subsets of patients in clinical
  trials.
\newblock {\em Biostatistics}, 5(3):465--481, 2004.

\bibitem{zhang_robust_2012}
Baqun Zhang, Anastasios~A. Tsiatis, Eric~B. Laber, and Marie Davidian.
\newblock A {Robust} {Method} for {Estimating} {Optimal} {Treatment} {Regimes}.
\newblock {\em Biometrics}, 68(4):1010--1018, 2012.

\end{thebibliography}

\newpage

\section*{Appendix A: Selected R Code}

\subsection*{SPIRLS-EM algorithm for the Dual Cox model}
\begin{lstlisting}[language=R]
dualcox <-
  function(Time,
           #observed time
           Delta,
           #survival status
           X,
           #a data matrix of explanatory variables, where each column
           #corresponds  to one predictor and each row indicates one sample.
           K = 2,
           #Dual Cox model has two components of  the finite-mixture Cox model
           iter.max = 1000,
           #maximum number of EM iterations
           u.init = NULL,
           #initial value of U
           #a data matrix that gives the probability that each unlabeled
           #sample belongs to each component. And the probability
           #that each labeled sample belongs to each component is 1.
           abstol = 1e-5,
           #	absolute tolerance of EM algorithm
           reltol = 1e-7,
           #relative tolerance of EM algorithm
           seed = 1,
           #random seed for initialing U if it is not given
           pi.init=NULL,
           #The initial value of pi
           unfixed=NULL
           #row index of the unlabeled data
  ){
    set.seed(seed)
    n = length(Time)
    X = as.data.frame(X)
    Xm = as.matrix(X)
    Time = matrix(Time, ncol = 1)
    Delta = matrix(Delta, ncol = 1)
    p = dim(Xm)[2]
    surv = cbind(time = Time, status = Delta)
    colnames(surv) = c("time", "status")
    ####check u.init and initial U matrix
    if(is.matrix(u.init)){
      if(all(dim(u.init) == c(n,K)))
        U = u.init
      else
        stop("parameter u.init wrong!")
    }else if(is.vector(u.init)){
      if(length(u.init) != n | any(!(u.init %in% (1:K))))
        stop("parameter u.init wrong!")
      else{
        U = matrix(0.001, K, K)
        diag(U) = 1 - (K-1)*0.001
        U = U[u.init, ]
        if(K==1)
          U = matrix(U, ncol=1)
      }
    }else if(is.null(u.init)){
      u.init = sample(1:K, size = n, replace = TRUE)
      U = matrix(0.001, K, K)
      diag(U) = 1 - (K-1)*0.001
      U = U[u.init, ]
      if(K==1)
        U = matrix(U,ncol=1)
    }else{
      stop("parameter u.init wrong!")
    }
    ####initial pi
    pi =pi.init
    ####EM iteration
    ploglik = NULL
    mixloglik = NULL
    lastploglik = -Inf
    lastmixloglik = -Inf
    convergence = FALSE
    iter = 0
    while(!convergence & iter<iter.max){

      ##Mstep
fit = lapply(1:K, function(k){
  idx = U[,k] >= 1e-6
  weights = U[idx, k]
  result = survival::coxph(survival::Surv(surv[idx,1],surv[idx,2])~.,
  data =data.frame(Xm[idx,]),weights = weights)
          return(as.numeric(result$coefficients))
        })

h0 = NULL
H0 = NULL
EE = NULL
PLL = NULL
allUEXB = NULL

for (k in 1:K) {
  Uk = U[,k]
  XB = Xm%*%fit[[k]]
  EXB = exp(XB)
  EE = cbind(EE, EXB)

  UEXB = Uk*EXB
  allUEXB = cbind(allUEXB, UEXB)
  aa = sapply(Time, function(x)sum(UEXB[Time>=x]))
  bb = 1/aa
  bb[bb == Inf] = 0
  h00 = Uk*bb*Delta
  H00 = sapply(Time, function(x)sum(h00[Time<=x]))
  h0 = cbind(h0, h00)
  H0 = cbind(H0, H00)
  PLL = c(PLL, sum((Uk*(XB-log(aa)))[Delta==1&aa>0]))

}

CC = (h0 * EE)^(Delta %*% t(rep(1, K)))


fk=CC * exp(-H0 * EE)
CC = (CC * exp(-H0 * EE)) %*% diag(as.vector(pi), K)
#The complete log-likelihood
nowploglik = sum(PLL) + sum((U%*%diag(log(pi), K))[U!= 0])
#log-likehood based on observed data
nowmixloglik = sum(log(apply(CC, 1, sum)))

      ploglik = c(ploglik, nowploglik)
      mixloglik = c(mixloglik, nowmixloglik)

      if(!convergence){
        U[unfixed,] = do.call(rbind, lapply(unfixed,function(i){
          x = CC[i,]
          return(x / sum(x))
        }))
            pi =apply(U, 2, mean)
      }
       class = apply(U, 1, function(x){ which.max(x)[1]})



      if(is.finite(nowploglik) & is.finite(lastploglik) &
         (abs(nowploglik-lastploglik) < abstol |
          abs(nowploglik-lastploglik) <
          abs(reltol*(nowploglik + reltol))))
        convergence = TRUE
      lastploglik = nowploglik
      lastmixloglik = nowmixloglik
      iter = iter + 1
    }
    if(iter==iter.max)
      warning("EM iterations reach iter.max!")
    class = apply(U, 1, function(x){ which.max(x)[1]})

    finalmixloglik = mixloglik[length(mixloglik)]

    ret = list(U = U,
               fit = fit,
               pi = pi,
               class = class,
               ploglik = ploglik,
               mixloglik = mixloglik,
               iter = iter,
               convergence = iter < iter.max)
    return(ret)
  }
\end{lstlisting}

\subsection*{1000 sample size with 20\texorpdfstring{$\%$}{Lg} censoring rate in the simulation }

\begin{lstlisting}[language=R]

mixing_proportions <- c(0.3, 0.7)
iter1=NULL
beta1=c(-1,0.5,3,0.8)
beta2=c(2 ,-0.1,-3,0.2)
accuracy_value=NULL
censoring_value=NULL
pi.est=NULL
beta1.1=NULL
beta2.1=NULL
beta3.1=NULL
beta4.1=NULL
beta1.2=NULL
beta2.2=NULL
beta3.2=NULL
beta4.2=NULL

for (j in 1:1000) {
  set.seed(j)
  covariates <- matrix(c(rbinom(2000,1,0.5),rnorm(2000)), ncol = 4)
  fixed=which(covariates[,1]==1)
  unfixed=which(covariates[,1]==0)
  U=runif(1000,min = 0,max = 1)
  ##  the survival time of sample i in the k-th sub-population
  vk=1
  lambda1=1/35
  #The first 300 samples are from the response group

  #and the last 700 samples are from non-response
  U=runif(1000,min = 0,max = 1)
  survival1=(-log(U[1:300])/(lambda1)*exp(-covariates[1:300,]%*%beta1))^(1/vk)
  survival2=(
  -log(U[301:1000])/(lambda1)*exp(-covariates[301:1000,]%*%beta2))^(1/vk)
  survival_times=c(survival1,survival2)
  #The censoring time
  c=runif(1000,min = 0,max = exp(6.5))
  status=ifelse(survival_times<c,1,0)
  censor=1-mean(status)
  censoring_value=c(censoring_value,censor)
  u.init   <- cbind(c(rep(1,300),rep(0,700)),c(rep(0,300),rep(1,700)))

  group=c(rep(1,300),rep(2,700))

  group2=c(rep(1,300),rep(0,700))
  pi.init=apply(u.init[fixed,],2,mean)

  for (i in unfixed) {
    u.init[i,1]=pi.init[1]
    u.init[i,2]=1-pi.init[1]

  }
  semi=dualcox(survival_times,status,X=covariates,abstol = 0.00001,
  reltol = 1e-05,K = 2,iter.max = 1000,seed = 123,u.init = u.init,
  unfixed = unfixed,pi.init=pi.init)

  beta1.1=c(beta1.1,semi$fit[[1]][1])
  beta2.1=c(beta2.1,semi$fit[[1]][2])
  beta3.1=c(beta3.1,semi$fit[[1]][3])
  beta4.1=c(beta4.1,semi$fit[[1]][4])
  iter1=c(iter1,semi$iter)
  beta1.2=c(beta1.2,semi$fit[[2]][1])
  beta2.2=c(beta2.2,semi$fit[[2]][2])
  beta3.2=c(beta3.2,semi$fit[[2]][3])
  beta4.2=c(beta4.2,semi$fit[[2]][4])
  pi.est=c(pi.est,semi$pi[1])
  accuracy=mean(ifelse((semi$class[unfixed]-group[unfixed])==0,1,0))
  accuracy_value=c(accuracy_value,accuracy)
}

mean=c(mean(beta1.1),mean(beta2.1),mean(beta3.1),
mean(beta4.1),mean(beta1.2),mean(beta2.2),
mean(beta3.2),mean(beta4.2),mean(pi.est),mean(accuracy_value))
sd=c(sd(beta1.1),sd(beta2.1),sd(beta3.1),sd(beta4.1),sd(beta1.2),sd(beta2.2),
sd(beta3.2),sd(beta4.2),sd(pi.est),sd(accuracy_value)
)
bias=c((mean(beta1.1)-beta1[1]),(mean(beta2.1)-beta1[2]),
(mean(beta3.1)-beta1[3]),(mean(beta4.1)-beta1[4]),
(mean(beta1.2)-beta2[1]),(mean(beta2.2)-beta2[2]),
(mean(beta3.2)-beta2[3]),(mean(beta4.2)-beta2[4]),
(mean(pi.est)-mixing_proportions[1]),0)
relative.bias=c(  (mean(beta1.1)-beta1[1])/beta1[1],
                  (mean(beta2.1)-beta1[2])/beta1[2],
                  (mean(beta3.1)-beta1[3])/beta1[3],
                  (mean(beta4.1)-beta1[4])/beta1[4],
                  (mean(beta1.2)-beta2[1])/beta2[1],
                  (mean(beta2.2)-beta2[2])/beta2[2],
                  (mean(beta3.2)-beta2[3])/beta2[3],
                  (mean(beta4.2)-beta2[4])/beta2[4],0,0)
sample.1000=round(data.frame(MEAN=mean,SD=sd,BIAS=bias,
RELATIVE.BIAS=relative.bias),digits = 3)
sample.1000
mean(censoring_value)
range(censoring_value)
mean(iter1)
range(iter1)
\end{lstlisting}

\end{document}